\declaretheorem{theorem}
\declaretheorem[sibling=theorem]{definition}
\declaretheorem[sibling=theorem]{lemma}
\declaretheorem[sibling=theorem]{example}
\declaretheorem[sibling=theorem]{conjecture}
\declaretheorem[sibling=theorem]{remark}
\newcommand{\masterA}{{\cal A}_{mas}}
\newcommand{\nestedA}{\mathbb{A}}
\newcommand{\slaveA}{{\mathfrak{B}}}
\newcommand{\nonnestedA}{{\cal A}}
\newcommand{\selectA}{{\cal S}}
\newcommand{\silent}[1]{\mathsf{sil}({#1})}
\newcommand{\cost}{{C}}
\newcommand{\masterRun}{\Pi}
\newcommand{\slaveRun}{\pi}
\newcommand{\lang}{{\cal L}}
\newcommand{\valueL}[1]{{\cal L}_{{#1}}}
\newcommand{\abs}{\mathop{\mathsf{Abs}}}
\newcommand{\lpair}[2]{\langle {#1}, {#2} \rangle}
\newcommand{\triple}[3]{\langle {#1}, {#2}, {#3} \rangle}
\newcommand{\quadruple}[4]{\langle {#1}, {#2}, {#3}, {#4} \rangle}
\newcommand{\PTIME}{\textsc{PTime}{}}
\newcommand{\EXPSPACEshort}{\textsc{ExpSp.}{}}
\newcommand{\PSPACE}{\textsc{PSpace}{}}
\newcommand{\EXPSPACE}{\textsc{ExpSpace}{}}
\newcommand{\NLOGSPACE}{\textsc{NLogSpace}{}}
\newcommand{\PSPACEshort}{\textsc{PSp.}{}}
\newcommand{\N}{\mathbb{N}}
\newcommand{\Q}{\mathbb{Q}}
\newcommand{\hide}[1]{}
\definecolor{darkgreen}{RGB}{00,140,00}
\definecolor{darkred}{RGB}{140,0,00}
\definecolor{darkblue}{RGB}{0,0,140}
\newcommand{\undecidable}{\textcolor{darkred}{Undec.}}
\newcommand{\conjectureOne}{\textcolor{darkgreen}{Open \eqref{conj1}}}
\newcommand{\buchi}{B\"{u}chi}
\newcommand{\fsum}{\textsc{Sum}}
\newcommand{\fBsum}[1]{\textsc{Sum}^{#1}}
\newcommand{\fmax}{\textsc{Max}}
\newcommand{\fmin}{\textsc{Min}}
\newcommand{\flimavg}{\textsc{LimAvg}}
\newcommand{\fliminf}{\textsc{LimInf}}
\newcommand{\flimsup}{\textsc{LimSup}}
\newcommand{\fsup}{\textsc{Sup}}
\newcommand{\finf}{\textsc{Inf}}
\newcommand{\aut}{{\cal A}}
\newcommand{\Run}{\mathsf{Run}}
\newcommand{\Acc}{\mathsf{Acc}}
\newcommand{\const}{\lambda}
\newcommand{\FinVal}{\mathsf{FinVal}}
\newcommand{\InfVal}{\mathsf{InfVal}}
\newcommand{\tabref}[1]{\eqref{#1}}
\newcommand{\tikzSlaveA}{
\node[circle,draw] (P1) at (0,0) {$q_a$};
\node[circle,draw] (P2) at (1.8,0){$q_F$};
\node[circle,draw, minimum size=0.5cm] (P0) at (1.8,0){};

\draw[->] (P1) to[loop left] node[above,inner sep=5pt] {$(a,1)$} (P1);

\draw[->] (P1) to node[above] {$(b,0)$} (P2);
}
\newcommand{\tikzMasterOne}{
\node[circle, draw] (Q0) at (0,2.5) {$q_0$};

\draw[->] (Q0) to[loop above] node[above] {$(a,1)$} (Q0);
\draw[->] (Q0) to[loop below] node[below] {$(b,2)$} (Q0);
}
\newcommand{\tikzMasterTwo}{
\node[circle,draw] (Q1) at (0,2.5) {$q_a$};
\node[circle,draw] (Q2) at (1.8,2.5){$q_0$};

\draw[->] (Q1) to[loop above] node {$(a,3)$} (Q1);
\draw[->] (Q2) to[loop above] node {$(b,3)$} (Q2);

\draw[->] (Q1) to[bend left] node[above] {$(b,2)$} (Q2);
\draw[->] (Q2) to[bend left] node[below] {$(a,1)$} (Q1);
}
\newcommand{\tikzOfSlaveRun}{
\foreach \i/\l in {1/a,2/a,3/a,4/b,5/a,6/a,7/a,8/b}
{
   \node[rectangle,draw,minimum height=0.5cm] (a\i) at (\i*0.4,0) {$\l$};
}
\foreach \i/\j/\l in {1/2/3,2/3/2,3/4/1,4/5/1,5/6/3}
{
   \draw[->] ($(a\i.north) + (0.03,0)$) to[out=90,in=90] node[above] (b\i) {\l} ($(a\j.north)+ (-0.03,0)$) ;
}
\foreach \i/\j/\l in {1/6/0,2/6/1,3/6/2,4/6/2,
                      2/5/0,3/5/1,4/5/1,
                      3/4/0,4/4/0,
                      4/3/0,5/3/0,
                      5/2/0,6/2/1,7/2/2,8/2/2}
{
   \node[rectangle,draw] (c\i\j) at (\i*0.4,\j*0.55+0.2) {$\l$};
}
\foreach \i/\j/\k in {1/6,2/5,3/4,4/3,5/2}
{
   \draw[->] (c\i\j)  to[out=270,in=110] (b\i) ;
}
\foreach \i/\j/\k in {5/6/4,6/5/3,7/4/2,9/3/2,13/2/4}
{
 \node[rectangle,draw, minimum width=\k*0.4cm+0.1cm,minimum height=0.55cm,thick] at (\i*0.2,\j*0.55+0.2) {};
}
\foreach \i/\j/\l/\w/\c in {6/6/\slaveA_1/a^3b/3,6/5/\slaveA_1/aab/2,6/4/\slaveA_1/ab/1,7/3/\slaveA_2/ba/1}
{
   \node[rectangle] (c\i\j) at (\i*0.43+0.35,\j*0.55+0.2) {$\lang_{\l}(\w)=\c$};
}
}
\newcommand{\confNum}[1]{\mathsf{conf}({#1})}
\begin{document}
\title{Nested Weighted Automata
\thanks{This research was funded in part by the European Research Council (ERC) under grant 
agreement 267989 (QUAREM), by the Austrian Science Fund (FWF) projects S11402-N23 (RiSE),
Z211-N23 (Wittgenstein Award), FWF Grant No P23499- N23, FWF NFN Grant No S11407-N23 (RiSE), 
ERC Start grant (279307: Graph Games), and Microsoft faculty fellows award.}
}

\author{Krishnendu Chatterjee}
\author{Thomas A. Henzinger}
\author{Jan Otop}
\affil{IST Austria}

\maketitle

\begin{abstract}
Recently there has been a significant effort to handle quantitative properties 
in formal verification and synthesis.
While weighted automata over finite and infinite words provide a natural and
flexible framework to express quantitative properties, perhaps surprisingly,
some basic system properties such as average response time cannot be
expressed using weighted automata, nor in any other know decidable formalism.
In this work, we introduce nested weighted automata as a natural extension 
of weighted automata which makes it possible to express important 
quantitative properties such as average response time.
In nested weighted automata, a master automaton spins off and collects 
results from weighted slave automata, each of which computes a quantity along a finite 
portion of an infinite word.
Nested weighted automata can be viewed as the quantitative analogue of 
monitor automata, which are used in run-time verification.
We establish an almost complete decidability picture for the basic decision
problems about nested weighted automata, and illustrate their applicability 
in several domains. In particular, nested weighted automata can be used to decide 
average response time properties.   
\end{abstract}

\section{Introduction}
\label{sec:intro}

Traditionally, formal verification has focused on Boolean properties 
of systems, such as ``every request is eventually granted.'' 
Automata-theoretic formalisms as well as temporal logics have been 
studied as specification languages for such Boolean properties of 
reactive systems.  In recent years there has been a growing trend to 
extend specifications with quantitative aspects for expressing 
properties such as ``the long-run average success rate of an operation 
is at least one half'' or ``the long-run average (or the maximal, or 
the accumulated) resource consumption is below a threshold.''
Quantitative aspects of specifications are essential for 
resource-constrained systems, such as embedded systems, and for 
performance evaluation.  For example, quantitative specifications such 
as accumulated sum can express properties like number of $a$ events
between $b$ events, or memory consumption, whereas long-run average
can express properties related to reliability requirements such as
average success rate.

For Boolean properties regular languages provide a robust 
specification framework. Finding the analogue of regular 
languages for quantitative specifications is an active 
research area~\cite{DBLP:conf/lics/AlurDDRY13,Chatterjee08quantitativelanguages,Droste:2009:HWA:1667106}.
Some of the key features of such a specification framework are 
(1)~expressiveness, i.e., whether the formalism can express properties 
of interest; (2)~ease of specification, i.e., whether the properties 
can be stated naturally; (3)~computability, i.e., whether the basic
decision problems can be solved ---ideally with elementary
complexity--- for interesting fragments of the formalism; and
(4)~robustness, i.e., whether the formalism is robust against small
changes in its definition.

While automata are an expressive, natural, elementarily decidable, and
robust framework for expressing Boolean properties, weighted automata
provide a natural and flexible framework for expressing
quantitative\footnote{We use the term ``quantitative'' in a
non-probabilistic sense, which assigns a quantitative value to each
infinite run of a system, representing long-run average or maximal
response time, or power consumption, or the like, rather than taking a
probabilistic average over different runs.}
properties~\cite{Chatterjee08quantitativelanguages}.  Weighted
automata are an extension of finite automata in which every transition
is labeled by a rational weight.  Thus, each run produces a sequence
of weights, and a value function aggregates the sequence into a single
value.  For non-deterministic weighted automata, the value of a word
$w$ is the infimum value of all runs over~$w$.  Initially, weighted
automata were studied over finite words with weights from a semiring,
and ring multiplication as value function~\cite{Droste:2009:HWA:1667106}.
They have been extended to infinite words with limit averaging or
supremum as value function~\cite{Chatterjee:2009:AWA:1789494.1789497,DBLP:journals/corr/abs-1007-4018,Chatterjee08quantitativelanguages}. 
While weighted automata over semirings can express several
quantitative properties~\cite{DBLP:journals/jalc/Mohri02}, they cannot
express the following basic quantitative properties.

\begin{example}\label{ex:intro}
Consider infinite words over $\{r,g,i\}$, where $r$ represents 
requests, $g$ represents grants, and $i$ represents idle. A first 
basic property is the long-run average frequency of $r$'s, which
corresponds to the average workload of a system.  A second interesting
property is the average number of $i$'s between a request and the
corresponding grant, which represents the long-run average response time of the
system.
\end{example}

While weighted automata with limit-average as value function can 
express the average workload property (which weighted automata over semirings cannot express), 
perhaps surprisingly, they are not capable of expressing the long-run average response time.
To see this, notice that the value of a weighted automaton with 
limit-average value function is bounded by the maximal weight that 
occurs in the automaton, whereas the long-run average response time can be 
unbounded. However, long-run average response time can be 
computed if the sum value function can be applied between requests and 
subsequent grants, and the values of the sum function can be 
aggregated using limit-average function. Such a mechanism can be 
expressed naturally by an extension of weighted automata, 
called \emph{nested weighted automata}, which we introduce in this 
paper.

A nested weighted automaton consists of a master automaton and a set 
of slave automata. The master automaton runs over an infinite word, 
and at each transition of the infinite run, it may invoke a slave 
automaton that runs over a finite subword of the infinite word, 
starting from the position where the master automaton invokes the 
slave automaton. Each slave automaton terminates after a finite 
number of steps and returns a value to the master automaton. To 
compute its return value, each slave automaton is equipped with a
value function for finite words, and the master automaton aggregates
all return values using a value function for infinite words.  While in
the case of Boolean finite automata, nested automata are no more
expressive than their non-nested counterpart, we show that  
the class of nested weighted automata is strictly more expressive 
than non-nested weighted automata.  
For example, with nested weighted automata, the long-run average response 
time of a word can be computed, as in the following example.

\begin{example}
In Example~\ref{ex:intro} there is only a single type of request and grant, 
but in general there can be multiple types of requests and grants, and the 
intervals between requests and corresponding grants for different
requests may overlap.  Using a nested weighted automaton, the average
response time can be specified across all requests.  We illustrate this for
two types of requests and corresponding grants.  The input alphabet is
$\{r_1,g_1,r_2,g_2,i\}$.  At every request $r_1$ (resp.~$r_2$) the
master automaton spins off a slave automaton $\slaveA_1$
(resp.~$\slaveA_2$) with a sum value function, which counts the number
of idle events to the next corresponding grant $g_1$ (resp.~$g_2$).
Observe that many slave automata may run concurrently.  Indeed, for
the word $(r_1^n r_2^n g_1 g_2)^{\omega}$, at all positions with the
letter $g_1$ there are $2 \cdot n$ slave automata that run
concurrently.  The master automaton with limit-average value
function then averages the response times returned by the slave
automata.
\end{example}

Our contributions are three-fold.  First, we introduce nested weighted
automata over infinite words (Section~\ref{s:nested-automata}), which
is a new formalism for expressing important quantitative properties,
such as long-run average response time, which cannot be specified by
non-nested weighted automata.

Second, we study the decidability and complexity of emptiness, universality, 
and inclusion for nested weighted automata. We present an almost complete 
decidability picture for several natural and well-studied value 
functions. 
\begin{compactitem}
\item
On the positive side, we show that if the value functions 
of the slave automata are max, min, or bounded sum, then the decision 
problems for nested weighted automata can be reduced to the 
corresponding problems for non-nested weighted automata. Moreover, we 
show that if the value function of the master automaton is limit 
average and the value function of the slave automata is non-negative 
sum (i.e., sum over non-negative weights), which includes the long-run 
average response time property, then the emptiness question is 
decidable in exponential space. 
Along with the decidability results, we also establish parametric complexity
results, that show that when the total size of the slave automata is 
bounded by constant (which is the case for average response property), 
then for all decidability results the complexity matches that of Boolean 
non-nested automata (see Remark~\ref{rem:parametric}).
The decidability proof is obtained by establishing certain regularity 
properties of optimal runs, which can be used to reduce the problem to the 
emptiness question for non-nested weighted automata with limit-average 
value function. 
\item 
On the negative side, we show that even for deterministic nested weighted 
automata with sup value function for the master automaton and sum 
value function for the slave automata, the emptiness question is 
undecidable. This result is in sharp contrast to non-nested weighted 
automata, where the emptiness and universality questions are always 
decidable for deterministic automata, and the emptiness question is 
decidable also for non-deterministic sup and sum automata. 
\end{compactitem}
Our results are summarized in Table~\ref{tab1} and Table~\ref{tab2}
(in Table~\ref{tab3} and Table~\ref{tab4} for parametric complexity results) 
in Section~\ref{subsec:summary}.

Third, nested weighted automata provide a convenient formalism to
express quantitative properties.  In the Boolean case, \emph{monitor
automata} offer a natural compositional way of specifying complex
temporal properties~\cite{DBLP:conf/spin/PnueliZ08}, and nested
weighted automata can be seen as a quantitative extension of monitor
automata.  Each monitor automaton tracks a subproperty (which
corresponds to a slave automaton in our formalism), and the results of
the monitor automata are combined by another monitor automaton (which
corresponds to the master automaton in our formalism).
\begin{itemize}
\item 
A key advantage of the monitor-automaton approach is that it allows
complex specifications to be decomposed into subproperties, which
eases the task of specification.  Our nested weighted automata enjoy
the same advantage: e.g., for long-run average response time, each
slave automaton computes the response time (as a sum automaton) of an
event, and the master automaton averages the response times.
Formally, we show that deterministic nested weighted automata can be
exponentially more succinct than non-deterministic weighted automata
even when they express the same property (Theorem~\ref{thm:succinct}).  
Moreover, monitor automata are used heavily in run-time
verification~\cite{DBLP:conf/tacas/HavelundR02}.  Hence our framework
can also be seen as a first step towards quantitative run-time
verification, where the slave automata return values of subproperties,
and the master automaton (assuming a commutative value function)
computes on-the-fly an approximation of the value.
\item 
More importantly, for Boolean properties monitor automata simply 
provide a more convenient framework for specification, as they are 
equally expressive as standard automata, whereas we show that nested 
weighted automata are strictly more expressive than non-nested 
weighted automata (e.g., long-run average response time, which cannot 
be expressed using non-nested weighted automata, can be expressed using
nested weighted automata.
\end{itemize}
In other words, we provide a natural combination of weighted automata
(for quantitative properties) and nesting of automata (for ease of 
expressiveness), and as a result obtain a more expressive, 
elementarily decidable, and convenient quantitative specification 
framework.

Finally, we illustrate the applicability of nested weighted automata 
in several domains. (1)~We show that the \emph{model-measuring} 
problem of~\cite{myconcur} can be expressed in the nested weighted 
automaton framework (Section~\ref{s:applications-model-measuring}).
The model-measuring problem asks, given a model and a specification, 
how robustly the model satisfies the specification, i.e., how much the 
model can be perturbed without violating the specification. (2)~As 
dual of the model-measuring problem, we introduce the 
\emph{model-repair} problem  and show that it, as well, can be solved using 
nested weighted automata (Section~\ref{s:applications-inner-model-measuring}).
The model-repair problem asks, given a specification and a model that does 
not satisfy the specification, for the minimal restriction of the model 
that satisfies the specification.
We show that we need nested weighted automata in order to express interesting 
measures on models for the model-measuring and model-repair problems. 

In summary, we introduce nested weighted automata, which offer an 
expressive and convenient quantitative specification framework, and 
establish that the basic verification problems are decidable for 
several interesting fragments (which include the long-run average response time 
property).
While there exist many frameworks to express quantitative properties 
(that we discuss in details in Section~\ref{sec:related}), there exists no 
framework (to the best of our knowledge) that can express the average response 
time property and admit algorithms with elementary time complexity for the 
basic decision problems. 
We present a framework (of nested weighted automata) that can express 
such basic system properties and have decidable algorithms with elementary 
complexity.

The paper is a full version of~\cite{thisPaperLICS}.

\section{Preliminaries}
\label{s:prelim}
\noindent{\em Words.} 
Given a finite alphabet $\Sigma$ of letters, a 
finite (resp. infinite) word $w$ is a finite (resp. infinite)
sequence of letters.
For a word $w$ and $i,j \in \N$, we define $w[i]$ as the $i$-th letter of $w$
and $w[i,j]$ as the word $w[i] w[i+1] \ldots w[j]$. 
We allow $j$ to be $\infty$ for infinite words.
For a finite word $w$, we denote by $|w|$ its length; and for an
infinite word the length is $\infty$.

\smallskip\noindent{\em Non-deterministic automata.}
A \emph{(non-deterministic) automaton} $\aut$ is a tuple $(\Sigma, Q, Q_0, \delta, {F})$, 
where $\Sigma$ is the alphabet, $Q$ is a finite set of states, 
$Q_0 \subseteq Q$ is a set of initial states, $\delta \subseteq  Q \times \Sigma \times Q$ 
is a transition relation, and ${F} \subseteq Q$ is a set of \emph{accepting} states.

\smallskip\noindent{\em Runs.} 
Given an automaton $\aut$ and a word $w$, a \emph{run} $\pi= \pi[0] \pi[1] \ldots$ 
is a sequence of states such that $\pi[0] \in Q_0$ and for every 
$i \in \{1, \dots, |w|\}$ we have $(\pi[i-1], w[i] ,\pi[i]) \in \delta$.
Given a word $w$, we denote by $\Run(w)$ the set of all possible runs on $w$.

\smallskip\noindent{\em Boolean acceptance.}
The acceptance of words is defined using the accepting states.
A finite run $\pi$ of length $j+1$ is \emph{accepting} if $\pi[j] \in F$;
and an infinite run $\pi$ is \emph{accepting}, if there exist infinitely many 
$j$ such that $\pi[j] \in F$.
Let $\Acc(w) \subseteq \Run(w)$ denote the set of accepting runs.
A word $w$ is accepted iff $\Acc(w)$ is non-empty.
We denote by $\lang_{\aut}$ the set of words accepted by $\aut$.

\smallskip\noindent{\em Labeled and weighted automata.}
Given a finite alphabet $\Gamma$, a \emph{$\Gamma$-labeled automaton} is an 
automaton whose transitions are labeled by elements from $\Gamma$. 
Formally, a $\Gamma$-labeled automaton $\aut$ is a tuple 
$(\Sigma, Q, Q_0, \delta, F, \cost)$ such that $(\Sigma, Q, Q_0, \delta, F)$ 
is an automaton and $\cost : \delta \mapsto \Gamma$. 
A \emph{weighted} automaton is a $\Gamma$-labeled automaton, where $\Gamma$ is a finite subset of rationals;
and the labels of the transitions are referred to as \emph{weights}.

\smallskip\noindent{\em Semantics of weighted automata.} 
To define the semantics of weighted automata we need to define the value of a 
run (that combines the sequence of weights of a run to a single value) and the 
value across runs (that combines values of different runs to a single value).
To define values of runs, we will consider  \emph{value functions} $f$ that 
assign real numbers to sequences of rationals.
Given a non-empty word $w$, every run $\pi$ of $\aut$ on $w$ defines a sequence of weights 
of successive transitions of $\aut$, i.e., 
$\cost(\pi)=(\cost(\pi[i-1], w[i], \pi[i]))_{1\leq i \leq |w|}$; 
and the value $f(\pi)$ of the run $\pi$ is defined as $f(\cost(\pi))$.
We will denote by $(\cost(\pi))[i]$ the cost of the $i$-th transition,
i.e., $\cost(\pi[i-1], w[i], \pi[i])$.
The value of a non-empty word $w$ assigned by the automaton $\aut$, denoted by  $\valueL{\aut}(w)$,
is the infimum of the set of values of all {\em accepting} runs;
i.e., $\inf_{\pi \in \Acc(w)} f(\pi)$, and we have the usual semantics that infimum of an
empty set is infinite, i.e., the value of a word that has no accepting runs is infinite.
Every run $\pi$ on an empty word has length $1$ and the sequence $\cost(\pi)$ is empty, hence 
we define the value $f(\pi)$ as an external (not a real number) value $\bot$. 
Thus, the value of the empty word is either $\bot$, if the empty word is accepted by $\aut$, or $\infty$ 
otherwise.
To indicate a particular value function $f$ that defines the semantics,
we will call a weighted automaton $\aut$ an $f$-automaton. 

\smallskip\noindent{\em Types of automata.}
A weighted automaton is 
\begin{compactitem}
\item \emph{deterministic} iff $Q_0$ is singleton and the transition relation is a function; and
\item \emph{functional} iff for every word $w$, all accepting runs on $w$ have
 the same value. 
\end{compactitem}

\smallskip\noindent{\em Value functions.}
We will consider the classical functions and their natural variants for
value functions. 
For finite runs we consider the following value functions: for runs of length $n+1$ we have
\begin{compactenum}
\item {\em Max and min:} 
\begin{compactitem}
\item $\fmax(\pi) = \max_{i=1}^n (\cost(\pi))[i]$ and 
\item $\fmin(\pi) = \min_{i=1}^n (\cost(\pi))[i]$.
\end{compactitem}
\item \emph{Sum and variants:} 
\begin{compactitem}
\item the sum function $\fsum(\pi) = \sum_{i=1}^{n} (\cost(\pi))[i]$, 
\item the absolute sum $\fsum^+(\pi) = \sum_{i=1}^{n} \abs((\cost(\pi))[i])$ 
is the sum of the absolute values of the weights ($\abs$ denotes the 
absolute value of a number), and
\item the bounded sum value function returns the sum if all the 
partial absolute sums are below a bound $B$, otherwise it returns the bound 
$B$, i.e., formally, $\fBsum{B}(\pi) = \fsum(\pi)$, 
if for all prefixes $\pi'$ of $\pi$ we have $\abs(\fsum(\pi')) \leq B$, 
otherwise $B$.
\end{compactitem}
\end{compactenum}
We denote the above class of value functions for finite words as 
$\FinVal=\{\fmax,\fmin,\fsum,\fsum^+,\fBsum{B}\}$.

Although, the absolute sum value function $\fsum^+$ can be equivalently expressed by $\fsum$
restricted to the class of weighted automata with non-negative weights,
we consider $\fsum$ and $\fsum^+$ separately, as the resulting automata differ in complexity results. 

For infinite runs we consider:
\begin{enumerate}
\item {\em Supremum and Infimum, and 
Limit supremum and Limit infimum}: 
\begin{compactitem}
\item $\fsup(\pi) = \sup \{ (\cost(\pi))[i] : i > 0 \}$, 
\item $\finf(\pi) = \inf \{ (\cost(\pi))[i] : i > 0 \}$,  
\item $\flimsup(\pi) = \lim\sup \{ (\cost(\pi))[i] : i > 0\}$, and
\item $\fliminf(\pi) = \lim\inf \{  (\cost(\pi))[i] : i > 0 \}$.
\end{compactitem}
\item {\em Limit average:} $\flimavg(\pi) = \limsup\limits_{k \rightarrow \infty} \frac{1}{k} \cdot \sum_{i=1}^{k} (\cost(\pi))[i]$.
\end{enumerate}
We denote the above class of value functions for infinite words as 
$\InfVal=\{\fsup,\finf,\flimsup,\fliminf,\flimavg\}$.

\smallskip\noindent{\em Decision questions.} We consider the standard emptiness and 
universality questions.
Given an $f$-automaton $\aut$ and a threshold $\const$,
the \emph{emptiness} (resp. \emph{universality}) question asks
whether there exists a non-empty word $w$ such that  $\valueL{\aut}(w) \leq \const$
(resp. for all non-empty words $w$ we have $\valueL{\aut}(w) \leq \const$).
We summarize the main results from the literature related to the decision 
questions of weighted automata for the class of value functions defined 
above.

\begin{theorem}
(1)~The emptiness problem is decidable in polynomial time for all value functions we 
consider~\cite{Filar:1996:CMD:248676,DBLP:journals/jalc/Mohri02}.
(2)~The universality problem is undecidable for $\fsum$-automata with 
$\{ -1, 0, 1\}$ weights and $\flimavg$-automata with $\{ 0, 1\}$ weights;
and decidable in polynomial space for all other value functions~\cite{DBLP:conf/atva/AlmagorBK11,DBLP:conf/csl/DegorreDGRT10,DBLP:journals/corr/abs-1006-1492,Leung1991137}.
(3)~The universality problem is decidable for all value functions 
for deterministic and functional automata~\cite{DBLP:conf/concur/FiliotGR12}.
\label{t:oldResults}
\end{theorem}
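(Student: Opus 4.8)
Since Theorem~\ref{t:oldResults} collects three essentially independent facts, the plan is to prove each part by a different technique, all reducing to well-understood graph-theoretic or automata-theoretic problems.

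For part~(1), I would first note that emptiness --- the existence of a non-empty $w$ with $\valueL{\aut}(w) \le \const$ --- is equivalent to the existence of an accepting run $\pi$ of $\aut$ (over some word) with $f(\pi) \le \const$, because each run determines its own underlying word and $\valueL{\aut}(w)$ is the infimum over the accepting runs on $w$. Thus emptiness becomes the optimization problem of computing the minimum $f$-value over all accepting runs, viewed as paths in the $\cost$-weighted transition graph subject to the acceptance condition (reaching $F$ for finite words, visiting $F$ infinitely often for infinite words). Since the finite weight set makes the achievable values discrete, this optimum is attained by a finite path or a lasso, so the $\le$ threshold is decided exactly. The optimization is then standard: shortest paths with negative-cycle detection (Bellman--Ford) for $\fsum$ and its variants, a min--max path computation for $\fmax$ and $\fmin$, a reachable-cycle analysis inside the accepting strongly connected components for $\fsup,\finf,\flimsup,\fliminf$, and the minimum mean-cycle problem (Karp's algorithm) for $\flimavg$ --- each in polynomial time.

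For part~(2), the decidable value functions are precisely those whose value is bounded by the largest weight occurring in $\aut$, namely $\fmax,\fmin,\fBsum{B},\fsup,\finf,\flimsup,\fliminf$. For these I would refute universality by searching for a witness word all of whose accepting runs exceed $\const$ (a word with no accepting run has value $\infty$ and is such a witness): because the value domain is finite, one can maintain, while guessing the word on the fly, the set of pairs consisting of a reachable state together with the best value attained there, which is a polynomially representable object, and detect the required accepting/rejecting behaviour within this subset construction; this yields a nondeterministic polynomial-space procedure, hence \PSPACE{} by Savitch. Part~(3) is then immediate: for deterministic automata the unique run makes $\valueL{\aut}(w)$ the value of a single path, so complementation is trivial and universality is the negation of emptiness of the above-threshold language, reducing to part~(1); for functional automata all accepting runs of a word share one value, so the per-word value is again independent of the nondeterministic choices and the same complement-emptiness argument applies to every value function.

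The main obstacle lies in the two opposite halves of part~(2). On the negative side, the undecidability of universality for $\fsum$ with weights $\{-1,0,1\}$ and for $\flimavg$ with weights $\{0,1\}$ I would obtain by reduction from an undecidable problem about two-counter machines (or, equivalently, from Krob's undecidability of equivalence for min-plus weighted automata): the $+1/-1$ weights encode increments and decrements of a counter, a net sum pinned to zero certifies a correct zero-test, and the threshold together with acceptance is arranged so that a below-threshold word exists exactly when the machine admits a halting (equivalently, a faulty) computation --- the delicate point being to make the weight bookkeeping faithfully simulate the counter semantics. On the positive side, the subtlety is to prove rigorously that the value domain of the bounded functions stays polynomially bounded and that the subset construction correctly combines the infimum-over-runs semantics with \buchi{} acceptance for infinite words; this is exactly what fails for $\fsum$ and $\flimavg$, whose unbounded accumulation prevents any such finite-memory monitoring and is what forces the boundary between the decidable and undecidable cases.
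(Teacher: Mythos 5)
This statement is a literature summary: the paper offers no proof of its own, only the citations attached to each part, so your sketch is being measured against the standard arguments those references contain. For parts~(1) and~(3) your route is the expected one and is essentially sound: emptiness via shortest paths, min--max paths, accepting-SCC analysis and Karp's minimum mean cycle (with the lasso argument you mention needed to bridge the gap between ``some accepting run has value $\leq \const$'' and ``some word has infimum value $\leq\const$''), and universality for deterministic/functional automata by flipping the threshold and reducing to emptiness of the value-negated automaton plus a Boolean complementation check.

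The genuine gap is in part~(2). You assert that the decidable value functions are \emph{precisely} those whose value is bounded by the largest weight, and you list $\fmax,\fmin,\fBsum{B},\fsup,\finf,\flimsup,\fliminf$ --- omitting $\fsum^+$. But the theorem claims universality is decidable in $\PSPACE$ for \emph{all} value functions other than $\fsum$ and $\flimavg$, and $\fsum^+$ is unbounded; your closing remark that ``unbounded accumulation prevents any such finite-memory monitoring'' would therefore misclassify $\fsum^+$ as undecidable. The correct dividing line is not boundedness of the value but monotonicity of the accumulation: for $\fsum^+$ the partial sums are non-decreasing, so relative to a fixed threshold $\const$ one may cap each run's partial sum at $\const+1$ and track, for every state, the minimal capped value reachable there --- a certificate of size $|Q|\cdot\log\const$, giving a $\PSPACE$ search for a counterexample word (this is the content behind the citations to Leung and Almagor et al., and is closely tied to the limitedness problem for distance automata). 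For general $\fsum$ the partial sums can decrease again, capping is unsound, and that is what opens the door to the counter-machine reduction. A smaller but real issue on the negative side: your $\pm 1$ increment/decrement encoding works for $\fsum$ with weights $\{-1,0,1\}$, but cannot be used verbatim for $\flimavg$ with weights $\{0,1\}$, where the known undecidability proofs encode counter values through letter frequencies rather than signed sums.
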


\section{Nested Weighted Automata}
\label{s:nested-automata}
\newcommand{\idx}{\textsf{idx}}
\newcommand{\stu}{\mathit{stu}}
\newcommand{\pos}{\mathit{pos}}

In this section we introduce nested weighted automata.
We start with an informal description.

\smallskip\noindent{\em Informal description.}
A \emph{nested weighted automaton} consists of a labeled automaton over infinite words, 
called the \emph{master automaton}, a value function $f$,  
and a set of weighted automata over finite words,
called \emph{slave automata}. 
A nested weighted automaton can be viewed as follows: 
given an infinite word, we consider a run of the master automaton on the word,
but the weight of each transition is determined by dynamically running 
slave automata; and then the value of a run is obtained using the 
value function $f$.
That is, the master automaton proceeds on an input word as a usual automaton, 
except that before it takes a transition, it starts a slave automaton 
corresponding to the label of the current transition. 
The slave automaton starts at the current position of the word of the master automaton 
and runs on some finite part of the input word. Once a slave automaton terminates,
it returns its value to the master automaton, which treats the returned
value as the weight of the current transition that is being executed.
Note that for some transitions the slave automaton runs on the empty word and returns $\bot$;
we refer to such transitions as \emph{silent} transitions.
A given run of a nested weighted automaton, which consists of a run of the master automaton and 
runs of slave automata, is accepting if it consists of accepting runs only.
Finally, the value of an accepting run of the master automaton is given by $f$ applied to the 
sequence of values returned by slave automata (i.e., to compute the value function
the silent transitions are omitted).

\smallskip\noindent{\em Nested weighted automata.}
A \emph{nested weighted automaton} is a tuple $\nestedA = \langle \masterA; f;  \slaveA_1, \ldots, \slaveA_k \rangle$, 
where $\masterA$ is a $\{ 1, \ldots, k\}$-labeled automaton over infinite words (where labels are indices of slave automata),
called the \emph{master} automaton, $f \in \InfVal$ is a value function on infinite sequences, and
$\slaveA_1, \ldots, \slaveA_k$ are weighted automata over finite words, 
called \emph{slave} automata.

\smallskip\noindent{\em Semantics: runs and values.}
Let $w$ be an infinite word. 
A \emph{run} of $\nestedA$ on $w$ is an infinite sequence 
$(\masterRun, \slaveRun_1, \slaveRun_2, \ldots)$ such that 
(i)~$\masterRun$ is a run of $\masterA$ on $w$;
(ii)~for every $i>0$ we have $\slaveRun_i$ is a run of the automaton $\slaveA_{\cost(\masterRun[i-1], w[i], \masterRun[i])}$,
referenced by the label $\cost(\masterRun[i-1], w[i], \masterRun[i])$ of the master automaton, on some finite subword $w[i,j]$ of $w$.
The run $(\masterRun, \slaveRun_1, \slaveRun_2, \ldots)$ is accepting if all 
runs $\masterRun, \slaveRun_1,  \slaveRun_2, \ldots$ are accepting (i.e., $\masterRun$ satisfies its acceptance 
condition and each $\slaveRun_1,\slaveRun_2, \ldots$ ends in an accepting state)
and infinitely many runs of slave automata have length greater than $1$ (the master automaton takes infinitely many non-silent transitions).
The value of the run $(\masterRun, \slaveRun_1, \slaveRun_2, \ldots)$ is defined as 
$\silent{f}( v(\pi_1) v(\pi_2) \ldots)$, where $v(\pi_i)$ is the value of the run $\pi_i$ in 
the corresponding slave automaton and 
$\silent{f}$ is the value function that applies $f$ on sequences after removing $\bot$ symbols.
The value of a word $w$ assigned by the automaton $\nestedA$, denoted by  
$\valueL{\nestedA}(w)$, is the infimum of the set of values of all {\em accepting} runs.
We require accepting runs to contain infinitely many non-silent transitions because
$f$ is a value function over infinite sequences, so we need 
the sequence $v(\pi_1) v(\pi_2) \ldots$ with $\bot$ symbols removed to be infinite.

\smallskip\noindent{\em Notation.} Let $f,g$  be value functions.
We say that a nested weighted automaton $\nestedA = \langle \masterA; h;  \slaveA_1, \ldots, \slaveA_k \rangle$ 
is an $(f;g)$-automaton iff $h=f$ and $\slaveA_1, \ldots, \slaveA_k$ are $g$-automata
(weighted automata over finite words with value function $g$).
We illustrate the semantics of nested weighted automata with examples.

\begin{example}[Stuttering]
\label{ex:avg-resp-time-pre}
Consider the nested weighted automaton 
$\nestedA_{\stu}^1 = \langle \masterA^1; \flimavg;  \slaveA_{1}, \slaveA_{2}  \rangle$ 
where each slave automaton is a $\fsum^+$-automaton.
The automaton $\masterA^1$ has a single state and two transitions $(q_0, a, q_0)$ 
labeled by $1$ and $(q_0, b, q_0)$ labeled by $2$. 
The slave automaton $\slaveA_1$ accepts words from $a^*b$ and assigns to a word
$a^k b$ value $k$.
The automaton $\slaveA_2$ accepts words from $b^*a$ and assigns to 
a word $b^k a$ value $k$.

Consider a word $(aaab)^{\omega}$.  A run of $\nestedA_{\stu}^1$ on $(aaab)^{\omega}$ is depicted in 
Figure~\ref{fig:run}. The value of the word is $\frac{7}{4}$. 
Note that $\nestedA_{\stu}^1$ accepts only words with infinite number of $a$'s and $b$'s,
as otherwise, some slave automaton would not terminate.
For word $w = (a^nb)^{\omega}$ the value is 
$(\frac{(n+1)n}{2} + 1)\cdot \frac{1}{n+1}$, 
and this shows
that the nested weighted automaton can return unbounded values 
(in contrast to a $\flimavg$-automaton whose range is bounded by its maximal weight).
Consider the automaton 
$\nestedA_{\stu}^2 = \langle \masterA^2; \flimavg;  \slaveA_{1}, \slaveA_{2}, \slaveA_3  \rangle$,
where $\slaveA_3$ has only a single state, which is accepting, and it has no transitions.
Thus, $\slaveA_3$ accepts on the empty word  and 
invoking it is a way for $\masterA^2$ to take a silent transition.
Intuitively,  each slave automaton counts how many times a given letter occurs.
Thus the value computed by the nested weighted automaton is the average letter repetition
(or average stuttering).
Silent transitions, produced by calling the automaton $\slaveA_{3}$, enable $\masterA^2$ to compute average only over positions
where a new block starts.
\end{example}
\begin{wrapfigure}{r}{0.5\textwidth}
\begin{tikzpicture}
\begin{scope}[yshift=0.2cm]
\tikzMasterOne
\node at (0.2,1.0) {(a) $\masterA^1$};
\end{scope}

\begin{scope}[xshift=1.0cm,yshift=0.2cm]
\tikzMasterTwo
\node at (0.8,1.0) {(b) $\masterA^2$};
\end{scope}

\begin{scope}[xshift=0.8cm]
\tikzSlaveA
\node at (0.8,-0.6) {(c) $\slaveA_1$};
\end{scope}

\begin{scope}[xshift=3.3cm]
\tikzOfSlaveRun
\node at (2,-0.6) {(d)~A run of $\nestedA_{\stu}^1$};
\end{scope}
\end{tikzpicture}
\caption{The master automata (a)~$\masterA^1$, (b)~$\masterA^2$,
(c)~the slave automaton $\slaveA_1$, (d)~a run of  
the nested weighted automaton $\nestedA_{\stu}^1$.}
\label{fig:run}
\vspace{-10pt}
\end{wrapfigure}

\begin{example}[Average response time]
\label{ex:avg-resp-time}
Consider the specification for average response time defined as follows:
we consider words for the alphabet $\{r,g,i\}$, where $r$ denotes a 
request, $g$ denotes a grant, and $i$ denotes idle (no request or grant).
Consider a word $w$, and a position $j$, such that $w[j]$ is a request,
and then the response time in position $j$ is the distance to the 
closest grant, i.e., the response time is $j'-j$ where $j'>j$ is the least
number greater than $j$ with $w[j']=g$.
The average response time is the limit-average of the response times
of the requests. 
Consider a nested weighted automaton,
with one slave automaton that has sum of non-negative weights as the 
value function, and the master automaton with limit-average value function.
The master automaton for every letter $r$ invokes the slave automaton,
and for $g$ and $i$ takes a silent transition (i.e., it is a single
state automaton with $r$ labeled as~1, and $g$ and $i$ labeled as~2).
The slave automaton $\slaveA_1$ counts the number of steps till the first $g$ and
the slave automaton $\slaveA_2$ accepts only the empty word, which is used to 
produce silent transitions.
The nested weighted automaton specifies the average response time property.
As discussed in Section~\ref{sec:intro} since the average response time 
can be unbounded, it cannot be expressed by a non-nested limit-average
automaton, whose value is bounded by the maximal weight that occurs in it.
\end{example}

\smallskip\noindent{\em Equivalence with weighted automata.} 
We say that a nested weighted automaton $\nestedA$ and a weighted automaton $\aut$
are \emph{equivalent} iff their values coincide on each word,
i.e., for all $w \in \Sigma^{\omega}$ we have $\valueL{\nestedA}(w) = \valueL{\aut}(w)$.

\smallskip\noindent{\em Determinism of nested weighted automata.}
There are two reasons why a nested weighted automaton may be non-deterministic. 
The first one is standard: one of the components, the master automaton or
one of the slave automata is non-deterministic. The second one is more subtle:
it is the termination of slave automata. To accept, a slave automaton
has to terminate in an accepting state, but it not need to be the first
time it visits an accepting state. It can run longer to compute a different value.
However, if the language $\lang$ recognized by the slave automaton is \emph{prefix-free}, i.e.,
 $w \in \lang$ implies that no extension of $w$ belongs to $\lang$,
then it has to terminate once it reaches an accepting state because 
it will have no other chance to accept. This intuition suggests 
the following definition.

\smallskip\noindent{\em Types of nested weighted automata.}
A nested weighted automaton is \emph{deterministic} iff 
the master automaton and all slave automata are deterministic 
and each slave automaton recognizes a prefix-free language.
A nested weighted automaton is \emph{functional} iff 
for every word $w$, each accepting run on $w$ has the same weight.

We will consider the decision questions of emptiness and universality for
nested weighted automata.

\section{Decision Problems}
\label{s:decision}
In this section we study the decidability and complexity of the decision problems
for nested weighted automata.
We start with some simple observations.

\smallskip\noindent{\em Simple observations.} 
Note that the emptiness (resp. universality) of $f$-automata and 
$g$-automata reduces to the emptiness (resp. universality) of 
$(f;g)$-automata: by simply considering dummy master or dummy slave
automata.
Hence by Theorem~\ref{t:oldResults} it follows that the universality 
problem for $(f;g)$-automata is decidable only if the universality problem
is decidable for $f$-automata and $g$-automata.

\begin{restatable}{theorem}{SimpleReduction}
\label{thm:simple}
(1)~For $f \in \InfVal$, the universality problem  for $(f; \fsum)$-automata is undecidable.
(2)~For $g \in \FinVal$, the universality problem  $(\flimavg; g)$-automata is undecidable.
\end{restatable}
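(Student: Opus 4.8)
The plan is to prove both parts by reduction from the known undecidability results in Theorem~\ref{t:oldResults}, exploiting the ``simple observations'' paragraph: since emptiness/universality of $f$-automata and $g$-automata each reduce to the corresponding problem for $(f;g)$-automata via dummy components, it suffices to pin down which classical undecidable problem each part inherits. For part~(1), recall that Theorem~\ref{t:oldResults}(2) states the universality problem is undecidable for $\fsum$-automata with weights in $\{-1,0,1\}$. The idea is to embed an arbitrary $\fsum$-automaton as a slave automaton under a trivial master automaton, so that the value the nested automaton computes on a word coincides (up to the mechanics of slave invocation) with the value a plain $\fsum$-automaton would compute. For part~(2), Theorem~\ref{t:oldResults}(2) also gives undecidability of universality for $\flimavg$-automata with weights in $\{0,1\}$; here the master automaton carries the $\flimavg$ value function and we must realize the individual $\{0,1\}$ weights through slave automata of the required finite-word value function $g$.

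First I would handle part~(1). Given an arbitrary $\fsum$-automaton $\slaveA$ over $\Sigma$ with weights in $\{-1,0,1\}$ and a threshold $\const$, I would construct a nested $(f;\fsum)$-automaton whose master $\masterA$ is a single-state automaton over an extended alphabet that, at each letter, invokes $\slaveA$ as its (only non-silent) slave. The subtlety is that a slave runs on a \emph{finite subword starting at the current position}, whereas the original $\fsum$-automaton runs on the whole word; so I would use a delimited encoding, writing inputs as $w \# w \# \cdots$ or using a marker letter so that each invocation of the slave can read exactly one copy of $w$ up to the delimiter and return $\fsum_{\slaveA}(w)$, while the master takes silent transitions on the delimiters (via a trivial empty-word slave, as in Example~\ref{ex:avg-resp-time-pre}). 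Then for any $f \in \InfVal$ the value $\silent{f}$ of the resulting infinite sequence of identical returned values equals that common value $\fsum_{\slaveA}(w)$, so universality of the nested automaton over the marked words holds iff $\fsum_{\slaveA}(w)\le\const$ for all $w$, i.e.\ iff $\slaveA$ is universal. This transfers undecidability. The same single-slave construction works uniformly for every $f\in\InfVal$ because each $f\in\InfVal$ maps a constant sequence to that constant.

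Next I would treat part~(2), fixing the master value function to $\flimavg$ and letting $g\in\FinVal$ range over the finite-word value functions. Starting from an arbitrary $\flimavg$-automaton $\nonnestedA$ with weights in $\{0,1\}$ and threshold $\const$, I would build a master automaton that mirrors $\nonnestedA$'s transition structure and state, but replaces each weight-$0$ transition by the invocation of a slave automaton returning $0$ under $g$, and each weight-$1$ transition by a slave automaton returning $1$ under $g$. Since the weights are only $0$ and $1$, such constant-return slave automata are trivial to build for any $g\in\FinVal$: a two-state (or one-transition) finite-word automaton whose $g$-value on its single accepted word is the prescribed constant. The returned sequence of values then matches exactly the weight sequence of $\nonnestedA$, so $\flimavg$ of the returns equals $\flimavg$ of the original run, and universality of the nested automaton coincides with universality of $\nonnestedA$, yielding undecidability.

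The main obstacle, in both parts, is the mismatch between how a nested automaton feeds input to its slaves (each slave starts at the master's current position and consumes a finite suffix) and how an ordinary weighted automaton consumes its whole input once. The delimiter/marker encoding in part~(1) is the crux: I must ensure each slave invocation reads precisely the intended finite block and that the acceptance condition of the slave (prefix-freeness is \emph{not} assumed here, since we only need universality of a possibly non-deterministic nested automaton) forces it to return the genuine $\fsum$-value of that block. In part~(2) this difficulty essentially disappears because the slaves return \emph{constants}, so the only care needed is verifying that for each of the five functions in $\FinVal$ one can realize the constants $0$ and $1$ as the $g$-value of some finite-word weighted automaton, which is a short case check rather than a genuine obstacle.
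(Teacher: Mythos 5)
Your proposal is correct and follows essentially the same route as the paper: part~(1) is the same $\#$-delimited block encoding that runs the given $\fsum$-automaton as a slave on each block (the paper invokes the slave at the $\#$ positions and is silent elsewhere, whereas you phrase the invocation points the other way around, and the paper routes the $\flimavg$ case through part~(2) while you observe the block construction already covers it); part~(2) is the identical dummy-slave simulation of a $\{0,1\}$-weighted $\flimavg$-automaton. The only point worth tightening is that universality quantifies over all well-formed words $w_1\# w_2\#\cdots$ with possibly distinct blocks, not just the periodic words $(w\#)^{\omega}$, so you also need the (easy) observation that if every block value is at most $\lambda$ then so is the $f$-aggregate, for each $f\in\InfVal$.
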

\begin{proof}[Proof of (1) from Theorem \ref{thm:simple}]
We show a reduction of the universality problem for $\fsum$-automata with weights $\{ -1, 0, 1\}$, 
which is undecidable (Theorem~\ref{t:oldResults}),
to the universality problem for $(f;\fsum)$-automata, 
where $f \in \{ \finf, \fsup, \fliminf, \flimsup\}$. The case $f  = \flimavg$
follows from (2).

Let $\aut$ be a $\fsum$-automaton with weights $\{ -1, 0, 1\}$. 
Consider an $(\finf;\fsum)$-automaton $\nestedA$
that works as follows. Its acceptance condition enforces that it accepts only 
words with infinitely many $\#$ letters, i.e., 
the words the form $w_1 \# w_2 \# \dots$.
At each $\#$ letter the master automaton starts an instance of $\aut$ as a slave automaton
that works to the successive $\#$ letter. 
On the positions with a letter different than $\#$, the master automaton takes 
a silent transition. 
Then, the value of a word $w_1 \# w_2 \# \dots$ is equal to the infimum
of values $\valueL{\aut}(w_i)$. In particular, $\valueL{\nestedA}((w \#)^{\omega}) = \valueL{\aut}(w)$.
Since for every word $w_1 \# w_2 \# \dots$ there exists $i$ such that
$\valueL{\nestedA}(w_1 \# w_2 \# \dots) = \valueL{\nestedA}((w_i \#)^{\omega})$,
the universality problems for $\nestedA$ and $\aut$ coincide. 

The same construction shows a reduction of the universality problem for 
$\fsum$-automata 
to the universality problem for $(\fliminf;\fsum)$-automata
(resp. $(\fsup;\fsum)$-automata,  $(\flimsup;\fsum)$-automata).
\end{proof}
\begin{proof}[Proof of (2) from Theorem \ref{thm:simple}]
For every $g \in \FinVal$ we can define
two dummy $g$-automata, $\aut_{0}$ (resp. $\aut_{1}$) that immediately accept
and return the value $0$ (resp. $1$). Therefore, 
such $(\flimavg;g)$-automata can simulate all $\flimavg$-automata with weights $0,1$,
whose universality problem is undecidable (Theorem~\ref{t:oldResults}).
Therefore, the universality problem for $(\flimavg;g)$-automata is undecidable as well.
\end{proof}

In the decidable cases, the lower bound for the emptiness and the universality problems 
is $\PSPACE$. Recall, the finite automata intersection problem,
which asks, given a set of deterministic finite automata, is there a finite word accepted by all of them,
is $\PSPACE$-complete~\cite{DBLP:conf/focs/Kozen77}. That problem can be reduced 
to the emptiness problem of deterministic nested weighted automata. 
This implies $\PSPACE$-hardness of the emptiness problem for deterministic nested weighted automata. 
Moreover, 
for deterministic nested weighted automata, the emptiness problem 
can be reduced to the universality problem.
Therefore, the universality problem for deterministic nested weighted automata 
is also $\PSPACE$-hard.

\begin{restatable}{proposition}{PSpaceHardness}
\label{p:pspacehard}
For all $f \in \InfVal$ and $g \in \FinVal$, 
the emptiness (resp. the universality) problem for deterministic nested 
weighted automata is $\PSPACE$-hard.
\end{restatable}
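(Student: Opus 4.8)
The plan is to reduce the finite-automata intersection problem---given deterministic finite automata $\aut_1,\ldots,\aut_n$ over an alphabet $\Sigma$, decide whether $\bigcap_{i} \lang_{\aut_i} \ne \emptyset$, which is $\PSPACE$-complete~\cite{DBLP:conf/focs/Kozen77}---to the emptiness problem of a deterministic $(f;g)$-automaton, uniformly in $f\in\InfVal$ and $g\in\FinVal$. The essential point is that nesting lets us keep the $\aut_i$ as \emph{separate} slave automata, so the reduction stays polynomial and avoids the exponential product blow-up; the only real difficulty in the construction is to force all $n$ slaves to read the \emph{same} finite word.

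First I would fix a fresh separator $\#$ and a fresh padding letter $\star$ and work over $\Sigma\cup\{\#,\star\}$, considering candidate inputs of the form $(\star^{n}\, w\, \#)^{\omega}$ with $w\in\Sigma^{*}$. The master automaton $\masterA$ checks this format (rejecting, i.e.\ producing value $\infty$, on every ill-formed word) and, at the $n$ padding positions at the start of each block, takes $n$ consecutive non-silent transitions invoking $\slaveA_{1},\ldots,\slaveA_{n}$ in turn; on the $\Sigma$-segment and on $\#$ it takes silent transitions. This needs only $O(n)$ states. Slave $\slaveA_{i}$, launched at the $i$-th $\star$, first skips the remaining $\star$'s by a self-loop and then simulates $\aut_{i}$ on $w$, terminating on the first $\#$ in an accepting state iff $\aut_{i}$ accepts $w$. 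All $n$ slaves of a block thus read the \emph{same} $w$; this is exactly what the $\star^{n}$-padding buys, since without it the slaves launched at consecutive positions would read different suffixes. Each $\slaveA_{i}$ has size $O(|\aut_{i}|)$, is deterministic, and recognizes a prefix-free language (it accepts only at the first $\#$), so $\nestedA$ is a deterministic nested weighted automaton of polynomial size.

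Next I would set every weight to $0$ and the threshold to $\const=0$. Then an accepting slave run has $g$-value $0$ for every $g\in\FinVal$, and on a well-formed input the master aggregates the all-$0$ sequence of returned values to $f$-value $0$ for every $f\in\InfVal$, so the same instance works uniformly. The conjunctive acceptance of nested runs now realizes the intersection: a run is accepting (value $0\le\const$) iff in \emph{every} block all $n$ slaves accept, i.e.\ iff that block's $w$ lies in $\bigcap_i\lang_{\aut_i}$. Hence $\valueL{\nestedA}$ takes value $0$ on $(\star^n w\#)^\omega$ whenever $w\in\bigcap_i\lang_{\aut_i}$ and value $\infty$ on every other word, so $\nestedA$ is non-empty iff $\bigcap_i\lang_{\aut_i}\ne\emptyset$. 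This establishes $\PSPACE$-hardness of emptiness.

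Finally, for universality I would invoke the reduction from emptiness to universality available for deterministic nested weighted automata. The clean direction is to reduce the complementary problem $\bigcap_i\lang_{\aut_i}=\emptyset$ (also $\PSPACE$-complete, since $\PSPACE$ is closed under complement) to universality: I reuse the same format and decoding, but make both the master and the slaves \emph{total} (format violations are rendered harmless, yielding value $0$), with each slave returning weight $1$ if $\aut_i$ accepts $w$ and $0$ otherwise; with $f=\finf$ we then get $\valueL{\nestedA'}(W)=0$ iff some slave in some block rejects, i.e.\ iff some block's word is not in the intersection, so $\nestedA'$ is universal w.r.t.\ $0$ iff no word lies in $\bigcap_i\lang_{\aut_i}$. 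I expect the main obstacle to be extending this to the remaining value functions: conjunctive acceptance together with a supremum-style $f$ naturally expresses an ``all slaves accept'' condition, whereas the complement needs an ``exists a rejecting slave'' (disjunctive) condition, and reconciling the two is precisely the content of the generic emptiness-to-universality reduction, which leans on determinism---a unique run per word---to dualize the acceptance and value comparison.
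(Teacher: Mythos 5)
Your reduction from the DFA-intersection problem is essentially the paper's own construction: the paper pads each block as $\#^n w\,\$$ and launches slave $i$ (a prefix-free, deterministic wrapper of $\aut_i$ that skips the remaining padding and terminates at the end marker) on the $i$-th padding letter, exactly as your $\star^n w\,\#$ scheme with all weights $0$ and threshold $0$ does, so conjunctive acceptance of the nested run encodes membership of every block in $\bigcap_i\lang_{\aut_i}$. For universality the paper is just as terse as you are---it asserts a generic emptiness-to-universality reduction for deterministic nested weighted automata without detailing how non-accepted words (value $\infty$) and the dualization of $f$ are handled---so your explicit $\finf$ construction plus the flagged difficulty for supremum-style $f$ matches, and arguably slightly refines, the paper's argument.
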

\begin{proof}
Let $\aut_1, \ldots, \aut_n$ be deterministic finite automata over the alphabet $\Sigma$. 
First, we modify each of them to obtain $\aut_1^*, \dots, \aut_n^*$ over $\Sigma \cup \{ \#, \$\}$
such that $\#,\$ \notin \Sigma$ and for every $i \in \{1,\ldots, n\}$ we have
$\aut_i^*$ accepts $u$ iff $u = \#^k w \$$ and $\aut_i$ accepts $w$. 
Observe that  $\aut_1^*, \dots, \aut_n^*$ recognize prefix-free languages.
Then, we define a nested weighted automaton whose slave automata are 
$\aut_1^*, \dots, \aut_n^*$ and the master automaton recognizes the language
$(\#^n \{a,b\}^* \$)^{\omega}$. The master automaton invokes all slave automata
on successive $\#$ letters. Thus, for a word $\#^n w_1 \$ \#^n w_2 \$ \ldots$ to 
be accepted, the slave automaton $\aut_1^*$ has to accept the words 
$\#^n w_1 \$, \#^n w_2 \$, \ldots$, 
 $\aut_2^*$ has to accept the words 
$\#^{n-1} w_1 \$, \#^{n-1} w_2 \$, \ldots$, and so on. 
It follows that the nested weighted automaton accepts the language
$\#^n w_1 \$ \#^n w_2 \$ \ldots$ such that all words $w_1, w_2 $ are accepted by 
all automata $\aut_1, \ldots, \aut_n$. Therefore, the nested weighted automaton accepts any word iff
there exists a finite word accepted by all automata $\aut_1, \ldots, \aut_n$.
The nested automaton defined as above is deterministic.
\end{proof}

\subsection{Regular Weighted Slave Automata}
\label{s:regularslave}
We present a general result that ensures decidability for the 
decision problems for a large class of nested weighted automata.
We now consider slave automata that can only return values from a bounded
domain, and present decidability results for them.

\begin{definition}[Regular weighted automata]
Let $\aut$ be a weighted automaton over finite words.
We say that the weighted automaton $\aut$ is a \emph{regular weighted} automaton iff
there is a finite set $\{ q_1, \ldots, q_n\} \subseteq \Q$ and 
there are regular languages ${\lang}_1, \ldots, {\lang}_n$
such that 
\begin{compactenum}[(i)]
\item every word accepted by $\cal A$ belongs to $\bigcup_{1 \leq i \leq n} {\lang}_i$, and
\item for every $w \in {\lang}_{i}$, each run of $\cal A$ on $w$ 
has the weight $q_i$.
\end{compactenum}
\end{definition}

\begin{remark}
Regular weighted automata define the class of are equivalent to recognizable step functions~\cite{Droste:2009:HWA:1667106}. 
However, we (implicitly) require regular languages ${\lang}_1, \ldots, {\lang}_n$
to be disjoint, whereas the value of a recognizable step function at a word $w$ is defined 
as the minimum $q_i \in \{q_1, \ldots, q_n\}$ among such $i$'s that $w$ belongs to $\lang_i$.
\end{remark}

We define the \emph{description size} of a given regular weighted automaton $\aut$,
as the size of automata $\aut_1, \ldots, \aut_n$ recognizing languages 
 ${\lang}_1, \ldots, {\lang}_n$ that witness $\aut$ being a regular weighted automaton.

\noindent{\em Regular value functions.}
A value function $f$ is a \emph{regular value function}
iff all $f$-automata are regular weighted automata.
Examples of regular value functions are $\fmin,\fmax,\fBsum{B}$.
Observe that the description size of a $\fmin$, $\fmax$ and $\fsum^B$-automaton $\aut$
is polynomial in $|\aut|$, but it is exponential in the length of binary representation of $B$, 
for a $\fsum^B$-automaton $\aut$.

\smallskip\noindent{\em Key reduction lemma.}
In the following key lemma we establish that if the slave automata are
regular weighted automata, then nested weighted automata can be reduced
to weighted automata with the same value function as for the master automata.
For regular weighted slave automata, a weighted automaton can simulate a 
nested weighted automaton in the following way. 
Instead of starting a slave automaton, the weighted automaton guesses the 
weight of the current transition (i.e., the value to be returned of the slave
automaton) and checks that the guessed weight is correct.
The definition of regular weighted automata implies that such a check can be 
done by a (non-weighted) finite automaton $\selectA$.
Thus, the weighted automaton takes a universal transition such that in one 
branch it continues its execution and in another it runs $\selectA$.
Observe that such a universal transition can be removed by a standard
power-set construction.
Given a value function $f$, recall that $\silent{f}$ is the value function 
that applies $f$ on sequences after removing silent transitions.
The following Lemma~\ref{t:regular-non-nested} along with Theorem~\ref{t:oldResults} implies Theorem~\ref{thm:dec1}.

\begin{restatable}[Key reduction lemma]{lemma}{RegularReduceToNonNested}
\label{t:regular-non-nested}
Let $f\in \InfVal$ be a value function.
Consider a nested weighted automaton $\nestedA = \langle \masterA; f; \slaveA_1, \ldots, \slaveA_k \rangle$
such that all automata $\slaveA_1, \ldots, \slaveA_k$ are regular weighted automata.
There is a $\silent{f}$-automaton $\nonnestedA$ (weighted automaton), 
that can be constructed in polynomial space, which is equivalent to 
$\nestedA$;
moreover, if $\nestedA$ is functional, 
then $\nonnestedA$ is functional as well.
\end{restatable}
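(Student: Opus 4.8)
The plan is to construct $\nonnestedA$ as a $\silent{f}$-automaton whose main computation follows a run of the master automaton $\masterA$, but which, instead of actually running a slave automaton at each transition, nondeterministically guesses the value that slave would return and then checks the guess with a Boolean finite automaton. First I would use the hypothesis that every slave $\slaveA_j$ is a regular weighted automaton: for each $j$ I fix the finite value set $q^j_1, \ldots, q^j_{n_j}$ and the disjoint witnessing regular languages $\lang^j_1, \ldots, \lang^j_{n_j}$, and I let $\selectA_{j,i}$ be an ordinary finite automaton recognising $\lang^j_i$. The essential consequence of regularity is that the statement ``the copy of $\slaveA_j$ started at position $p$ has an accepting run returning the value $q^j_i$'' is equivalent to the purely language-theoretic statement that $w[p,m] \in \lang^j_i$ for some $m \ge p$; the latter is exactly what $\selectA_{j,i}$ verifies, and because the languages are disjoint the guessed value is pinned down by the chosen ending position.

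Next I would describe $\nonnestedA$ conceptually as an alternating automaton. Its existential main thread carries the weights and follows a run of $\masterA$; whenever $\masterA$ takes a transition invoking $\slaveA_j$, the main thread either declares the transition silent (the slave's run is empty, emitting $\bot$, so that $\silent{f}$ skips it) or it picks some value $q^j_i$, emits it as the weight of this transition, and via a universal branch launches a copy of $\selectA_{j,i}$ from the current position. A verification branch accepts as soon as its copy of $\selectA_{j,i}$ reaches an accepting state. With this construction an accepting run projects onto a run of $\masterA$ together with, at each non-silent transition, a verified slave value, and conversely every accepting run of $\nestedA$ on $w$ induces such a run; since the weights emitted on the main thread are precisely the returned slave values and the silent transitions coincide, the value computed by $\silent{f}$ equals $\valueL{\nestedA}(w)$ on every word $w$. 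The same correspondence yields the functional claim: the values realised by accepting runs of $\nonnestedA$ on $w$ are exactly those realised by accepting runs of $\nestedA$, so if the latter all agree, so do the former.

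The remaining and main step is to turn this alternating description into an ordinary nondeterministic $\silent{f}$-automaton by a subset construction, and here lies the principal obstacle: since the subwords read by different slave invocations overlap, unboundedly many verification branches can be active simultaneously, so individual branches cannot be stored. The observation that rescues a finite construction is that two verification obligations currently occupying the same state of the same $\selectA_{j,i}$ have identical residual behaviour on the remaining input and can therefore be discharged together; hence it suffices to record the \emph{set} of states occupied by the currently active verification automata. I would thus take the state space of $\nonnestedA$ to be the product of the states of $\masterA$ with the power set $2^{\,\bigcup_{j,i} Q_{\selectA_{j,i}}}$, advancing every recorded verification state on each input letter, inserting the initial state of $\selectA_{j,i}$ whenever the value $q^j_i$ is guessed, and removing a state once it certifies its obligation. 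The one delicate point is the acceptance condition: besides the \buchi\ condition of $\masterA$ and the requirement of infinitely many non-silent transitions, one must also ensure that \emph{every} spawned verification branch is eventually discharged, which I would encode by a Miyano--Hayashi-style ``owing set'' that is required to be emptied infinitely often. The resulting state space is exponential in the total number of slave states, but each state is describable in polynomial space and its successors are computable in polynomial space, which gives the claimed polynomial-space construction.
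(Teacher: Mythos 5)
Your proposal is correct and follows essentially the same route as the paper: guessing the returned value of each invoked slave, verifying the guess with the finite automaton $\selectA_{i,j}$ obtained from regularity, and removing the resulting universal branching by a powerset construction whose termination obligations are tracked Miyano--Hayashi-style (the paper's two sets $P_1,P_2$ with the requirement that $P_2$ be empty infinitely often) under a generalized \buchi{} condition. The correspondence of runs, the treatment of silent transitions, and the polynomial-space implicit representation all match the paper's argument.
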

\begin{proof}
\newcommand{\step}{\textsc{Step}}
Assume that each slave automaton $\slaveA_i$ has the weights 
from the set $\{ -n, \ldots, n\}$. Then, since all of the slave automata
are regular weighted automata, for all $i \in \{1,\dots, k\}$ and $j \in \{-n,\dots, n\}$
there is a deterministic finite word automaton $\selectA_{i,j}$ that 
recognizes the language of all words $w$ such that $\valueL{\slaveA_i}(w) = j$. 
Since $\slaveA_i$ is a regular weighted automaton,
it accepts precisely when one of the automata $\selectA_{i,0}, \dots, \selectA_{i,n}$
accepts. 

We define $Q_S$ (resp. $F_S$) as the disjoint union of the sets of states (resp.
the sets of accepting states states)  of all automata $\selectA_{i,j}$.
Let $Q_m$ (resp. $F_m$) be the set of all states (resp. all accepting states) the master automaton $\masterA$. 
We define a relation ${\step} \subseteq 2^{Q_S}\times \Sigma \times 2^{Q_S}$, which
is the union of transition relations lifted to sets of states, i.e., 
$(\{ q_1, \dots, q_l \}, a, \{ q_1', \dots, q_l' \}) \in \step$ iff
for every $m \in \{1,\ldots, l\}$, some automaton $\selectA_{i,j}$ has
a transition $(q_m, a, q_m')$.

We define $\nonnestedA$, which we show is equivalent to $\nestedA$, as 
a generalized \buchi{} automaton, which differs from an automaton over infinite words (\buchi{} automaton)
in the  acceptance condition. 
An acceptance condition in a generalized \buchi{} automaton is a sequence of $F_1, \ldots, F_s$
of sets of states. A run is accepting iff for each $d \in \{1,\dots,s\}$ there is a state from $F_d$
visited infinitely often.
There is a straightforward reduction of a generalized \buchi{} automaton to a \buchi{} automaton,
and we omit the reduction and for technical convenience consider generalized \buchi{} condition 
for the proof.
 
The automaton $\nonnestedA$ works as follows. It simulates the execution of the master automaton.
Every time the master automaton starts a slave automaton $\slaveA_i$, the automaton
guesses the value $j$ that $\slaveA_i$ returns and checks it, i.e., it starts simulating the automaton $\selectA_{i,j}$, 
by including the  initial state of  $\selectA_{i,j}$ in a set of states $P_1$.
The automaton $\nonnestedA$ maintains two sets of states of simulated automata, $P_1$ and $P_2$:
states in $P_1$ and $P_2$ represent states of $\selectA_{i,j}$ and 
basically, there are states in $P_2$ until all automata corresponding to them terminate.
Once they do, $P_2$ is empty and all states from $P_1$ are copied to $P_2$.
Intuitively, the role of $P_1$ and $P_2$ is to ensure that each automaton terminates,
by enforcing $P_2$ to be empty infinitely often.
We now formally define  
$\nonnestedA=\langle \Sigma, Q, q_0, \cost, \delta, F \rangle$ as follows: 
\begin{enumerate}
\item $Q = Q_m \times (\{-n,\ldots, n\} \cup \{ \bot \}) \times 2^{Q_{\selectA}} \times 2^{Q_{\selectA}}$
\item $q_0 = \quadruple{q^m_0}{0}{\emptyset}{\emptyset}$, where $q^m_0$ is the initial state of the master automaton
\item $(\quadruple{q}{j}{P_1}{P_2}, a, \quadruple{q'}{j'}{P_1'}{P_2'}) \in \delta$ iff 
$(q,a, q')$ is a valid transition of the master automaton labeled by $i$
and one of the following holds (intuitive descriptions follow):
\begin{enumerate}
\item $j = \bot$, $P_1' = P_1'' \setminus F_{S}, P_2' = P_2'' \setminus F_{S}$,
where $\step(P_1,a, P_1'')$ and $\step(P_2,a, P_2'')$,
\item $j \neq \bot$, $P_2 = \emptyset$, $P_1' = \{ q_0^{i,j}\} $  and $P_2' = P_2'' \setminus F_{S}$, where
$\step(P_1,a, P_2'')$ and $q_0^{i}$ is the initial state of $\selectA_{i,j}$,
the automaton that checks that the slave automaton $\slaveA_i$ started at the current position
returns the value $j$, 
\item $j \neq \bot$, $P_2 \neq \emptyset$, 
$P_1' = (P_1'' \cup \{ q_0^{i,j'}\}) \setminus F_S$ and 
$P_2' = P_2''  \setminus F_S$, where
$\step(P_1, a, P_1'')$ and $\step(P_2, a, P_2'')$
\end{enumerate}
The intuitive descriptions are as follows: 
(a)~the first transition corresponds to a silent transition, and hence 
we compute the successor states of sets $P_1$ and $P_2$ and remove the accepting 
states (that correspond to automata that terminate);
(b)~the second transition is similar to the first case but here a new automaton 
that simulates the slave automaton is started, but since $P_2$ is empty we
compute the next $P_2'$ from the successor of $P_1$ according to $\step$ but after
removing the accepting states, and the new $P_1$ is the initial state of the simulating
automaton; and
(c)~the third transition is very similar to the first transition just that the initial
state of the simulating automaton is added to the $P_1'$.

\item the cost function is defined as 
$\cost(\quadruple{q}{j}{P_1}{P_2}, a, \quadruple{q'}{j'}{P_1'}{P_2'}) = j'$,
\item $F$ consists of $F_1= F_m \times (\{-n,\ldots, n\} \cup \{ \bot \}) \times 2^{Q_{S}} \times 2^{Q_{S}}$ 
and $F_2=Q_m \times (\{-n,\ldots, n\} \cup \{ \bot \}) \times 2^{Q_{S}} \times \emptyset$.
Intuitively, $F_1$ ensures that the acceptance condition of the master automaton is satisfied
and $F_2$ ensures that $P_2$ is empty infinitely often.
\end{enumerate}
The correctness follows from the construction.
\end{proof}

The automaton $\nonnestedA$ in Lemma~\ref{t:regular-non-nested} is constructed in polynomial space, which
means that $\nonnestedA$ can be represented implicitly, i.e., its exponential-size 
set of states is represented in a compact way and for each transition triple 
$(q,a,q')$ one can compute in polynomial time whether that triple is a transition of $\nonnestedA$
and what is its weight.

\begin{restatable}{remark}{FirstRemarkParametrizedComplexity}
\label{rem:firstParRemark}
The automaton $\nonnestedA$ from Lemma~\ref{t:regular-non-nested} has exponential size in $|\nestedA|$.
More precisely, the size of $\nonnestedA$ is exponential in the total size of slave automata of $\nestedA$, but only polynomial in
the size of  the master automaton of $\nestedA$.
\end{restatable}
\begin{proof}
The set of states of the automaton $\nonnestedA$ from 
is $Q = Q_m \times (\{-n,\ldots, n\} \cup \{ \bot \}) \times 2^{Q_{\selectA}} \times 2^{Q_{\selectA}}$, i.e.,
it is linear in the size of the master automaton of $\nestedA$. Moreover, the weights of $\nonnestedA$ are bounded by a constant $n$.
Thus, $\nonnestedA$ is polynomial in the size of the master automaton of $\nestedA$.
\end{proof}

Now, we show a simple lemma regarding weighted automata with silent moves.

\begin{restatable}{lemma}{SilSelective}
\label{silnce-simple-equivalence}
Let $f \in \{ \finf,\fsup,\fliminf,\flimsup\}$.
(1)~The emptiness problem for $\silent{f}$-automata is in $\NLOGSPACE$.
(2)~The universality problem for $\silent{f}$-automata is in $\PSPACE$.
\end{restatable}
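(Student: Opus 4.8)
The plan is to reduce every threshold question for $\silent{f}$-automata to a purely $\omega$-regular problem on the transition graph of $\aut$, exploiting that $f \in \{\finf,\fsup,\fliminf,\flimsup\}$ only cares about \emph{which} weights occur (or occur infinitely often). The starting observation is that, since the weights come from a finite set $\Gamma$, the quantity $\silent{f}(\pi)$ always lies in $\Gamma$: it is a minimum/maximum over the non-silent weights (for $\finf,\fsup$) or over the weights occurring infinitely often (for $\fliminf,\flimsup$). Hence $\valueL{\aut}(w) = \inf_{\pi \in \Acc(w)} \silent{f}(\pi)$ is an infimum over a subset of the finite set $\Gamma$ and is attained. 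Consequently $\valueL{\aut}(w) \le \lambda$ iff some accepting run $\pi$ on $w$ has $\silent{f}(\pi) \le \lambda$, where ``accepting'' means $\pi$ satisfies the \buchi{} condition of $\aut$ and, because $\silent{f}$ discards silent weights, takes infinitely many non-silent transitions.

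For part~(1) I would translate ``$\silent{f}(\pi) \le \lambda$'' into a condition on a lasso run $u \cdot v^{\omega}$, case by case: for $\fsup$, every non-silent transition has weight $\le \lambda$, so I restrict the transition graph to silent and $\le\lambda$ transitions and look for an accepting lasso whose cycle uses a non-silent transition; for $\finf$, some non-silent transition of weight $\le\lambda$ is used anywhere, plus an accepting lasso with a non-silent transition in its cycle; for $\fliminf$, the cycle of the lasso must contain a non-silent transition of weight $\le\lambda$ (and an accepting state); for $\flimsup$, the cycle must contain at least one non-silent transition and all of its non-silent transitions must have weight $\le\lambda$. In each case the existence of such a lasso is a conjunction of reachability and cycle-detection queries, verifiable by a nondeterministic search that guesses the lasso on the fly while storing only the current state, the base of the cycle, and a constant number of boolean flags, i.e. in $\NLOGSPACE$.

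For part~(2) I would use the same characterization to construct, in polynomial time, a nondeterministic $\omega$-automaton $\aut'$ over $\Sigma$ with $\lang(\aut') = \{\, w : \valueL{\aut}(w) \le \lambda \,\}$. The state space of $\aut'$ is that of $\aut$, at most doubled (a single latch bit is needed only for $\finf$, to record that a $\le\lambda$ transition has occurred), and its acceptance condition encodes ``$\silent{f} \le \lambda$'' together with the \buchi{} and infinitely-many-non-silent requirements: a (generalized) \buchi{} condition for $f \in \{\finf,\fsup,\fliminf\}$, and for $\flimsup$ the conjunction of a co-\buchi{} condition (``finitely many non-silent weights exceed $\lambda$'') with a \buchi{} condition, i.e. a one-pair Rabin condition. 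By the first paragraph, universality of the $\silent{f}$-automaton at threshold $\lambda$ coincides exactly with universality of $\aut'$. Universality of a nondeterministic $\omega$-automaton with (generalized) \buchi{} or Rabin acceptance is decidable in $\PSPACE$, since its complement is an $\omega$-automaton of at most exponential size whose emptiness can be checked on the fly while storing only polynomially many bits; hence universality of $\aut'$, and therefore of $\aut$, is in $\PSPACE$.

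The main obstacle is the $\flimsup$ case. Unlike the other three functions, its threshold condition ``eventually all non-silent weights are $\le\lambda$'' is not a plain \buchi{} condition, so $\aut'$ genuinely requires a Büchi-and-co-\buchi{} (Rabin) acceptance condition, and I must invoke the $\PSPACE$ bound for universality of this richer acceptance type rather than only for ordinary \buchi{} automata. A secondary point to thread carefully is the role of silent transitions: the requirement of infinitely many non-silent steps, without which $\silent{f}(\pi)$ is undefined, must be carried along both the lasso conditions of part~(1) and the acceptance condition of $\aut'$ in part~(2), so that the reduction stays faithful in the presence of silent moves.
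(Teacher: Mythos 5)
Your proof is correct, but it takes a genuinely different route from the paper's. The paper handles both parts with a single two-line reduction to ordinary (non-silent) weighted automata: it replaces every silent transition by a transition of weight $\lambda+1$ (for $\finf$, $\fliminf$) or $\lambda-1$ (for $\fsup$, $\flimsup$) and observes that, because accepting runs must contain infinitely many non-silent transitions, the threshold question ``$\valueL{\aut}(w)\leq\lambda$'' is unchanged by this substitution; it then simply cites the known $\NLOGSPACE$ emptiness and $\PSPACE$ universality bounds for $f$-automata (Theorem~\ref{t:oldResults}). You instead re-derive the bounds from first principles: for emptiness, an explicit lasso characterization of each of the four threshold conditions checked by an on-the-fly nondeterministic search; for universality, a translation of the threshold language into a Boolean $\omega$-automaton with generalized \buchi{} acceptance (or a one-pair Rabin condition for $\flimsup$) followed by the $\PSPACE$ bound for universality of such automata. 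Both arguments are sound, and your treatment of the ``infinitely many non-silent transitions'' requirement is careful and matches the paper's semantics. What the paper's reduction buys is brevity and reuse of existing results for weighted automata; what yours buys is self-containedness and an explicit identification of the $\omega$-regular conditions underlying each value function, at the price of having to separately justify the $\PSPACE$ universality bound for the Rabin-type acceptance needed in the $\flimsup$ case.
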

\begin{proof}
Given a $\silent{f}$-automaton $\aut$, where $f \in \InfVal$, 
we define the automaton $\aut^{\ell}$ as the $f$-automaton that results
from $\aut$ by substituting each silent transition by a transition 
of the weight $\ell$.
Observe that for every $\silent{\finf}$-automaton $\aut$
for every infinite word $w$ we have $\valueL{\aut}(w) \leq \lambda$
iff $\valueL{\aut^{(\lambda+1)}}(w) \leq \lambda$. 
The same equivalence holds for every $\silent{\fsup}$-automaton $\aut$
and its variant ${\aut^{(\lambda-1)}}$. 
Thus, the emptiness and universality problems
for $\silent{\finf}$-automata (resp. $\silent{\fsup}$-automata)  and $\finf$-automata 
(resp. ${\fsup}$-automata) coincide. 
Now, a run of a $\silent{\finf}$-automaton is accepting only if it contains
infinitely many non-silent transitions. Therefore, the above equivalences hold 
for $f \in \{ \fliminf, \flimsup\}$ and the corresponding problems coincide.
As the emptiness problem for ${f}$-automata is in $\NLOGSPACE$ we have (1).
The universality problem for ${f}$-automata is in $\PSPACE$, hence
 we have (2).
\end{proof}

Finally, we are ready to prove theorem characterizing complexity of decision
problem for newsted weighted automata whose slave automata are 
$\{\fmin,\fmax,\fBsum{B}\}$.

\begin{restatable}{theorem}{KeyLemmaConsequences}
\label{thm:dec1}
Let $g \in \{\fmin,\fmax,\fBsum{B}\}$. The following assertions hold: 
(1)~Let $f \in \{\finf,\fsup,\fliminf,\flimsup\}$.
The emptiness problem for non-deterministic $(f;g)$-automata is $\PSPACE$-complete.
The universality problem for non-deterministic $(f;g)$-automata is $\PSPACE$-hard 
and in $\EXPSPACE$.
(2)~The emptiness problem for non-deterministic $(\flimavg;g)$-automata is $\PSPACE$-complete
(3)~The universality problem for functional $(\flimavg;g)$-automata is $\PSPACE$-complete.
\end{restatable}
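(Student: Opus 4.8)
The plan is to derive all four bounds from the key reduction lemma (Lemma~\ref{t:regular-non-nested}) together with the size bookkeeping of Remark~\ref{rem:firstParRemark}. Since $\fmin$, $\fmax$, and $\fBsum{B}$ are regular value functions, every slave automaton of an $(f;g)$-automaton with $g \in \{\fmin,\fmax,\fBsum{B}\}$ is a regular weighted automaton, so Lemma~\ref{t:regular-non-nested} produces an equivalent non-nested $\silent{f}$-automaton $\nonnestedA$, constructible in polynomial space, whose weights lie in a fixed polynomially-representable range. By Remark~\ref{rem:firstParRemark} the state space of $\nonnestedA$ is exponential in $|\nestedA|$ but $\nonnestedA$ is represented implicitly, so membership of a triple $(q,a,q')$ in its transition relation and the associated weight are computable in polynomial time; functionality is preserved by the lemma. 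All $\PSPACE$ lower bounds (emptiness in (1),(2) and universality in (1),(3)) are inherited from Proposition~\ref{p:pspacehard}, using that deterministic nested automata are functional, so they also cover the functional case of (3). The recurring point for the upper bounds is that the underlying non-nested procedure must be cast in \emph{logarithmic} space, so that when run over the implicit exponential $\nonnestedA$ it lands in $\PSPACE$ rather than merely $\EXPTIME$.

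For part (1), with $f \in \{\finf,\fsup,\fliminf,\flimsup\}$, I would invoke Lemma~\ref{silnce-simple-equivalence}. For emptiness, part~(1) of that lemma places emptiness of $\silent{f}$-automata in $\NLOGSPACE$ in the explicit size; executed over the implicit $\nonnestedA$, whose states and counters fit in polynomially many bits, this yields a polynomial-space procedure, so emptiness of $(f;g)$-automata is in $\PSPACE$, matching the lower bound and giving completeness. For universality, part~(2) of the same lemma gives $\PSPACE$ in the explicit size, which over the exponential-size $\nonnestedA$ becomes $\EXPSPACE$ in $|\nestedA|$; with the $\PSPACE$ lower bound this yields the stated ``$\PSPACE$-hard and in $\EXPSPACE$''.

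For part (2) the value function $\silent{\flimavg}$ is not covered by Lemma~\ref{silnce-simple-equivalence}, so I would argue emptiness directly by a cycle search. A word $w$ with $\valueL{\nonnestedA}(w) \le \const$ exists iff $\nonnestedA$ has a reachable lasso whose cycle meets every generalized \buchi{} set, carries at least one non-silent transition, and has mean non-silent weight at most $\const$. Shifting each non-silent weight by $-\const$ and each silent weight to $0$, this becomes the existence of a reachable, accepting, nonpositive simple cycle containing at least one non-silent transition. Such a witness can be guessed on the fly, recording the anchor state, the current state, a step counter bounded by $|\nonnestedA|$, and the running shifted sum; since the weights are polynomially representable and a simple cycle has length at most $|\nonnestedA|$, all of these fit in polynomially many bits of $|\nestedA|$, so the search runs in nondeterministic polynomial space. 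By Savitch's theorem this is in $\PSPACE$, and with Proposition~\ref{p:pspacehard} we obtain $\PSPACE$-completeness.

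For part (3) the contrast with Theorem~\ref{thm:simple}(2), where universality of non-deterministic $(\flimavg;g)$-automata is \emph{undecidable}, shows that functionality is exactly what must be exploited. Since $\nonnestedA$ is functional, all accepting runs on a word agree on their value, so $\valueL{\nonnestedA}(w) > \const$ holds precisely when $w$ carries an accepting run of value strictly above $\const$ or carries no accepting run at all; universality therefore fails iff such a witness word exists. The first kind of witness is dual to part (2) — a reachable accepting lasso whose cycle has mean non-silent weight strictly above $\const$ — and is detectable in $\PSPACE$ by the same guess-and-check search, now seeking a strictly positive shifted cycle. Functionality is what replaces the problematic ``for all accepting runs'' quantifier (the source of undecidability in the non-deterministic case) by this single-witness search; since $\PSPACE$ is closed under complement, non-existence of a witness, i.e.\ universality, is decidable in $\PSPACE$, and the lower bound from Proposition~\ref{p:pspacehard} gives completeness. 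I expect the main obstacle to be exactly this part (3) upper bound: keeping the universality test inside $\PSPACE$ despite the exponential $\nonnestedA$ forces the functional reduction to a non-emptiness search, and the delicate ingredient is the treatment of rejected words (value $\infty$), which must be folded into the same reachability-and-simple-cycle framework rather than handled by an expensive Büchi complementation.
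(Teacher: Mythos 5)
Your proposal is correct and follows essentially the same route as the paper: lower bounds from Proposition~\ref{p:pspacehard}, and upper bounds by running the non-nested decision procedures of Lemma~\ref{silnce-simple-equivalence} (resp.\ the lasso search of Lemma~\ref{silent-lim-avg}, which your part~(2) simply inlines) over the implicitly represented exponential-size $\silent{f}$-automaton from Lemma~\ref{t:regular-non-nested}. For part~(3) the paper phrases the argument as a reduction to emptiness of the weight-negated functional automaton plus a check that every word is accepted, whereas you search directly for a high-mean witness cycle; this is the same exploitation of functionality, and your explicit worry about handling rejected words within polynomial space is a point the paper itself only asserts without elaboration.
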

\begin{proof}
$\PSPACE$-hardness in (1), (2) and (3) follows from Proposition~\ref{p:pspacehard}.
We will discuss containment separately:

(1): Let  $f \in \{\finf,\fsup,\fliminf,\flimsup\}$ and 
$g \in \{\fmin,\fmax,\fBsum{B}\}$.
Due to Lemma~\ref{t:regular-non-nested} 
every $(f;g)$-automaton $\aut$ is equivalent to some $\silent{f}$-automaton $\aut'$ of exponential
size in $|\aut|$. By Lemma~\ref{silnce-simple-equivalence},
the emptiness problem for $\silent{f}$-automata is  in $\NLOGSPACE$.
The construction from Lemma~\ref{t:regular-non-nested} 
implies that the automaton $\aut'$ can be represented implicitly, i.e.,
given two states $q,q'$ the existence and weight of the transition 
$(q,a,q')$ can be decided in polynomial time. Therefore,
the emptiness problem for $(f;g)$-automata is  in $\PSPACE$.

By Lemma~\ref{silnce-simple-equivalence}, 
and the universality problem for $\silent{f}$-automata 
is in $\PSPACE$ and $|\aut'|$ is of exponential size in $|\aut|$, 
hence we have the universality problem for $(f;g)$-automata is in $\EXPSPACE$.

(2): Lemma~\ref{t:regular-non-nested} state that $(\flimavg;g)$-automata
are equivalent to $\silent{\flimavg}$-automata, which
enjoy decidability of the emptiness problem (Lemma~\ref{silent-lim-avg}) in $\NLOGSPACE$.
As in \emph{(1)}, the automaton $\aut'$ can be represented implicitly, hence
the emptiness problem for $(\flimavg;g)$-automata is in $\PSPACE$.

(3): The universality problem for functional $(\flimavg;g)$-automata reduces
to the emptiness problem for functional $(\flimavg;g)$-automata. It suffices 
to  (1)~first check that every word has an accepting run, which can be done in polynomial space,
(2)~construct a $(\flimavg;g)$-automaton $\nestedA'$ by 
taking additive inverses of all weights in all slave automata of a given nested weighted automaton.
The automaton $\nestedA$ satisfies the universality problem with threshold $\lambda$
iff it satisfies (1) and the automaton $\nestedA'$ from (2) does not satisfy the emptiness problem
with threshold $-\lambda$.
Therefore, the universality problem for functional $(\flimavg;g)$-automata
is in $\PSPACE$.
\end{proof}

\begin{remark}
Assume that the total size of slave automata is bounded. Then, by Remark~\ref{rem:firstParRemark}, 
the size of the automaton $\nonnestedA$ is polynomial in the size of the master automaton of a given nested automaton.
In consequence, the emptiness problem for automata from (1) and (2) from Theorem~\ref{thm:dec1} becomes $\PTIME$
and the universality problem  for automata from (1) from Theorem~\ref{thm:dec1} becomes $\PSPACE$-complete. 
\end{remark}

Theorem~\ref{thm:dec1} covers the case for all classes of slave automata other 
than $\fsum$- and $\fsum^+$-automata, which we consider in the following
two subsections.

\subsection{Undecidability Results for Slave $\fsum$ Automata}
\label{s:undecidableNested}
In this section we study $(f; \fsum)$-automata and
we present a crucial negative result. 

Note that for weighted automata 
with the value function from $\FinVal$ or $\InfVal$, 
 the emptiness problem is decidable (for non-deterministic automata);
and all decision problems are decidable for deterministic automata.
In sharp contrast we establish that for deterministic $(\fsup;\fsum)$-automata 
the emptiness problem is undecidable.
The proof is a reduction from the halting problem of a two-counter (Minsky)
machine to the emptiness problem.
The key idea is to ensure that words that encode valid computations of the 
Minsky machine have value~0; and all invalid computations have value 
strictly greater than~0.
Basically, we need to check consistency of values of each counter at each step, which is done as follows.
The task of the master automaton is to ensure that tests on the counters $c_0, c_1$
are consistent.
The master automaton uses several slave automata to track the exact values of
the counters.
Each slave automaton operates on an alphabet which is increment and decrement
for the counters $c_0, c_1$, as well as zero and positive test, and for each counter we have 
three slave automata.
For positions $i<j$, let \emph{$c_0$-balance} (resp. $c_1$-balance) between position $i$ and $j$ denote
the difference in the number of increments and decrements of the counter $c_0$ (resp. $c_1$) between $i$ and $j$.
For zero tests of a counter, two slave automata are invoked: 
the first automaton (resp. second automaton) increments (resp. decrements) 
with every increment operation on the counter and decrements (resp. increments) 
with every decrement  operation on the counter and terminates with the value at 
the position of the next zero test.
Intuitively, the two automata compute $c_0$-balance and the opposite (the additive inverse) of $c_0$-balance between
two consecutive zero tests.
Given the zero test of the current position is satisfied, both automata return
zero iff the next zero test is also satisfied, otherwise one of them return a 
positive value.
For positive tests of a counter we use the third slave automaton to compute 
the $c_0$-balance plus~1 between the current position and the next zero test of $c_0$.
The $c_0$-balance plus~1 does not exceed zero iff the value of $c_0$ at the current position is positive.
We repeat a similar construction for $c_1$.
The construction of slave automata does not depend on the given two-counter machine,
therefore the reduction works even in the presence of a constant bound on the size of slave automata.

This establishes the undecidability for emptiness of $(\fsup;\fsum)$-automata,
and the proof also holds for $(\flimsup;\fsum)$-automata.
Also observe that since we establish the result for deterministic automata, 
we can take opposites of weights and change $\fsup$ (resp. $\flimsup$) to 
$\finf$ (resp. $\fliminf$) and the emptiness problem to the universality problem.

\begin{restatable}[Crucial undecidability result]
{theorem}{SupSumUndecidable}
\label{thm:crucial-undec}
(1)~The emptiness problem for deterministic $(\fsup;\fsum)$- and $(\flimsup;\fsum)$-automata 
is undecidable.
(2)~The universality problem for deterministic $(\finf;\fsum)$- and $(\fliminf;\fsum)$-automata
is undecidable.
\end{restatable}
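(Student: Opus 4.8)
The plan is to reduce the halting problem for deterministic two-counter (Minsky) machines to the emptiness problem, arranging that a word has value at most $0$ exactly when it encodes a consistent halting computation; since halting is undecidable, so is emptiness with threshold $0$. Fix a machine $M$ with counters $c_0,c_1$ and a halt instruction, and take an alphabet whose letters name the primitive operations: increment and decrement of each counter, and a zero-test and a positive-test of each counter. A word then lists the instructions executed together with the branch chosen at every test. The master automaton $\masterA$ deterministically simulates the \emph{control flow} of $M$ only: from the current instruction and the current letter it moves to the unique program successor, and it has no transition when the letter is inconsistent with $M$. Its B\"uchi condition is ``visit the halt instruction infinitely often,'' so it accepts exactly the words $\gamma^\omega$ where each block $\gamma$ is a control-flow-valid run from the initial instruction to halt, followed by a short phase that decrements both counters back to $0$ and zero-tests them (this makes the blocks self-contained at their boundaries). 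On letters that are not counter-tests the master takes a silent transition (a dummy slave on the empty word); on each zero- and positive-test it spins off the slaves below. Since the master cannot see the counter values, among all control-flow-valid halting words at most one is counter-consistent, namely the genuine run of $M$.

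The slaves depend only on the counter alphabet and \emph{not} on $M$, hence are of constant size, and they certify counter consistency. For a zero-test of $c_0$ the master invokes two $\fsum$-slaves that read on to the next $c_0$-zero-test --- a deterministic, prefix-free stopping condition --- one weighting $c_0$-increments by $+1$ and $c_0$-decrements by $-1$, the other using the opposite weights; they return the $c_0$-balance between the two consecutive zero-tests and its additive inverse. When both tests are genuine the balance is $0$ and both return $0$; any discrepancy forces one of the pair to return a strictly positive integer. For a positive-test of $c_0$ (which in particular guards every decrement, keeping the counter non-negative) a third slave, again running to the next $c_0$-zero-test, returns the $c_0$-balance \emph{plus} $1$: since the counter falls from its current value $v$ to $0$ this balance is $-v$, so the value $-v+1$ is at most $0$ exactly when $v\ge 1$, i.e.\ when the test is genuine. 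A symmetric triple handles $c_1$. Aggregating with $\fsup$, a word has value at most $0$ iff every slave returns at most $0$, i.e.\ every zero-test balance vanishes and every positive-test is genuine; together with the initial counters being $0$ this is precisely counter-consistency. The apparent circularity --- the positive-test slave reads off $-v$ only when the zero-tests are correct --- is harmless, because the balance slaves force zero-test consistency independently and all slaves are combined simultaneously.

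It follows that $M$ halts iff some word has value at most $0$. If $M$ halts, its genuine run yields a block $\gamma$ on which every slave returns a value $\le 0$, so $\gamma^\omega$ has value $0$. If $M$ does not halt, the genuine run never reaches halt, so every control-flow-valid halting block must deviate from it at some test and is therefore counter-inconsistent, contributing a slave value $\ge 1$; hence every accepted word has value $\ge 1>0$. This settles $(\fsup;\fsum)$. The \emph{same} periodic encoding settles $(\flimsup;\fsum)$: for halting $M$ the consistent word $\gamma^\omega$ gives $\flimsup=0$, while for non-halting $M$ each of the infinitely many blocks contributes a strictly positive slave value, so $\flimsup\ge 1$. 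For part~(2) I exploit determinism: negating all slave weights turns each $\fsum$-value $v$ into $-v$ and turns $\fsup$ (resp.\ $\flimsup$) into $\finf$ (resp.\ $\fliminf$), so the value of the modified automaton is the negation of the original. Thus ``$\valueL{\nestedA}(w)\le -1$ for all $w$'' for the negated $(\finf;\fsum)$-automaton is literally the complement of the emptiness instance above; as decidability is closed under complement, universality is undecidable for deterministic $(\finf;\fsum)$- and $(\fliminf;\fsum)$-automata as well.

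The main obstacle is the quantitative gadget design: forcing a \emph{single} faulty counter operation to produce a \emph{strictly positive} return value while keeping \emph{all} returns $\le 0$ on a consistent computation, and achieving this with slaves whose size is independent of $M$ (so the undecidability persists under a constant bound on the slave automata). The delicate ingredients are the ``$+1$'' offset that rewrites the strict test $v>0$ as the threshold-checkable $-v+1\le 0$; the prefix-free termination of each slave exactly at the next zero-test, which keeps the nested automaton deterministic; and the passage to $\flimsup$ and $\fliminf$, where a one-off error would be invisible in the limit --- the periodic encoding $\gamma^\omega$ is precisely what pushes every potential inconsistency into the tail and lets one construction serve all four value functions.
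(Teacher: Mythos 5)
Your proposal follows essentially the same route as the paper's proof: a reduction from Minsky-machine halting in which constant-size slave $\fsum$-automata certify counter consistency (a balance/anti-balance pair running to the next zero-test for each zero-test, and a balance-plus-one slave for each positive-test), aggregation by $\fsup$ so that value $\le 0$ characterizes genuine computations, a periodic encoding to carry the argument over to $\flimsup$, and weight negation plus determinism to dualize to universality for $\finf$ and $\fliminf$. The one detail the paper handles that you gloss over is that a single test position must spawn several slaves while the model permits only one slave invocation per transition; the paper pads each configuration with a fixed number of extra letters (e.g., \$ symbols) to create enough invocation points, a routine fix your single-letter-per-operation encoding would also need.
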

\begin{proof}[Proof of (1) from Theorem \ref{thm:crucial-undec}]
\newcommand{\machine}{{\cal M}}
\newcommand{\zeroTestOne}{\mathrm{c_1=0}}
\newcommand{\zeroTestTwo}{\mathrm{c_2=0}}
\newcommand{\posTestOne}{\mathrm{c_1>0}}
\newcommand{\posTestTwo}{\mathrm{c_2>0}}
Given a Minsky machine $\machine$, we construct a deterministic $(\fsup;\fsum)$-automaton $\nestedA$ 
that accepts infinite words of the form 
$w_1 \# w_2 \# \ldots $. Moreover, the value of the word $w_1 \# w_2 \# \ldots $ is $0$
iff each subword $w_i$ encodes a valid accepting computation of $\machine$.
As the problem, given a Minsky machine, does it have an accepting computation is undecidable,
we conclude that the emptiness problem for deterministic $(\fsup;\fsum)$-automata (resp. $(\flimsup;\fsum)$-automata)
is undecidable.

A Minsky machine $\machine$ is a finite automaton augmented with two counters $c_1, c_2$.
The counters can be incremented, decremented and tested  whether they are zero 
or positive. 
The transitions of $\machine$ depend on the values of counters, namely,
whether they are equal zero. That is,
each transition has the following form $(q, s, t) \rightarrow (q', v_1, v_2)$,
where $s \in \{ \zeroTestOne, \posTestOne\}, t \in \{ \zeroTestTwo, \posTestTwo\}$ and $v_1, v_2 \in \{ -1, 0,1\}$.
E.g. $(q,\zeroTestOne,\posTestTwo) \rightarrow (q', +1, -1)$
means that if the machine is in the state $q$, the value 
of $c_1$ is $0$ and $c_2$ greater than $0$,
then the next state is $q'$, $c_1$ is incremented and $c_2$ is decremented.

We define two notions for Minsky machines, a \emph{run} and a \emph{computation}. 
A \emph{run} of a Minsky machine $\machine$ is a sequence 
$(q_0, 0,0), (q_1, \alpha_1, \beta_1), \ldots, (q_n, \alpha_n, \beta_n)$ such that
for every $i < n$ there is a transition of $\machine$
$(q_i, s, t) \rightarrow (q_{i+1}, v_1, v_2)$ such that
$\alpha_i$ satisfies $s$, $\beta_i$ satisfies $t$, 
and $\alpha_{i+1} = \alpha_i + v_1$, $\beta_{i+1} = \beta_i + v_2$.
A run is \emph{accepting} iff its last element is $(q_F, 0,0)$.
A \emph{computation} of $\machine$ is a sequence of elements
$Q \times \{ \zeroTestOne, \posTestOne\} \times \{ \zeroTestTwo, \posTestTwo\} \times \{ -1,0,1\} \times \{ -1,0,1\}$ called \emph{configurations}.
A computation 
$(q_0, \zeroTestOne, \zeroTestTwo, 0, 0),$ $(q_1, s_1, t_1, x_1, y_1),$ $\ldots,$ $(q_n, \zeroTestOne, \zeroTestTwo, x_n, y_n)$
is \emph{valid} iff there is an accepting run 
$(q_0, 0,0), \ldots, (q_n, \alpha_n,\beta_n)$
such that for every $i \in \{0,\ldots, n\}$,
$\alpha_i = \sum_{j=0}^{i} x_j$ and 
$\beta_i = \sum_{j=0}^{i} y_j$.

Consider a valid computation $\eta$ and the corresponding accepting run $\pi$.
For positions $i<j$, let \emph{$c_1$-balance} (resp. $c_2$)-balance between position $i$ and $j$ (in $\eta$) denote
the difference in the number of increments and decrements of $c_1$ (resp. $c_2$) between $i$ and $j$.
Since the initial value of the counters is $0$, the value of a counter $c_1$ (resp. $c_2$) in $\pi[i]$
is precisely its $c_1$-balance (resp. $c_2$-balance) between positions $1$ and $i$. 
Thus, for $p \in \{1,2\}$, a zero test (non-zero test) of $c_p$ at the position $i$ is valid iff $c_p$-balance between positions 
$1$ and $i$ is $0$ (is strictly positive).

Consider a computation $\eta$ of a Minsky machine $\machine$.
If it is invalid then there is a first position in $\eta$ such that the corresponding sequence
over $Q \times \N \times \N$ is not a run. 
There are two possible reasons for that:
(i) $\machine$ has no transition consistent with a step from $\eta[i]$ to $\eta[i+1]$ ,
(ii) the configuration at $\eta[i]$ is inconsistent with the current values of $c_1, c_2$,
i.e., a zero or a non-zero test is inconsistent with the actual value of a counter.
A Boolean automaton can check whether the computation is invalid because of (i).
We show how to check (ii), i.e., validity of zero and non-zero tests, 
 using a nested weighted automaton.  

Let $p \in \{1,2\}$.
First, we check validity of zero tests on $c_p$. All zero tests on $c_p$ are valid iff
$c_p$-balance between any two consecutive zero tests is zero.
To check that this holds, 
the nested weighted automaton starts at each position $i$ with a zero test two deterministic slave $\fsum$-automata: ${\cal A}_{\zeroTestOne}^{+}, {\cal A}_{\zeroTestOne}^{-}$. 
The automaton ${\cal A}_{\zeroTestOne}^{+}$ computes $c_p$-balance between $i$ and the next zero test of $c_p$;
it increments (decrements) its value whenever 
$c_p$ is incremented (decremented), and it terminates at the next zero test of $c_p$. 
The automaton ${\cal A}_{\zeroTestOne}^{-}$ does the opposite,
i.e., it computes the additive inverse of $c_p$-balance between $i$ and the next zero test of $c_p$.
The values of these automata are inverses of each other and 
the maximum of their values is the absolute value of $c_p$-balance.
Hence, the maximum of their values is less-or-equal to zero iff $c_p$-balance 
between $i$ and the next zero test of $c_p$ is $0$.
Thus, the values of all slave automata ${\cal A}_{\zeroTestOne}^{+}, {\cal A}_{\zeroTestOne}^{-}$
are less-or-equal to zero if and only if all zero tests of $c_p$ are valid.

Second, we check that non-zero tests are valid. To do that, the automaton starts at every position $i$ with a
non-zero test a third slave $\fsum$-automaton ${\cal A}_{\posTestOne}$ that first increments its value to $1$ and then
computes $c_p$-balance between $i$ and the next zero test of $c_p$.
The value of $c_p$ at the position $i$ is strictly greater than $0$ iff
$c_p$-balance between the position $i$ and the next position at which $c_p$ is $0$
does not exceed $-1$. 
Provided that verifying zero tests succeeds, 
the value of ${\cal A}_{\posTestOne}$ is less-or-equal to $0$
iff the non-zero test at the position $i$ is valid.

The value of the nested weighted automaton does not exceed $0$ if and only if
the values of all slave automata are less-or-equal to $0$, which holds precisely
when all zero and non-zero tests on $c_p$ are valid.
In the above construction up to four automata has to be started
at any configuration, while nested weighted automata can start at most one slave automaton at each step.
However, we can  encode configurations by some fixed number of letters.
E.g. $c \ \$\ \$ \  \$	\ \$ $ where $c$ is a letter that fully encodes a configuration
$(q, \alpha, \beta, x, y)$ and $\$ $ letters are used only to start enough slave automata.
It follows that $\nestedA$ accepts a word
$w_1 \# w_2 \# \ldots $ and assigns it
the value $0$ iff each word $w_i$ encodes an valid accepting computation of $\machine$.

Observe that the same automaton, $\nestedA$, considered as $(\flimsup;\fsum)$-automaton
returns the same result. Indeed, if a given Minsky machine does not have an accepting computation,
each accepted word will have positive value. On the other hand, 
if there is an accepting computation $w$, 
the value of $(\fsup;\fsum)$-automata and  $(\flimsup;\fsum)$-automata
the word $(w \#)^{\omega}$ coincides, hence it is $0$.
\end{proof}

\begin{proof}[Proof of (2) from Theorem \ref{thm:crucial-undec}]
The universality problem for deterministic $(\finf;\fsum)$-automata is the dual of 
the emptiness problem for deterministic $(\fsup;\fsum)$-automata. 
Indeed, consider a deterministic $(\finf;\fsum)$-automaton $\nestedA$ 
and the nested weighted automaton $\nestedA'$ that results from taking inverses
of all weights in $\nestedA$ and changing its value function to $\finf$. 
One can easily check that for every word $w$, 
the weight of $w$ assigned by $\nestedA$ is $x$, then
$\nestedA'$ assigns to $w$ the weight $-x$.
\end{proof}

\subsection{Decidability Results for Slave $\fsum$- and $\fsum^+$-Automata}
\label{s:remainingAutomata}
We now establish the remaining decidability results,
namely, for slave automata with $\fsum^+$ value function, and
emptiness for $(\finf;\fsum)$-automata and $(\fliminf;\fsum)$-automata.
In contrast to the reduction of Lemma~\ref{t:regular-non-nested}, for example, 
$(\flimavg;\fsum^+)$-automaton cannot be reduced to weighted $\silent{\flimavg}$-automata 
(Example~\ref{ex:avg-resp-time}).

\smallskip\noindent{\em Intuitive proof ideas.}
For $(f;\fsum^+)$-automata, for $f \in \InfVal \setminus \{\flimavg\}$, 
we show that the decision problems can be reduced to the bounded sum value function;
and then derive the decidability results from Theorem~\ref{thm:dec1}.
The reductions are polynomial in the size of the master automaton.
For $(\finf;\fsum)$-automata we show the emptiness problem is 
decidable and the main argument is a reduction to the emptiness 
of non-deterministic weighted automata with $\fsum$ value function.
The constructed automaton is exponential in the size of a nested automaton, but only polynomial
in the size of the master automaton, i.e., if the total size of slave automata and the threshold are bounded by a constant.
We summarize the results in the following theorem.

\begin{restatable}{theorem}{Additional} 
(1)~For $f \in \{ \finf, \fliminf \}$, the emptiness problem for $(f;\fsum)$-automata is $\PSPACE$-complete.
(2)~For $f \in \{ \fsup, \flimsup \}$, the universality problem for functional $(f;\fsum)$-automata is $\PSPACE$-complete.
(3)~For $f \in \{ \finf,\fsup,\fliminf,\flimsup\}$, 
the emptiness problem for $(f;\fsum^+)$-automata is $\PSPACE$-complete, and
the universality problem for $(f;\fsum^+)$-automata is $\PSPACE$-hard and in 
$\EXPSPACE$.
\label{thm:additional}
\end{restatable}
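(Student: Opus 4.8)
```latex
\textbf{Proof proposal for Theorem~\ref{thm:additional}.}
The plan is to handle the three parts separately, reducing each to results already
established in Theorem~\ref{thm:dec1} or to known facts about non-nested
$\fsum$-automata (Theorem~\ref{t:oldResults}). The unifying theme is that for the
value functions in question, either the slave automata can only usefully return
\emph{bounded} values (so we fall back on $\fBsum{B}$ and
Lemma~\ref{t:regular-non-nested}), or the decision problem can be translated into
a single non-nested $\fsum$-automaton whose emptiness is polynomial-time decidable.

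For part~(3), the key observation is that with an $\fsum^+$ slave automaton every
returned value is non-negative and, more importantly, the partial absolute sums
are monotone. For $f \in \{\finf,\fliminf\}$ the emptiness question with threshold
$\lambda$ only cares whether some slave automaton can return a value
$\le \lambda$; for $f \in \{\fsup,\flimsup\}$ the universality/emptiness question
with threshold $\lambda$ only cares whether \emph{all} returned values stay
$\le \lambda$. In either case a slave automaton that would accumulate an absolute
sum exceeding $\lambda$ can be truncated: I would replace each $\fsum^+$ slave
automaton $\slaveA_i$ by the $\fBsum{B}$-automaton with $B = \lambda+1$ (or
$\lambda$, adjusting the strictness as in Lemma~\ref{silnce-simple-equivalence}),
arguing that this substitution preserves the truth of the decision question at
threshold $\lambda$ because any value strictly above the bound is indistinguishable
from the bound for the purpose of the comparison. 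Since $\fBsum{B}$ is a regular
value function, Theorem~\ref{thm:dec1}(1) then gives $\PSPACE$-completeness for
emptiness and the $\PSPACE$/$\EXPSPACE$ bounds for universality, with the
$\PSPACE$-hardness coming from Proposition~\ref{p:pspacehard}. The reduction is
polynomial in the master automaton, matching the claimed parametric behaviour.

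For part~(2), the universality problem for functional $(f;\fsum)$-automata with
$f \in \{\fsup,\flimsup\}$ asks whether every word has value $\le \lambda$; by
functionality the value is run-independent, so I can first verify in polynomial
space that every word admits an accepting run, and then reduce the remaining
question to the same bounded-sum treatment as above: a supremum over slave values
is $\le \lambda$ precisely when no slave automaton ever returns more than
$\lambda$, which again is insensitive to truncation at $B=\lambda$. This yields the
$\PSPACE$ upper bound, and Proposition~\ref{p:pspacehard} supplies the lower bound.
The functionality hypothesis is essential here, since without it the undecidability
of part~(1) of Theorem~\ref{thm:crucial-undec} would intervene.

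Part~(1) is the main obstacle. Here the slave automata are genuine $\fsum$-automata
(with possibly negative weights), so no truncation is available and
Lemma~\ref{t:regular-non-nested} does not apply. For $f \in \{\finf,\fliminf\}$ the
emptiness question with threshold $\lambda$ reduces to: does there exist an
accepting word on which \emph{some} slave automaton returns a value $\le \lambda$
while all other slave automata still accept? The plan is to build a single
non-deterministic $\fsum$-automaton that guesses one distinguished invoked slave
automaton, simulates its run to compute the candidate small value, and in parallel
verifies (with a Boolean power-set construction tracking the states of all
concurrently running slave automata) that every other invoked slave automaton
terminates in an accepting state. The distinguished slave's sum becomes the weight
sequence of the constructed automaton, so its emptiness at threshold $\lambda$
coincides with that of $\nestedA$; emptiness of non-deterministic $\fsum$-automata
is decidable in polynomial time by Theorem~\ref{t:oldResults}(1). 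The delicate
point, and where I expect the real work to lie, is bounding how many slave automata
can be simultaneously active and still relevant: unlike the $\finf$/$\fsup$ cases
of Section~\ref{s:regularslave}, negative weights mean a slave can dip below
$\lambda$ and recover, so I must argue that the power-set over slave states
(exponential in the total slave size but polynomial in the master) faithfully
tracks acceptance without needing to remember partial sums of the non-distinguished
automata. Once this tracking is justified, the construction is exponential in the
slaves and polynomial in the master, giving $\PSPACE$ overall and $\PTIME$ under a
constant bound on slave size, with $\PSPACE$-hardness from
Proposition~\ref{p:pspacehard}.
```
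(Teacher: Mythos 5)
Your part~(1) is essentially the paper's argument: guess one distinguished invoked slave, let its $\fsum$ run supply the weight sequence of a single non-nested $\fsum$-automaton over finite words, and track all other invoked slaves with a Boolean power-set component that only checks acceptance. The detail you leave implicit is how the infinite word is finitized: the paper encodes it as a lasso ($w\,i\,v\,\#\,u'\,\#\,u$ standing for $wvu'u^{\omega}$ in the $\finf$ case, and $w\,i\,v\,\#\,u$ standing for $w(vu)^{\omega}$ in the $\fliminf$ case, where the distinguished slave must be re-invoked at the periodic positions). Your part~(3) also begins exactly as the paper does, by truncating each $\fsum^+$ slave to a $\fBsum{\const+1}$-automaton, but the complexity accounting has a gap: $B=\const+1$ is exponential in the binary encoding of the threshold, so the truncated nested automaton already has exponential size, and a direct appeal to Lemma~\ref{t:regular-non-nested} and Theorem~\ref{thm:dec1} yields only $\EXPSPACE$ for emptiness and doubly-exponential space for universality. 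The paper spends the second half of its proof of~(3) removing precisely this extra exponential, by constructing a reduced $\silent{f}$-automaton $\nonnestedA^-$ that in the infimum case simulates only one weighted slave at a time (chosen non-deterministically) and in the supremum case discards dominated slave runs; without some such argument your claimed $\PSPACE$ bound does not follow.

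The genuine error is in part~(2). There the slaves are true $\fsum$-automata with possibly negative weights, and your claim that the condition ``every returned value is at most $\const$'' is insensitive to truncation is false, because $\fBsum{B}$ is defined via \emph{partial absolute} sums and its soundness as a surrogate for $\fsum$ rests on the monotonicity of partial sums, which holds for $\fsum^+$ but not for $\fsum$. Concretely, truncating at $B=\const$ caps every returned value at $\const$, so the truncated automaton is trivially universal whenever all words are accepted; truncating at $B=\const+1$ instead turns a slave run whose partial sums temporarily leave $[-\const-1,\const+1]$ but whose final sum is at most $\const$ into a spurious counterexample. The paper proves~(2) by a different route: negate all slave weights, which turns a functional $(\fsup;\fsum)$-automaton into a functional $(\finf;\fsum)$-automaton, check in polynomial space that every word is accepted, and then invoke the emptiness procedure of part~(1) at threshold $-\const$. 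You should route part~(2) through part~(1) in this way; functionality is what makes the negation argument work, not what makes truncation safe.
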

\begin{proof}[Proof of (1) from Theorem \ref{thm:additional}]
$\PSPACE$-hardness follows from Proposition~\ref{p:pspacehard}.
For containment in $\PSPACE$, let $\nestedA = \langle \masterA; \finf;  \slaveA_1, \ldots, \slaveA_k \rangle$
be a nested weighted automaton  $(\finf;\fsum)$-automaton.
We construct a $\fsum$-automaton over finite words $\nonnestedA$ 
such that the emptiness problem for $\nestedA$ and $\nonnestedA$ coincide.
The automaton $\nonnestedA$ works over words over the alphabet $\Sigma \cup \{ \#, 1, \ldots k\}$
of the form $w i v \# u' \# u$, where $w,v,u',u \in \Sigma^*$ and $i \in \{1, \ldots k\}$, and the value of its run, if it is accepting,
 is the value of the slave automaton $\slaveA_i$ on the word $v$.
The automaton $\nonnestedA$ consists of two components.
The first component $\aut_1$, a Boolean one whose all weights are $0$, ensures that $\nestedA$ has an accepting run 
on $w v u' u^{\omega}$ such that the slave automaton started at the beginning of the word
$v$ is $\slaveA_i$ and $\slaveA_i$ accepts the word $v$.
The second component, $\aut_2$, is a weighted one and it computes the value of $\slaveA_i$
on $v$. Clearly, the size of $\aut_2$ is proportional to the size of $\slaveA_i$.
Observe that the value of each run of $\nestedA$ depends only on a finite prefix of a word, i.e.,
for each run of $\nestedA$ there is a finite prefix $wvu'u$
such that the value of that run equals $\valueL{\aut}(w i v \# u' \# u)$. 
It follows that the emptiness problem for $\nestedA$ and $\nonnestedA$ coincide.
The construction of $\aut_1$ is similar to the construction from Lemma~\ref{t:regular-non-nested}, hence
the emptiness problem for $\nonnestedA = \aut_1 \times \aut_2$ can be solved in 
polynomial space w.r.t. $|\nestedA|$.

Assume that $\nestedA$ is a  $(\fliminf;\fsum)$-automaton. 
We carry out virtually the same construction
of a $\fsum$-automaton over finite words $\nonnestedA$.
The automaton $\nonnestedA$ accepts words $w i v \# u$ such that 
$\slaveA_i$ accepts $v$ and $\nestedA$ has an accepting run on $w (vu)^{\omega}$ 
at which the slave automaton 
invoked at the positions $w, wvu, \dots, w(vu)^k,\dots$ is $\slaveA_i$.
The value of $\nonnestedA$ on an accepted word $w i v \# u$ is
the value of $\slaveA_i$ on $v$. 
It follows that if $\nonnestedA$ has a run of value $\lambda$ on 
$w i v \# u$, $\nestedA$ has a run of the value $\lambda$ on $w (vu)^{\omega}$.
Conversely, if $\nestedA$ has a run of the value $\lambda$, 
there is a reachable state $q$ of the master automaton $\masterA$ of $\nestedA$ and
a slave automaton $\slaveA_i$ such that infinitely often 
$\masterA$ in the state $q$ invokes $\slaveA_i$ which returns the value $\lambda$.
Thus, there are words $v,u$ such that $\slaveA_i$ on $v$ returns the value $\lambda$
and $\masterA$ upon reading $vu$ returns to the state $q$.
Moreover, there is a word $w$ such that $\masterA$ reaches $q$ from the initial
state upon reading $w$. 
Therefore, the value of $w (vu)^{\omega}$ in $\nestedA$ is at most $\lambda$.
Hence, the emptiness problems for $\nestedA$ and $\nonnestedA$ coincide.
Similarly to the $(\finf,\fsum^+)$ case, the emptiness problem for
$\nonnestedA$  can be solved in polynomial space w.r.t. $|\nestedA|$.
\end{proof}

\begin{remark}
The construction of $\aut_1$ is similar to the construction from Lemma~\ref{t:regular-non-nested}, hence it is polynomial
in the size of the master automaton. 
Therefore, the emptiness problem for  $(\finf;\fsum)$-automata (resp.  $(\fliminf;\fsum)$-automata) is in $\PTIME$ provided that the 
total size of slave automata is bounded.
\end{remark}

\begin{proof}[Proof of (2) from Theorem \ref{thm:additional}]
$\PSPACE$-hardness follows from Proposition~\ref{p:pspacehard}.
The universality problem for functional $(\finf;\fsum)$-automata (resp. $(\fliminf;\fsum)$-automata) reduces
to the emptiness problem for functional $(\fsup;\fsum)$-automata (resp. $(\flimsup;\fsum)$-automata). It suffices 
to  (1)~first check that every word has an accepting run, which can be done in polynomial space,
(2)~construct an automaton $(\finf;\fsum)$-automaton (resp. $(\fliminf;\fsum)$-automaton) $\nestedA'$ by 
taking inverses of all weights in all slave automata of a given nested weighted automaton.
The automaton $\nestedA$ satisfies the universality problem with threshold $\lambda$
iff it satisfies (1) and the automaton $\nestedA'$ from (2) does not satisfy the emptiness problem
with threshold $-\lambda$.
Therefore, the universality problem for functional $(\finf;\fsum)$-automata (resp. $(\fliminf;\fsum)$-automata)
is in $\PSPACE$.
\end{proof}

\begin{proof}[Proof of (3) from Theorem \ref{thm:additional}]
$\PSPACE$-hardness follows from Proposition~\ref{p:pspacehard}.

Let $\const$ be the threshold given in the emptiness (resp. universality) problem. 
Consider a $(f; \fBsum{B})$-automaton $\nestedA^{\const}$, where $B=\const+1$, obtained from 
$\nestedA$ by changing each slave $\fsum^+$ automaton $\slaveA$ into $\fBsum{\const+1}$-automaton $\slaveA^{\const}$.
Basically, such a $\fBsum{\const+1}$-automaton $\slaveA^{\const}$ simulates runs of $\fsum^+$-automata by 
implementing a $\const+1$-bounded counter in its states $Q \times \{0,\ldots, \const+1\}$, where $Q$ is the set of states of $\slaveA$. 
If  $\slaveA$ accumulates the value above $\const$, the automaton $\slaveA^{\const}$ returns just ${\const}+1$, 
regardless of the actual value accumulated by $\slaveA$.
The automaton $\nestedA^{\const}$ is polynomial in $\const$, which can be exponential in the input size.
Observe that for $f \in \{ \finf, \fsup, \fliminf, \flimsup \}$, 
for every word $w$, $\nestedA$ has a run on $w$ of the value not exceeding $\const$
threshold iff  $\nestedA^{\const}$ has.
It follows that the emptiness (resp. universality) problem for $(f; \fsum^+)$-automata 
with threshold $\const$ reduces to the emptiness (resp. universality) problem for $(f; \fBsum{\const+1})$-automata.
Since $\fBsum{B}$ is a regular value function, 
Lemma~\ref{t:regular-non-nested} states that
for $f \in \{ \finf, \fsup, \fliminf, \flimsup \}$, 
a $(f; \fBsum{\const+1})$-automaton $\nestedA^{\const}$ is equivalent to a $\silent{f}$-automaton $\nonnestedA$.
Therefore, the emptiness (resp. the universality) problem for $(f; \fsum^+)$-automata reduces
to the emptiness (the universality) problem for $\silent{f}$-automata.
However, by employing Lemma~\ref{t:regular-non-nested}, we get $\nonnestedA$ of 
the size exponential in $|\nestedA^{\const}|$ and doubly-exponential in $|\nestedA|$.
We show that the second exponential blow-up can be avoided. 

We show that there exists a $\silent{f}$-automaton $\nonnestedA^-$, equivalent to
$\nonnestedA$ of the exponential size in the input size.

\noindent{\em Infimum case.} Let $f \in \{ \finf, \fliminf \}$.
Original $\silent{f}$-automaton $\nonnestedA$ simulates runs of all slave automata of $\nestedA^{\const}$. 
The modified $\silent{f}$-automaton $\nonnestedA^-$ simulates only a single $\fBsum{\const+1}$-automaton
at the time, which is chosen non-deterministically. For remaining slave automata,
only their non-weighted counterparts are simulated, i.e., $\fsum^+$ automata from $\nestedA$
with weights removed.
Since $f$ is infimum of limit-infimum value function, the automata 
$\nonnestedA$ and $\nonnestedA^-$ are equivalent.
The cardinality of the set of states of $\nonnestedA^-$ is $O(2^{|\nestedA|} \cdot |\nestedA| \cdot B)$.
Therefore, the size of $\nonnestedA^-$ is exponential in the input size. 

\noindent{\em Supremum case.} Let $f \in  \{ \fsup, \flimsup \}$.
Recall that the set of states of $\slaveA^{\const}$ is $Q \times \{0,\ldots, \const+1\}$, 
where $Q$ is the set of states of $\slaveA$. 
We obtain $\nonnestedA^-$ from $\nonnestedA$, by imposing the following condition: 
(*)~at every position $k$, if $\nonnestedA^-$ simulates two runs $\slaveRun_i, \slaveRun_j$
of $\slaveA^{\const}$ that have states $(q, w_1)$ resp. $(q,w_2)$ at position $k$, with $w_1 > w_2$, 
$\nonnestedA^-$ discards the run $\slaveRun_j$ (the one that has the state $(q, w_2)$). 
Intuitively, the run $\slaveRun_j$ can be completed to an accepting run that 
accumulates lower value than $\slaveRun_i$, thus simulating it is redundant. 
We argue that $\nonnestedA$ and $\nonnestedA^-$ are equivalent. 

The $\nonnestedA^-$ simulates only a subset of slave automata.
Since its value function is $\silent{\fsup}$ or $\silent{\flimsup}$,
for every word $w$, the value of $\nonnestedA^-$ does not exceed the value of $\nonnestedA$.
Conversely, 
consider an accepting run $(\masterRun, \slaveRun_1, \slaveRun_2, \ldots)$ of $\nestedA^{\const}$ on $w$.
We can modify runs of slave automata $\slaveRun_1, \slaveRun_2, \ldots$ so that the modified run $(\masterRun, \slaveRun_1', \slaveRun_2', \ldots)$ 
satisfies the following condition~(**):~at every position $k$ in $w$, if runs $\slaveRun_i,\slaveRun_j$
have states $(q, w_1)$, resp. $(q,w_2)$ at the positions corresponding to $k$,
then they accumulate the same value in the remaining part of of the run.
One can achieve that by changing the suffix of the run that accumulates 
greater value to the suffix of the other run. 
Such an operation of substituting a prefix decreases the value, hence it can be 
executed finitely many times for each run, and it will not produce infinite runs of slave automata.
Observe that the modified run is an accepting run of $\nestedA^{\const}$
of the value not exceeding the value of $(\masterRun, \slaveRun_1, \slaveRun_2, \ldots)$.

Now, observe that for a run of $\nestedA^{\const}$ satisfying (**),
if runs $\slaveRun_i,\slaveRun_j$
have states $(q, w_1)$, resp. $(q,w_2)$ at the positions $k$ in $w$,
with $w_1 > w_2$, the value of $\slaveRun_i$ is greater than the value of $\slaveRun_j$, and
the run $\slaveRun_j$ can be discarded. 
Such an operation corresponds to the condition (*) imposed by $\nonnestedA^-$.
Therefore, the values of $w$ assigned by $\nestedA^{\const}, \nonnestedA$ and
$\nonnestedA^-$ are equal.

The cardinality of the set of states of $\nonnestedA^-$ is $O((|\nestedA|  \cdot B)^{|\nestedA|})$, which is 
exponential in the input size. 

The emptiness (resp. the universality) problem of a $(f; \fsum^+)$-automaton $\nestedA$ reduces
to the emptiness (the universality) problem for $\silent{f}$-automaton $\nonnestedA^-$ of the exponential  
size in $|\nestedA| + \log(\const)$. Hence,
by Lemma~\ref{silnce-simple-equivalence}, for $(f; \fsum^+)$-automata,
the emptiness problem is  in $\PSPACE$ and
the universality problem is in $\EXPSPACE$.
\end{proof}

\begin{remark}
Let $f \in \{ \finf, \fliminf, \fsup, \flimsup \}$.
Assume that the threshold $\lambda$ is given in unary.
Then, for $(f; \fsum^+)$-automata, the emptiness problem is in $\PTIME$ and 
the universality problem is $\PSPACE$-complete.
\end{remark}
\begin{proof}
Let $f \in \{ \finf, \fliminf, \fsup, \flimsup \}$.
Assuming that the threshold is given in unary and the total size of slave automata
is bounded, the size of $\nonnestedA^-$ is polynomial in the size of $\nestedA$. 
Therefore,  the emptiness (resp., the universality) problem for $(f; \fsum^+)$-automata
reduce to the emptiness (resp., the universality) problem for $f$-automata.
The emptiness problem for $f$-automata is in $\PTIME$ and the universality 
problem is $\PSPACE$-complete. Hence, the result follows.
\end{proof}

Finally, we establish decidability of the emptiness problem with limit-average 
master automaton and $\fsum^+$-automata as slave automata.
The key proof idea is to show that values of certain runs of $(\flimavg;\fsum^+)$-automata
coincide with the values of non-nested limit-average automata, and those runs
have values arbitrarily close to the infimum over values of all runs.
This also allows us to show the decidability of the universality problem for functional 
$(\flimavg;\fsum^+)$-automata.

\begin{restatable}{theorem}{LimAvgSum}
The emptiness problem for $(\flimavg;\fsum^+)$-automata 
is $\PSPACE$-hard and in $\EXPSPACE$; 
and the universality problem for functional $(\flimavg;\fsum^+)$-automata
is $\PSPACE$-hard and in $\EXPSPACE$.
\label{thm:limavgsum}
\end{restatable}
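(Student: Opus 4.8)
The $\PSPACE$-hardness of both problems is immediate from Proposition~\ref{p:pspacehard}, so the content is in the two $\EXPSPACE$ upper bounds. As noted before Example~\ref{ex:avg-resp-time}, there is no value-preserving reduction of a $(\flimavg;\fsum^{+})$-automaton to a $\silent{\flimavg}$-automaton, because $\fsum^{+}$-slaves return unbounded values while $\silent{\flimavg}$-automata have bounded range; and simply capping slave sums at a bound $B$ (as in the proof of Theorem~\ref{thm:additional}(3)) only \emph{lowers} the limit-average, so it is not value-preserving. The plan is therefore to prove a \textbf{regularity lemma for near-optimal runs}: if $\inf_{w}\valueL{\nestedA}(w)\le\lambda$, then this infimum is already approached by ultimately periodic runs in which every slave returns a value bounded by some $B=2^{\mathrm{poly}(|\nestedA|)}\cdot(\lambda+1)$. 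The driving estimate is that, since all slave values are non-negative, a run whose limit-average is $\le\lambda$ has the asymptotic density of returned values exceeding a threshold $T$ bounded by $\lambda/T$; hence large slave values are sparse and can be excised. Concretely, I would build a finite mean-payoff graph whose vertices record the master state together with the configurations of all currently active slaves, with partial sums capped at $B$ and runs that would exceed $B$ discarded; a minimum-mean cycle then yields an ultimately periodic, genuinely $B$-bounded witness, and taking $B$ exponentially large drives the restricted infimum down to $\le\lambda$ whenever the true infimum is $\le\lambda$.

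Granting the regularity lemma, emptiness of the $(\flimavg;\fsum^{+})$-automaton with threshold $\lambda$ coincides with emptiness of the $(\flimavg;\fBsum{B})$-automaton $\nestedA^{\lambda}$ obtained by replacing each slave by its $B$-bounded-sum variant: the forward direction holds because capping only decreases the value, and the backward direction is precisely the lemma, which supplies a $B$-bounded witness on which capped and uncapped values agree. Since $\fBsum{B}$ is a regular value function, Lemma~\ref{t:regular-non-nested} converts $\nestedA^{\lambda}$ into an implicitly represented $\silent{\flimavg}$-automaton $\nonnestedA$, and Lemma~\ref{silent-lim-avg} decides emptiness of $\silent{\flimavg}$-automata in $\NLOGSPACE$. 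For the size bookkeeping: $B$ is exponential in the input (it depends on $\log\lambda$), so $\nestedA^{\lambda}$ is exponential, the selecting automata of Lemma~\ref{t:regular-non-nested} have an exponential state set, and a single state of $\nonnestedA$ (which stores a subset of those) needs exponentially many bits; an $\NLOGSPACE$ emptiness test over states of exponential description size runs in $\EXPSPACE$, giving the claimed bound.

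For the universality of \emph{functional} $(\flimavg;\fsum^{+})$-automata I would first check in polynomial space that every word admits an accepting run, so that the only obstruction to universality is a word of value strictly above $\lambda$. The sign-inversion reduction used in Theorem~\ref{thm:dec1}(3) and Theorem~\ref{thm:additional}(2) is unavailable here, since $\fsum^{+}$ ignores signs; instead I would separate two cases. First, decide \emph{pumpability}: whether in the master--slave product there is a reachable positive-weight cycle that lies inside a single (infinitely often invoked) slave invocation and can be extended to acceptance. If so, feeding the master longer and longer pumped blocks produces an accepted word whose returned values grow linearly, forcing $\valueL{\nestedA}(w)=\infty>\lambda$, so the automaton is not universal; this check is a joint reachability/cycle-detection in a graph of size $\mathrm{poly}(|\nestedA|)$. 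Otherwise no slave can be pumped, all slave values are uniformly bounded by some polynomial $B'\le|\nestedA|\cdot\text{(max weight)}$, and $(\flimavg;\fsum^{+})$ agrees \emph{exactly} with $(\flimavg;\fBsum{B'})$; by Lemma~\ref{t:regular-non-nested} (whose ``moreover'' clause preserves functionality) this yields a functional $\silent{\flimavg}$-automaton of exponential size, whose universality is decidable by Theorem~\ref{t:oldResults}(3), again within $\EXPSPACE$.

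The main obstacle is the regularity lemma. The two delicate points are (i) the uniform exponential bound on $B$, which rests on the observation that a slave invoked inside one period of the witness must reach an accepting state within (number of its states) times the period length, and (ii) making the cut-and-paste on the mean-payoff graph actually preserve or lower the limit-average, which is exactly where the density estimate $\le\lambda/T$ is used to guarantee that deleting the sparse high-value slaves is harmless. For the universality direction the additional subtlety is that capping is \emph{unsound} toward large values—an aperiodic word with growing blocks can have value $\infty$ while every capped value stays small—so the $\infty$ case must be isolated by the pumpability test before the bounded, functional reduction can be applied.
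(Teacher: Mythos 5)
Your emptiness argument has a fatal gap: the ``regularity lemma for near-optimal runs'' is false, and the paper's own Example~\ref{e:OptimalUnbounded2} is a direct counterexample. There, $\slaveA_a$ returns $k\bmod 2$ on $a^kb$ and $\slaveA_b$ returns $k+2$ on $ba^kb$; the infimum over all words is $\frac{3}{2}$, but every run in which all slave values are bounded by $B$ has value at least $\frac{3(B-2)+4}{2(B-1)}>\frac{3}{2}$, for \emph{every} finite $B$, however large. Worse, the capped automaton $\nestedA^{\lambda}$ is not merely a conservative approximation: on $(ba^k)^{\omega}$ with $k\gg B$ the capped $\slaveA_b$ contributes only $B$ once per $k+1$ positions, so the capped infimum is $\frac{1}{2}$ while the true infimum is $\frac{3}{2}$; for a threshold such as $\lambda=1$ your algorithm answers YES where the correct answer is NO. The flaw is the step ``large slave values are sparse and can be excised'': by your own Markov estimate, values exceeding $T$ may have density as large as $\lambda/T$ and hence contribute up to \emph{all} of $\lambda$ to the limit average---in the example the sparse $\slaveA_b$ returns carry $1$ of the total $\frac{3}{2}$. (Also note the backward direction of your claimed equivalence invokes the lemma under the hypothesis $\inf_w\valueL{\nestedA}(w)\le\lambda$, which is exactly the conclusion being derived, so even if the lemma held the argument would be circular.) This is precisely why the paper does \emph{not} bound slave values. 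Instead it bounds the \emph{multiplicity} of expensive slaves (Lemma~\ref{lim-avg-technical}: runs in which eventually at most $2\boundOnMulti$ concurrently active slaves accumulate more than $4\boundOnMulti$ are optimal), keeps those unboundedly-valued slaves, and simulates them by a $\silent{\flimavg}$-automaton that charges, at each input position, the sum of the \emph{current} transition weights of all active slaves (column sums of the weight matrix rather than row sums); since only boundedly many relevant slaves are active, each transition weight of the simulating automaton is bounded even though individual slave returns are not. A further non-trivial argument (Lemma~\ref{l:emptinessCoincide}, via reset words) is then needed because the two aggregation orders can give different values on individual runs, yet have the same infimum. None of this is recoverable from a value-capping reduction.

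Your universality argument, by contrast, is essentially the paper's: a dichotomy between an accepting run of infinite value (detected by a pumping criterion, reduced in the paper to universality of functional $(\flimsup;\fsum^+)$-automata) and a uniform eventual bound on slave values followed by a reduction to functional $(\flimavg;\fBsum{B})$-automata. Two corrections: the bound obtained from un-pumpability is $\boundOnWeights\cdot\confNum{\nestedA}$, which is exponential in $|\nestedA|$ rather than polynomial, because a pumpable cycle must return the entire configuration (master state plus the set of states of all active slaves), not just one slave's state; and the final step should use the functional-universality-to-emptiness reduction of Theorem~\ref{thm:dec1}(3) rather than Lemma~\ref{t:regular-non-nested} followed by Theorem~\ref{t:oldResults}(3), since the latter composition of exponential blow-ups overshoots $\EXPSPACE$.
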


We present the proof of part~(1) from Theorem~\ref{thm:crucial-undec} in Section~\ref{s:proof-limavg}.
In the following, we show the proof of part~(2) from Theorem~\ref{thm:crucial-undec}.

Observe that for a run of a functional nested weighted automaton, as long as the run is accepting, its value 
does not depend on the choices of transitions.
Therefore, we will focus on the construction of an accepting run with the maximal value to
compute the minimal threshold for the functionality problem.

\newcommand{\boundOnWeights}{\ensuremath{\Lambda}}

\begin{lemma}
Let $\nestedA$ be a functional $(\flimavg; \fsum^+)$-automaton and 
let $\boundOnWeights$ be the value bounding weights in all slave
automata of $\nestedA$.
Then, one of the following holds:
\begin{enumerate}
\item For every accepting run, there is a position $s_0$ such that
every slave automaton started after $s_0$ accumulates the value
not exceeding $\boundOnWeights \cdot \confNum{\nestedA}$.
\item The automaton $\nestedA$ has an accepting run of infinite value 
(whose value exceeds every $\lambda > 0$).
\end{enumerate}
\end{lemma}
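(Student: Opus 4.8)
The plan is to prove the dichotomy in contrapositive form: assuming clause~(1) fails, I will construct an accepting run witnessing clause~(2). First I would fix the notion of a \emph{configuration} of $\nestedA$ at a position of a run, namely the state of $\masterA$ together with the states of the currently active slave automata; since the slave value function is $\fsum^+$, the weight a slave has accumulated so far never affects whether it can still reach an accepting state, so the control information alone determines the admissible futures and the eventual acceptance. Hence there are only finitely many configurations, and I take $\confNum{\nestedA}$ to be their number. The negation of clause~(1) is then the existence of an accepting run that spawns, at arbitrarily late positions, slave automata each accumulating strictly more than $\boundOnWeights \cdot \confNum{\nestedA}$; I will call these \emph{heavy} slaves. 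Since $\nestedA$ is functional, the value of a word equals the value of \emph{every} accepting run on it, so exhibiting a single accepting run of infinite value is enough for clause~(2).

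The combinatorial heart is the implication \emph{heavy slave} $\Rightarrow$ \emph{positive pumpable cycle}. A heavy slave accumulates more than $\boundOnWeights \cdot \confNum{\nestedA}$ and each of its transitions contributes at most $\boundOnWeights$ (and non-negatively, for $\fsum^+$), so it must run for strictly more than $\confNum{\nestedA}$ steps. Recording the configuration sequence over its lifetime, I would repeatedly excise configuration cycles until an acyclic configuration path remains; that path visits at most $\confNum{\nestedA}$ distinct configurations, so the slave accumulates at most $\boundOnWeights \cdot \confNum{\nestedA}$ along it. As the total strictly exceeds this bound and every excised segment contributes non-negatively, at least one excised segment is a cycle $c^\ast \to c^\ast$ on which the slave accumulates a strictly positive amount $\delta$, and the slave's own state recurs at the two endpoints.

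To build the infinite-value run I would pigeonhole once more: the run supplied by the negation of~(1) contains infinitely many heavy slaves, each yielding a positive cycle at some configuration, so one configuration $c^\ast$ recurs infinitely often and carries a positive cycle $u$. I then splice, inserting $u^{t_m}$ between successive visits to $c^\ast$ for a sequence $t_m \to \infty$, and closing off each block by reusing the continuation of the original run from $c^\ast$ so that every active slave still terminates in an accepting state and infinitely many non-silent transitions remain. The quantitative point is that, because $u$ is a configuration cycle, it reproduces its spawning pattern in each copy, and a slave spawned in an early copy is active (its state lies in $c^\ast$) at the block boundary, hence survives into the remaining copies and accumulates an amount proportional to the number of copies still to come. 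Summing over the $t_m$ copies yields total returned value $\Theta(t_m^2)$ against only $\Theta(t_m)$ returned values, so the average over each block is $\Theta(t_m)$; choosing $t_m \to \infty$ forces $\silent{\flimavg}$ of the returned sequence to diverge, i.e. $\valueL{\nestedA}$ of the constructed word exceeds every $\lambda > 0$.

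I expect the splicing construction to be the main obstacle. Two points require care. First, since a configuration only records the multiset of active slave states, the individual slaves at the two endpoints of $u$ need not be matched one-to-one, so I must justify rerouting each active slave to follow the future of some same-state slave; this is legitimate precisely because a slave's remaining run depends only on its current state and the fixed suffix word, and non-negativity of $\fsum^+$ guarantees that the resulting (possibly truncated or extended) runs stay accepting. Second, I must rule out that the block average is diluted by the slaves created inside the pumped segment; the resolution, as above, is that under this configuration semantics those slaves themselves span out and grow with the number of remaining copies, which is exactly what makes the total returned value superlinear in $t_m$. Verifying these two points, together with the bookkeeping that $\masterRun$ visits $c^\ast$ infinitely often in the spliced run, completes the argument.
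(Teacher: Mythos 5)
Your overall architecture (contrapositive, configurations as pure control information, pigeonhole to extract a positive configuration cycle for each heavy slave, then pumping) matches the paper's proof. The gap is in the final quantitative step, where you derive infinite value from the pumped run. You claim that the total value returned by slaves spawned \emph{inside} a pumped block $u^{t_m}$ is $\Theta(t_m^2)$, on the grounds that a slave spawned in an early copy of $u$ ``survives into the remaining copies and accumulates an amount proportional to the number of copies still to come.'' This does not follow from $u$ being a positive configuration cycle: the cycle is positive only for the particular heavy slave whose state recurs at its endpoints; a slave spawned during a copy of $u$ may take only weight-$0$ transitions on subsequent copies, or terminate quickly. Example~\ref{e:OptimalUnbounded2} of the paper is essentially a counterexample to your claim: in a block $ba^kb$ the single slave $\slaveA_b$ accumulates $k+2$, but the $k$ slaves $\slaveA_a$ spawned inside the block each return only $0$ or $1$, so the total value attributable to positions inside the block is $\Theta(k)$, not $\Theta(k^2)$, and the per-block average stays bounded (indeed that automaton has finite infimum $\tfrac{3}{2}$). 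So ``block averages diverge'' is not the mechanism you can use.

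What actually forces infinite value --- and what the paper uses --- is the attribution of a slave's entire (future) accumulated value to the position at which it is \emph{spawned}, combined with the $\limsup$ in the definition of $\flimavg$ and the non-negativity of $\fsum^+$. Pumping the positive cycle lets you inflate the value returned by the single heavy slave spawned at (non-silent) position $k$ to exceed $k^2$; then already the one term $v_k > k^2$ gives $\frac{1}{k}\sum_{i=1}^{k} v_i > k$, and arranging this for infinitely many $k$ (one heavy slave per pumped block, at arbitrarily late positions) makes the limit superior infinite. No estimate on the other slaves in the block is needed. Your proof should replace the $\Theta(t_m^2)$-per-block accounting with this single-slave, spawn-position argument; as written, the step from ``positive pumpable cycle at infinitely many positions'' to ``value exceeds every $\lambda$'' does not go through.
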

\begin{proof}
Assume that (1) does not hold.  Then, there is an accepting run such that some slave automaton 
returns values that exceed the value $\boundOnWeights \cdot \confNum{\nestedA}$ infinitely often. 
Observe that if a slave automaton $\slaveA$ accumulates a value exceeding $\boundOnWeights \cdot \confNum{\nestedA}$ 
during a run $\pi$, then the nested weighted automaton $\nestedA$ is in the same configuration at least twice
during the run $\pi$ and meanwhile $\slaveA$ increases its value. 
Therefore, one can pump the run of the nested weighted automaton to increase the value returned by 
$\slaveA$.
It follows that we can pump successively the run on $\nestedA$ 
such that infinitely often the following holds:
a slave automaton started at a position $k$ accumulates the value
exceeding $k^2$. 
A run with such a property has an infinite weight according to the semantics
$\flimavg(\pi) = \lim \sup_{k \rightarrow \infty} \frac{1}{k} \cdot \sum_{i=1}^{k} (\cost(\pi))[i]$.
\end{proof}

Now, we are ready to prove decidability of the universality problem for functional 
$(\flimavg;\fsum^+)$-automata.

\begin{proof}[Proof of (2) from Theorem~\ref{thm:crucial-undec}]
If (1) holds, $\nestedA$ is equivalent to a functional $(\flimavg;\fsum^B)$-automaton $\nestedA'$, where $B = {\boundOnWeights \cdot \confNum{\nestedA}}$.
The size of $\nestedA'$ is exponential in $|\nestedA|$. 
The universality problem for functional $(\flimavg;\fsum^B)$-automata is $\PSPACE$-complete, which implies 
the the universality problem for functional  $(\flimavg;\fsum^+)$-automata is in $\EXPSPACE$.
Otherwise, if (2) holds, then an answer to the universality problem for $\nestedA$ is ``No'' for every $\lambda$.
Now, it can be detected whether (1) or (2) holds by reduction to the universality problem
for functional $(\flimsup;\fsum^+)$-automata, which is $\PSPACE$-complete.
\end{proof}

\begin{remark}
The size of $\nestedA'$ is polynomial in the size of the master automaton of $\nestedA$. Therefore, the universaility problem is in $\PSPACE$.
The universality problem for functional $\fsum^+$-weighted automata is $\PSPACE$-hard, hence 
the universality problem for $(\flimavg;\fsum^+)$-automata is $\PSPACE$-complete assuming that the total size of slave automata is bounded.
\end{remark}

\subsection{Summary and Open problems}\label{subsec:summary} 

While we have established the decidability and undecidability of the decision 
problems for nested weighted automata for almost all cases, there is one 
open problem which we present as a conjecture.

\begin{conjecture}
The emptiness problem  for non-deterministic $(\flimavg; \fsum)$-automata is decidable.
\label{conj1}
\end{conjecture}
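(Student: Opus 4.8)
The goal is to decide, for a non-deterministic $(\flimavg;\fsum)$-automaton $\nestedA$ and a rational threshold $\const$, whether some word $w$ satisfies $\valueL{\nestedA}(w)\le\const$; equivalently, whether some accepting run produces a sequence of slave return values whose limit-average is at most $\const$. The plan is to imitate the dichotomy that made the $\fsum^+$ case decidable (Theorem~\ref{thm:limavgsum}) and, whenever possible, to reduce to the regular slave value function $\fBsum{B}$. Concretely, I would aim to prove the following alternative: \emph{either} the infimum of $\valueL{\nestedA}$ over all words and accepting runs is $-\infty$, in which case the answer is trivially ``yes'', \emph{or} this infimum is approached by accepting runs in which all but finitely many slave invocations return a value inside a computable window $[-B,B]$. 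Since finitely many slave values do not affect the limit-average, in the second case $\nestedA$ would be equivalent, with respect to the emptiness question, to a $(\flimavg;\fBsum{B})$-automaton; as $\fBsum{B}$ is a regular value function, Lemma~\ref{t:regular-non-nested} converts it to a $\silent{\flimavg}$-automaton whose emptiness is decidable (as exploited in Theorem~\ref{thm:dec1}), which would close the problem.

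The technical heart is this bounded-window dichotomy, which I would attack by a cut-and-paste argument on the combined configuration graph whose vertices record the master state together with a summary of the currently active slaves. The decidable siblings indicate the shape of the argument: $(\finf;\fsum)$ and $(\fliminf;\fsum)$ emptiness is already decidable (Theorem~\ref{thm:additional}), and the $\fsum^+$ analysis yields the clean alternative ``eventually bounded slaves or value $+\infty$''. For minimization with genuine $\fsum$ slaves a very negative slave value is advantageous rather than harmful, so the phenomenon to detect is a \emph{pumpable negative loop}: if along an accepting run the master returns to the same configuration while some slave alive throughout strictly decreases its running sum and can still be completed to an accepting state, then iterating this loop drives that slave's return value, and with it the limit-average, to $-\infty$. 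I would try to show that the absence of any such loop forces every slave's running sum to stay within $\Lambda\cdot\confNum{\nestedA}$ of its starting value, where $\Lambda$ bounds the slave weights, which is exactly the window $[-B,B]$ required for the reduction to $\fBsum{B}$.

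The step I expect to be the main obstacle --- and the reason this is stated only as a conjecture --- is the construction of a \emph{finite} configuration summary that is at once coarse enough to support the pumping argument and precise enough to recover the exact limit-average. With $\fsum^+$ the monotonicity of partial sums collapses the unbounded part of a configuration, because ``large running sum'' is synonymous with ``contributes $+\infty$''; with $\fsum$ this collapse fails, since a slave whose running sum is currently very negative may nevertheless terminate with a moderate accepting value, so one cannot simply truncate running sums at $\pm B$ without possibly altering the averaged value. Moreover, the number of concurrently active slaves is unbounded and each carries a sign-indefinite running sum, so the naive product construction is an infinite-state arena rather than a finite one. Making a cut-and-paste respect, simultaneously, the acceptance requirement of every overlapping slave and the mean-payoff objective of the master is precisely where the difficulty concentrates; a full proof would likely have to pass through a mean-payoff problem on a structured infinite arena (of VASS-like flavour, with one counter per active slave) together with a decidability result for that arena, and establishing such a result is, in my view, the crux.
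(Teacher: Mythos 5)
This statement is not proved in the paper at all: it is explicitly presented as an open conjecture (Conjecture~\ref{conj1}), and the authors' entire decidability machinery stops at $\fsum^+$ slaves precisely because of the obstacles you identify. So the question is not whether your route differs from the paper's, but whether your proposal actually closes the problem --- and it does not. You say so yourself in the last paragraph, and your self-diagnosis is accurate: the bounded-window dichotomy is asserted, not proved, and the finite configuration summary needed to run the cut-and-paste argument does not exist in any obvious form, because a configuration must record an unbounded multiset of active slaves each carrying a sign-indefinite running sum. The pigeonhole step that powers the pumping arguments in Lemmas~\ref{many-non-silent} and~\ref{lim-avg-technical} relies on $\confNum{\nestedA}$ being finite; once running sums enter the configuration, that finiteness is gone, and truncating them at $\pm B$ is exactly what $\fsum$ (unlike $\fsum^+$) forbids, since a currently very negative slave may still accept with a moderate value.

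Two further concrete problems with the plan as stated. First, the ``pumpable negative loop $\Rightarrow$ value $-\infty$'' half of the dichotomy is not symmetric to the $+\infty$ case, because $\flimavg$ is defined via $\limsup$ of partial averages: to certify $+\infty$ it suffices that partial averages exceed every bound infinitely often (which is what the pumping in the lemma preceding the proof of Theorem~\ref{thm:limavgsum} achieves), whereas to certify a value $\le\lambda$, let alone $-\infty$, the negative contributions must dominate \emph{almost all} partial averages, so sparse pumping of one slave's negative loop does not obviously suffice. Second, even granting the dichotomy, you would still need an algorithm to decide which branch holds; detecting a ``pumpable negative loop'' in an arena where the relevant state includes unboundedly many counters is itself the kind of VASS-flavoured mean-payoff question you flag as the crux, and no decidability result for it is supplied or cited. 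In short, your proposal is a reasonable research programme consistent with the paper's own framing of the difficulty, but it is not a proof, and the paper offers none either.
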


Tables~\ref{tab1}~and~\ref{tab2} summarize our results.

\noindent{\em Complexity.} The decision problems are 
$\PSPACE$-complete, in $\EXPSPACE$, or undecidable.
We show in Theorem~\ref{thm:succinct} that (deterministic) nested weighted automata are exponentially 
more succinct than (non-deterministic) weighted automata, which explains
$\EXPSPACE$ complexity of some universality problems.

We present the proof of the emptiness case from
Theorem~\ref{thm:limavgsum} in Section~\ref{s:proof-limavg}.

\begin{table}
\centering
\def\tabcolsep{5pt}
\begin{tabular}{|c|c|c|c|c|}
\hline 
\multicolumn{2}{|c|}{}& $\finf$& $\fsup$ & \multirow{2}{*}{$\flimavg$} \\
\multicolumn{2}{|c|}{}& $\fliminf$ & $\flimsup$  & \\  
\hline
$\fmin, \fmax$ &Empt.& 
\multicolumn{3}{|c|}{ \multirow{2}{*}{$\PSPACEshort$-c~\tabref{thm:dec1}}}  \\
\cline{2-2}  
$\fBsum{B}$                               &Univ.&  \multicolumn{3}{|c|}{}  \\
\hline
\multirow{2}{*}{$\fsum$} & Empt. & $\PSPACEshort$-c~\tabref{thm:additional}  & \undecidable~\tabref{thm:crucial-undec} & \multirow{2}{*}{\conjectureOne} \\
\cline{2-4}
& Univ. & \undecidable~\tabref{thm:crucial-undec} & $\PSPACEshort$-c~\tabref{thm:additional}  &  \\
\hline
\multirow{2}{*}{$\fsum^+$} & Empt. & 
\multicolumn{2}{|c|}{ \multirow{2}{*}{$\PSPACEshort$-c \tabref{thm:dec1}}} &
\multirow{2}{*}{$\EXPSPACEshort$~\tabref{thm:limavgsum}} \\
&Univ. & \multicolumn{2}{|c|}{}  &  \\
\hline
\end{tabular}
\caption{Decidability and complexity of the emptiness and universality problems for functional $(f;g)$-automata. 
Functions $f$ are listed in the first row and functions $g$ are in the first column.
The undecidability results hold even for deterministic automata.
Next to each result there is a reference to the corresponding theorem or conjecture.
$\PSPACEshort$ (resp. $\EXPSPACEshort$) denotes $\PSPACE$ (resp. $\EXPSPACE$).}
\vspace{-30pt}
\label{tab1}
\end{table}

\begin{table}
\centering
\def\tabcolsep{5pt}
\begin{tabular}{|c|c|c|c|c|}
\hline 
\multicolumn{2}{|c|}{}& $\finf$& $\fsup$ & \multirow{2}{*}{$\flimavg$} \\
\multicolumn{2}{|c|}{}& $\fliminf$ & $\flimsup$  & \\  
\hline
$\fmin, \fmax$ &Empt.& 
\multicolumn{3}{|c|}{{$\PSPACEshort$-c~\tabref{thm:dec1}}}  \\
\cline{2-5}
$\fBsum{B}$                                &Univ.& \multicolumn{2}{|c|}{{$\EXPSPACEshort$~\tabref{thm:dec1}}} & \undecidable~\tabref{thm:simple} \\
\hline
\multirow{2}{*}{$\fsum$} & Empt. & $\PSPACEshort$-c~\tabref{thm:additional} & \undecidable~\tabref{thm:crucial-undec} &  
\conjectureOne \\
\cline{2-5}
& Univ. & \undecidable~\tabref{thm:crucial-undec} & \undecidable~\tabref{thm:simple} & \undecidable~\tabref{thm:simple} \\
\hline
\multirow{2}{*}{$\fsum^+$} & Empt. & 
\multicolumn{2}{|c|}{$\PSPACEshort$-c~\tabref{thm:additional}}  & $\EXPSPACEshort$~\tabref{thm:limavgsum} \\
\cline{2-5}
&Univ. & \multicolumn{2}{|c|}{$\EXPSPACEshort$~\tabref{thm:additional}}  & \undecidable~\tabref{thm:simple} \\
\hline
\end{tabular}
\caption{Decidability and complexity of the emptiness and universality problems for non-deterministic $(f;g)$-automata. 
$\PSPACEshort$ (resp. $\EXPSPACEshort$) denotes $\PSPACE$ (resp. $\EXPSPACE$).
The alignment is as in Table~\ref{tab1}.}
\vspace{-35pt}
\label{tab2}
\end{table}

\begin{table}
\centering
\def\tabcolsep{5pt}

\begin{tabular}{|c|c|c|c|c|}
\hline 
\multicolumn{2}{|c|}{}& $\finf$& $\fsup$ & \multirow{2}{*}{$\flimavg$} \\
\multicolumn{2}{|c|}{}& $\fliminf$ & $\flimsup$  & \\  
\hline
$\fmin, \fmax$ &Empt.& 
\multicolumn{3}{|c|}{ {$\PTIME$ }}  \\
\cline{2-5}  
$\fBsum{B}$                               &Univ.&  \multicolumn{3}{|c|}{$\PSPACE$-c}  \\
\hline
\multirow{2}{*}{$\fsum$} & Empt. & $\PTIME$  & \undecidable  & \multirow{2}{*}{\conjectureOne} \\
\cline{2-4}
& Univ. & \undecidable   & $\PSPACE$-c  &  \\
\hline
\multirow{2}{*}{$\fsum^+$} & Empt. & 
\multicolumn{3}{|c|}{ {$\PTIME$ }} \\
\cline{2-5}  
&Univ. & \multicolumn{3}{|c|}{$\PSPACE$-c}   \\
\hline
\end{tabular}
\caption{Decidability and complexity of the emptiness and universality problems for functional $(f;g)$-automata whose slave
automata have size bounded by a constant.
The alignment is as in Table~\ref{tab1}.}
\vspace{-35pt}
\label{tab3}
\end{table}

\begin{table}
\def\tabcolsep{5pt}
\centering

\begin{tabular}{|c|c|c|c|c|}
\hline 
\multicolumn{2}{|c|}{}& $\finf$& $\fsup$ & \multirow{2}{*}{$\flimavg$} \\
\multicolumn{2}{|c|}{}& $\fliminf$ & $\flimsup$  & \\  
\hline
$\fmin, \fmax$ &Empt.& 
\multicolumn{3}{|c|}{{$\PTIME$ }}  \\
\cline{2-5}
$\fBsum{B}$                                &Univ.& \multicolumn{2}{|c|}{{$\PSPACE$-c }} & \undecidable \\
\hline
\multirow{2}{*}{$\fsum$} & Empt. & $\PTIME$ & \undecidable &  
\conjectureOne \\
\cline{2-5}
& Univ. & \undecidable  & \undecidable & \undecidable \\
\hline
\multirow{2}{*}{$\fsum^+$} & Empt. & 
\multicolumn{3}{|c|}{$\PTIME$ } \\
\cline{2-5}
&Univ. & \multicolumn{2}{|c|}{$\PSPACE$-c }  & \undecidable \\
\hline
\end{tabular}
\caption{Decidability and complexity of the emptiness and universality problems for non-deterministic $(f;g)$-automata
whose slave automata have size bounded by a constant.
The alignment is as in Table~\ref{tab1}.}
\vspace{-15pt}
\label{tab4}
\end{table}

\noindent{\bf Discussion on inclusion.}
The emptiness and universality problems reduce to the inclusion problem, where
the inclusion problem given two automata $\nestedA_1$ and $\nestedA_2$ asks
whether for every word $w$ we have $\lang_{\nestedA_1}(w) \leq \lang_{\nestedA_2}(w)$.
Therefore, for decidability of the inclusion problem both the emptiness and 
the universality problem must be decidable.
Hence, in the non-deterministic case, for value functions
studied in Table~\ref{tab2}, the inclusion problem can be decidable only in two cases:
\begin{enumerate}
\item for $(f;g)$-automata, where $g$ is regular value function, and 
$f \in \InfVal \setminus \{\flimavg\}$;
\label{cases-inclusion-1st}
\item for $(f;\fsum^+)$-automata, where $f \in \InfVal \setminus \{\flimavg\}$.
\label{cases-inclusion-2nd}
\end{enumerate}

In fact, in case~\eqref{cases-inclusion-2nd}, the inclusion problem is
undecidable as well.
Indeed, inclusion of $\fsum^+$-automata over finite words reduces to the
inclusion of  $(f;\fsum^+)$-automata, where $f \in \{\finf,\fliminf,\fsup, \flimsup\}$.
It has been shown in~\cite{DBLP:conf/atva/AlmagorBK11} that the inclusion
problem for $\fsum^+$-automata is undecidable. Therefore, the inclusion problem in
case~\eqref{cases-inclusion-2nd}  is undecidable.
As automata in case~\eqref{cases-inclusion-1st} are equivalent to
$\silent{f}$-automata for $f \in \{\finf,\fliminf,\fsup,\flimsup\}$ 
(Lemma~\ref{t:regular-non-nested}), which are essentially equivalent to $f$-automata, 
the inclusion problem is decidable~\cite{Chatterjee08quantitativelanguages}.

\begin{remark}[Parametric complexity]
\label{rem:parametric}
The complexity results summarized in Tables~\ref{tab1}~and~\ref{tab2} are given w.r.t. 
the size of a nested automaton, i.e.,
the sum of the size of the master automaton and the total size of slave automata. 
However, if the total size of slave automata is bounded by a constant,
then we show that the complexity of all emptiness problems decreases 
from $\PSPACE$ (resp. $\EXPSPACE$) to $\PTIME$, and all the universality 
problems become $\PSPACE$-complete (Remarks~\ref{rem:firstParRemark}~and~\ref{rem:secondParRemark}). 
In other words, 
we show that the complexity of 
emptiness and universality in the size of the master automaton (with the total size of slave automata considered as constant) matches that of Boolean non-nested automata.
(For every $f\in \InfVal$ the universality problem for functional $f$-automata is 
$\PSPACE$-complete~\cite{DBLP:conf/concur/FiliotGR12}).
Interestingly, bounding the total size of slave automata does not change decidability status; 
all undecidability results still hold.
The parametric complexity results are summarized in Table~\ref{tab3} and Table~\ref{tab4}.
\end{remark}

\FloatBarrier

\section{Applications}
\newcommand{\sr}{\mathsf{sr}}
\newcommand{\ARC}[1]{\mathsf{ARC}(#1)}

In this section we discuss several applications of nested weighted 
automata.

\subsection{Quantitative system properties}
\label{s:quantitativeProperties}
We have shown (Example~\ref{ex:avg-resp-time}) that basic properties such as average response time can be 
expressed conveniently as a nested weighted automaton.
We also argue that our framework is a natural extension of the framework
of monitor automata for Boolean verification, and is a step towards
quantitative run-time verification.

\smallskip\noindent{\em Quantitative monitor automata.}
In verification of Boolean properties, the formalism with \emph{monitor automata}
is a very convenient way to express system properties~\cite{DBLP:conf/spin/PnueliZ08}.
The specification for a system can be decomposed into subproperties, 
each monitor automaton tracks a subproperty, and the logical value of the specification is 
inferred from the results of the monitor automata. 
To be more specific, given an LTL specification, the logical value of every subformula is tracked by a monitor automaton.
A monitor automaton is a transducer that at each position of the word outputs whether the current suffix satisfies the given subformula. 
The monitor automata for complex formulae are constructed from monitor automata for 
their immediate subformulae. Finally, the answer whether a given word satisfies the specification
is encoded as the first output of the monitor that corresponds to the whole LTL formula.
Our nested weighted automata framework can be seen as a natural extension of the formalism
provided by monitor automata.
Below we argue how nested weighted automata provide a convenient framework
for specification, with added expressiveness, and is a first step towards
quantitative run-time verification. 

\begin{itemize}

\item \emph{Ease of specification.}
A specification formalism is a convenient framework if complex specifications
can be easily decomposed. 
For Boolean properties, monitor automata were introduced for this purpose: in other
words, for Boolean properties, though monitor are not more expressive than 
the standard automata, yet they are widely used as they provide a framework 
where specifications can be conveniently described.
In our setting, in the spirit of monitor automata,
each slave automaton can specify a subproperty of the system, and the master automaton
combines the result obtained from all the slave automata.
This (as in the case of monitor automata) allows decomposing quantitative properties into 
subproperties and thus eases the task of specification.
For example as shown in Example~\ref{ex:avg-resp-time} to compute average response time, for each request
the master automaton invokes a slave automaton that computes the response time (a subproperty
for every request) and then the master automaton with limit-average value function 
combines the subproperties to obtain the average response time.

\begin{example}[Average resource consumption]
\label{e:resourceConsumption}
Consider a system with at most $n$ concurrently running processes, 
in which processes can be started and terminated.
The system has available resources $r_1, \ldots, r_k$.
The quantitative property of \emph{average resource consumption}, which asks what is the 
average number of different resources allocated by processes, can be expressed in 
a convenient way by a (deterministic) nested weighted automaton $\nestedA$ defined as follows. 
The master automaton of $\nestedA$ starts a separate slave automaton $\slaveA$ when a new process is started. 
The slave automaton $\slaveA$ runs until the process terminates and counts how many different
resources $r_1, \ldots, r_k$ the given process allocates.
The counting can be implemented by a $\fmax$-automaton with weights $\{0,1,\ldots, k\}$.
Then, the master automaton computes the limit average of resource consumption computed by slave automata.
Since every $(\flimavg; \fmax)$-automaton is equivalent to some
$\silent{\flimavg}$-automaton (Lemma~\ref{t:regular-non-nested}),
average resource consumption can also be expressed by a weighted automaton. 
However, construction of such a weighted automaton is cumbersome and 
it essentially follows the proof of Lemma~\ref{t:regular-non-nested}.
\end{example}

We use Example~\ref{e:resourceConsumption} to show that (deterministic) nested 
weighted automata can be exponentially more succinct than (non-deterministic) weighted automata.
Let $\ARC{n}$ denote the average resource consumption property for $n$-processes.

\begin{restatable}{theorem}{NestedAreSuccinct}
\label{thm:succinct}
There is a deterministic $(\flimavg; \fmax)$-automaton of the size $O(n)$ expressing $\ARC{n}$,
while every non-deterministic $\silent{\flimavg}$-automaton expressing $\ARC{n}$ has $2^{\Omega(n)}$ states.
\end{restatable}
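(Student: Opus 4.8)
The plan is to prove the two halves separately: an explicit $O(n)$-size deterministic nested construction for the upper bound, and a state-distinguishing (fooling-set) argument for the exponential lower bound.

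\textbf{Upper bound.} I would fix the natural encoding of $\ARC{n}$: for each process slot $p\in\{1,\dots,n\}$ the alphabet contains a start letter $s_p$, a terminate letter $t_p$, and allocation letters $a_{p,1},\dots,a_{p,k}$ (plus an idle letter), for a fixed number $k$ of resource types, and a word is in the domain when each slot's projection is properly nested. I build a deterministic $(\flimavg;\fmax)$-automaton $\nestedA$ as follows. The master $\masterA$ is essentially single-state: on $s_p$ it invokes the slave $\slaveA_p$, and on every other letter it invokes a trivial slave accepting the empty word (a silent transition); this costs $O(n)$ transitions. Each slave $\slaveA_p$ starts at the position of $s_p$, self-loops on all letters not belonging to slot $p$ (keeping its current weight), maintains in its state the set of distinct resources that slot $p$ has already allocated, puts on each transition the current cardinality of that set, and accepts exactly at the first $t_p$. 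Since the cardinality is nondecreasing, $\fmax$ returns precisely the number of distinct resources allocated during that incarnation. Each $\slaveA_p$ is deterministic and its accepting language is prefix-free, so $\nestedA$ is deterministic; each $\slaveA_p$ has $O(2^{k})=O(1)$ states, so $\nestedA$ has size $O(n)$, and $\valueL{\nestedA}=\ARC{n}$ by construction.

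\textbf{Lower bound.} For the lower bound I would show that any non-deterministic $\silent{\flimavg}$-automaton $\nonnestedA$ with $\valueL{\nonnestedA}=\ARC{n}$ must reach at least $2^{\Omega(n)}$ pairwise distinguishable states. The source of hardness is that, reading left to right, $\nonnestedA$ must emit exactly one weighted (non-silent) transition per completed incarnation carrying that incarnation's distinct-resource count, while up to $n$ incarnations are simultaneously in flight and each reveals its resources gradually and with repeated allocations. To exploit this, set $k=1$ and consider finite histories $u_{S}=s_1\cdots s_n\big(\prod_{p\in S}a_{p,1}\big)$ for $S\subseteq\{1,\dots,n\}$: after $u_S$ all $n$ slots are active and slot $p$ has already allocated its single resource iff $p\in S$. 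If two histories $u_S,u_{S'}$ with $S\neq S'$ led $\nonnestedA$ to a common state, pick $p_0\in S\triangle S'$; the next letter $a_{p_0,1}$ is a repeat on one side (its weight must not be counted) but a genuine first allocation on the other (its weight must be counted), yet from a common state $\nonnestedA$ cannot react differently. I would turn this local collision into a global, positive-density error, as explained next, to contradict $\valueL{\nonnestedA}=\ARC{n}$.

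The main obstacle --- and the crux of the argument --- is that $\flimavg$ is shift- and prefix-invariant, so a one-shot discrepancy is invisible: the value sees only long-run averages, and I therefore cannot simply append a finite ``query''. The device I would use is a pipelined, staggered schedule in which the $n$ slots are kept perpetually active: at each tick one slot is terminated (emitting its count) and immediately restarted, while allocations for the in-flight slots are interleaved so that every slot's count is settled only at its termination and each slot performs a repeated allocation during its life. In this steady state a collision between two in-flight bit-vectors forces, via the infimum-over-runs semantics, an accepting run that under-counts a genuine first allocation on an $\Omega(1)$ fraction of ticks (treating it as a repeat), which drives $\valueL{\nonnestedA}$ strictly below $\ARC{n}$. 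Counting the reachable in-flight configurations, which range over $\{0,1\}^{n}$, then yields a distinguishing family of size $2^{\Omega(n)}$ and hence $|Q|\ge 2^{\Omega(n)}$. The two technical points to discharge carefully are (i) the reduction of the one-shot memory requirement to a positive-density limit-average error through the pipelined recurrence, defeating shift-invariance, and (ii) the interaction of non-determinism with the infimum semantics, where the collision must be shown to enable a cheaper (under-counting) accepting run rather than merely an over-counting one.
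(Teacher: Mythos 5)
Your upper bound is correct and is essentially the paper's own construction (the average-resource-consumption example): a one-state master that on $s_p$ invokes a constant-size, deterministic, prefix-free slave tracking the set of resources seen and reporting its cardinality via $\fmax$. The lower bound is where all the work lies, and your proposal defers exactly the two steps that constitute the proof. First, for a \emph{non-deterministic} automaton the phrase ``$u_S$ and $u_{S'}$ lead to a common state'' is not meaningful: a prefix determines only a set of reachable states, and the state that matters is the one visited by an optimal run on an entire infinite word, which is not fixed by any finite prefix. Second, under the infimum-over-runs semantics a contradiction requires exhibiting an accepting run whose value is \emph{strictly below} the true average resource consumption of some word (over-counting runs are harmless), and you only assert that the pipelined schedule ``forces'' such a run; nothing in the prefix-level fooling set guarantees it.

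The paper discharges both points with a concrete device you would need some version of. It fixes periodic words $w_A=(s\,u_A\,t)^{\omega}$ with $|A|=n/2$, where $u_A$ lists the letters of $A$ periodically $|\nonnestedA|$ times, and takes a minimal-value run $\pi_A$ on each such word. Pigeonhole inside a single $u_A$-block yields a cycle $c_A$, labeled by all letters of $A$, that recurs with positive density in $\pi_A$; pigeonhole over the $\binom{n}{n/2}>2^{n/2}$ sets $A$ against the fewer than $2^{n/2}$ states yields two cycles $c_A,c_B$ with $A\neq B$ through a common state $q$. A separate (un)pumping argument shows that each such cycle is either silent or has average weight equal to the run's value $\tfrac12$ --- this is the claim that controls the weight of any spliced material, and it is precisely what your sketch leaves open. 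Inserting $c_B$ into $\pi_A$ at the $q$-positions then leaves the run's value at $\tfrac12$ but produces a word $w_A'$ in which a positive density of blocks contain an allocation from $B\setminus A$, so the true average resource consumption of $w_A'$ exceeds $\tfrac12\geq\valueL{\nonnestedA}(w_A')$, the desired contradiction. Note the direction of the argument: the paper does not make the run cheaper, it makes the \emph{word} more expensive while keeping the run's value fixed. To salvage your pipelined variant you would need an analogue of both the positive-density recurrence of the colliding state in optimal runs and the cycle-weight-neutrality claim; neither follows from what you have written.
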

\begin{proof}
\newcommand{\AllocA}{\Simga_{\textrm{alc}}}
Recall the the automaton $\nestedA$ from Example~\ref{e:resourceConsumption} that expresses
average resource consumption. 
Its size is linearly bounded in the number of processes $n$. 
It remains to show that  
every $\silent{\flimavg}$-automaton expressing $\ARC{n}$,
average resource consumption for $n$ processes, has $2^{\Omega(n)}$ states.
To show that, we need to give a more precise description of the system in consideration 
and its modeling. 

We assume for simplicity that there is only a single resource.
Each process $i \in \{1,\ldots, n\}$ is associated with the following actions:
\emph{start} ($s_i$),
\emph{allocation} of the resource ($a_{i}$),  and \emph{termination} ($t_i$).
Formally, average resource consumption in $w$ is defined as the limit average 
over all positions $p$ at which a process starts $w[p] = s_i$ (for some $i$)
of the indicator ($0$/$1$) whether $a_i$ occurs in $w$ between position $p$ and 
the first occurrence of $t_i$ past $p$.

We show that unless a $\silent{\flimavg}$-automaton has at least $2^{0.5 n}$ states, it cannot 
compute average resource consumption.
Assume towards contradiction that a $\silent{\flimavg}$-automaton $\nonnestedA$
has less than $2^{0.5n}$ states and computes average resource consumption.
For every $A \subseteq \{ a_1, \ldots, a_n\}$, we define a word $u_A \in A^*$
as a periodic listing of all letters from $A$ $|\nonnestedA|$ times, i.e.,
$(b_1 \ldots b_s)^{|\nonnestedA|}$, where $\{ b_1, \ldots, b_s \} = A$.
Consider execution traces $w_A = (s u_A t)^{\omega}$, where $s = s_1 \ldots s_n$,
$t = t_1 \ldots t_n$.
Given a word $w_A$, let $\pi_A$ be a run of $\nonnestedA$ on $w_A$ of the minimal value.
Due to periodicity of $w_u$, such a run exists.
We show the following claim: 
\smallskip

\noindent(*)~
There exist cycles $c_A, c_B$ in the automaton $\nonnestedA$ such
that (1)~ $c_A, c_B$ are labeled with words over different alphabets $A, B$, 
with $|A| = |B| = 0.5n$,
(2)~ $c_A, c_B$ share a state $q$ that occurs in both runs $\pi_A, \pi_B$ with positive density, and
(3)~each of $c_A, c_B$ is either silent or its average weight is $0.5n$.
\smallskip

We shall prove (*) later; first we show that (*) implies that
$\nonnestedA$ does not express average resource consumption.
Indeed, consider a pair $c_A, c_B$ from (*) and $a_i$ from $B \setminus A$.
We insert into the run $\pi_A$ the cycle $c_B$
at all positions where $q$ occurs.
  Let $\pi_A'$ be the resulting run, and let $w_A'$ be the word
that corresponds to $\pi_A'$.
The resulting run $\pi_A'$ has the same value
as $\pi_A$, $\frac{1}{2}$, but average resource consumption in $w_A'$ is higher.
Indeed, in all blocks $s u t$, for some $u$, in which $u$ is different from $u_A$,
the number of different letters in $u$ is at least $|A|+1$.
In the remaining blocks, the number of different letters is $|A|$.
Since blocks $s u t$ with $u \neq u_A$ occur with positive density, average resource 
consumption in $w_A'$ is strictly  higher than $\frac{1}{2}$.
But, the value of $w_A'$ does not exceed  $\frac{1}{2}$ as
$\pi_A'$ is an accepting run on $w_A'$ of the value $\frac{1}{2}$.
It follows that $\nonnestedA$ does not express the average resource consumption property.

Now, we prove (*). Consider a word $w_A$ and
an occurrence of $u_A$ at position $p$ in $w_A$.
Since $|u_A| > |\nestedA| + 2 \cdot |A|$, there is a  state $q$
 that occurs twice in the run $\pi_A$ between positions $p$ and $p + |u_A|$ 
(the part corresponding to the considered occurrence of $u_A$) and
the distance between occurrences of $q$ between $|A|$ and $|u_A| - |A|$. 
These occurrences of $q$ indicate a cycle, $c_{A,p}$,  in $\nestedA$,
which is labeled with  all letters from $A$. 
Indeed, among any $|A|$ consecutive letters in $u_A$ each letter from $A$ occurs.
Now, we select $c_A$ from cycles $c_{A,i}$, where $i$ varies, that occurs with positive density in $\pi_A$. 
Let $q_A$ be the state that occurs in $c_{A}$.

As there are more than $2^{0.5n}$ subsets of $\{a_1, \ldots, a_n\}$ of cardinality $0.5n$, 
there is a state that occurs in two cycles
$c_A, c_B$ with $A \neq B$. It remains to show that $c_A, c_B$
are either silent or their average weight is $\frac{1}{2}$. 
If the average weight of 
$c_A$ is greater than $\frac{1}{2}$, we can decrease the value of $\pi_A$
by removing all occurrences of $c_A$.
Recall that the length of $c_A$ is at most $|u_A| - |A|$, therefore 
if we remove all parts of $w_A$ that correspond to $c_A$, each process in the
resulting word has resource consumption $0.5n$, hence average resource consumption is still $\frac{1}{2}$.
But, the value of the corresponding run is lower than $\frac{1}{2}$, a contradiction.
Conversely, if the value is lower than $\frac{1}{2}$, we can pump that cycle to
obtain a run of the value smaller than $\frac{1}{2}$ on a word whose resource consumption
for each process is $0.5n$.
Thus, $c_A$ is either silent or its average weight value is $\frac{1}{2}$.
\end{proof}

\item \emph{Expressiveness.} More importantly, as mentioned above, for Boolean properties,
monitor automata only add convenience but not expressiveness, whereas we show that 
for quantitative properties, nested weighted automata are strictly more expressive than 
non-nested weighted automata.
Moreover, we show that the added expressiveness of nested weighted automata comes with 
the ability to express natural quantitative properties (like average response time) 
that could not be expressed as non-nested weighted automata.

\item \emph{Quantitative run-time verification.}
Finally, monitor automata are specially useful for safety properties, and widely used in 
run-time verification~\cite{DBLP:conf/tacas/HavelundR02}.
Our nested weighted automata can be seen as the first step towards quantitative
run-time verification.
Each slave automaton acts as a monitor and returns values of subproperties of the system.
If the value function of the master automaton is commutative (as in all our examples),
the master automaton can compute an on-the-fly approximation of the value function 
for finite words.
\end{itemize}

\subsection{Model measuring}
\label{s:applications-model-measuring}

The \emph{model-measuring} problem~\cite{myconcur} asks, given a model and a 
specification, what is the maximal distance $\rho$ such that 
all models within distance $\rho$ from the model satisfy the specification.
Formally, a model $M$ and a specification $S$ are Boolean automata.
Given $M$,  a \emph{similarity measure} (of $M$) is a function $d_M$ from infinite 
words to positive real numbers such that for all traces $w$ in $\lang_M$ we have 
$d_{M}(w) = 0$. 
Similarity measures extend to models in a natural way; i.e., 
$d_{M}(M') = \sup \{ d_M(w) : w$ is a trace of $M' \}$.
The \emph{stability radius} of $S$ in $M$ w.r.t. the similarity measure $d_M$, 
denoted by $\sr_{d_M}(M, S)$, is defined as 
$\sr_{d_M}(M, S) = \sup \{ \rho \geq 0 : \forall M' (d_M(M') <~\rho \Rightarrow \lang_M \subseteq \lang_S) \}$.
We are interested in similarity measures $d_M$ defined by nested weighted automata (resp. weighted automata as in~\cite{myconcur}).
Note that $d_M$ is independent of the specification.
The model-measuring decision problem of whether $\sr_{d_M}(M,S) \leq \lambda$
reduces to the emptiness decision question~\cite{myconcur}.
We now show how nested weighted automata can define interesting 
similarity measures $d_M$.

\begin{example}[Bounded delays]
Consider the model $M$ for two processes communicating through a channel, 
where every sent packet is delivered in the next state.
Let $\$$ denote the event of neither sending or receiving packets, 
$s_1$ and $r_1$ (resp. $s_2$ and $r_2$) the send and receive for process~1 
(resp. process~2).
The language of $M$ can be described as a regular expression as follows: 
$( (\$)^* \cdot (s_1 r_1)^* \cdot (\$)^* \cdot (s_2 r_2)^*)^{\omega}$.

Note that $d_M$ must assign value~0 to every trace in the language of $M$.
Also $d_M$ needs to assign values to traces where the delivery of packets can 
be delayed by a finite amount. 
Hence we first need to relax the language of $M$ as $M_R$ such that every packet 
sent is received with a finite delay; and $d_M$ assigns values to traces in the 
language of $M_R$.
The relaxed language $M_R$ is obtained as follows:
consider the following languages $L_1$ and $L_2$ 
\[
L_1 = ( \$^* \cdot (s_1 \$^* r_1)^* \cdot \$^*)^{\omega}; 
 \text{ and } 
L_2 = ( \$^* \cdot (s_2 \$^* r_2)^* \cdot \$^*)^{\omega};
\]
where $L_1$ denotes that every sent for process~1 can be delayed by a 
finite amount and analogously $L_2$ for process~2.
The language of $M_R$ is the \emph{shuffle} (arbitrary interleavings) 
of $L_1$ and $L_2$.

The similarity measure $d_M$ is defined as a $(\fsup;\fsum^+)$-automaton $\nestedA_D$ 
that computes the maximum delay in the following way. 
When a packet is sent, the master automaton starts a slave
$\fsum^+$-automaton that counts the number of transition until the packet is delivered. 
If no packet is sent, the master automaton takes a silent transition.
The product automaton ${M_R}$ and $\nestedA_D$ defines the desired similarity 
measure.
\end{example}

\subsection{Model repair}
\label{s:applications-inner-model-measuring}
The \emph{model-repair} problem, given a model and a specification, 
asks for the minimal restriction of the model such that the specification 
is satisfied.
Given a model $M$, a \emph{repair measure} $d_M$ is a function 
from infinite words to real numbers such that 
$d_M(w) < \infty$ iff $w\in \lang_M$. 
Intuitively, the measure evaluates the hardness of traces of $M$, 
which can be used to evaluate severity of the violation of the specification.
We are interested in $d_M$ specified by nested weighted automata (resp. weighted automata).
Given a model $M$, a repair measure $d_M$, and a real number $r$, 
we define the language $d_M^{<r}$ as $\{ w : d_M(w) < r \}$.
The model-repair decision problem, given a model $M$, 
a repair measure $d_M$, and a specification $S$, asks whether 
$\sup \{r : d_M^{<r} \subseteq \lang_S \} \leq \lambda$.
The model-repair decision problem also reduces to the emptiness question.

\begin{example}[Context-switches]
Consider a system consisting of a scheduler and two programs.
The scheduler starts processes infinitely often and does
preemptive scheduling. 
To obtain a finite-state model, we consider that only a single instance of 
each program may run at a given time.
Consider the repair measure $d_M$ that represents the negative of the \emph{minimal slot length}, i.e., 
for all $w$ we have $d_M(w)=-k$ iff each process in the execution $w$ runs for at least $k$ steps.  
The repair measure can be defined by a functional $(\fsup; \fsum)$-automaton $\nestedA_R$
as follows. After each context-switch, the master automaton starts an automaton that computes 
the running time until the next context-switch and multiplies it by $-1$ (i.e., add~$-1$ at each 
step). 
At steps at which there is no context switch, the master automaton takes a silent transition.
It follows that the supremum of all those values is the length of the shortest running time
of a process multiplied by $-1$.
Although, the emptiness problem is undecidable for $(\fsup; \fsum)$-automata,
the automaton $\nestedA_R$ has only non-positive weights. 
The emptiness problem for $(\fsup; \fsum)$-automata with non-positive weights reduces to the universality 
problem for $(\finf; \fsum^+)$-automata, which is decidable.
\end{example}

\begin{remark}[Decidability of examples]
Note that for all examples presented in the paper, they belong 
to the class of nested weighted automata for which we establish decidability of
the emptiness problem.
\end{remark}

\begin{remark}[Robustness of nested weighted automata]
The model of nested weighted automata is robust with respect
to several changes, e.g., (i)~instead of labeling function on transitions 
we can have labeling function on states; or 
(ii)~instead of invoking one slave automaton in every transition a constant 
number of slave automata can be invoked. 
These changes do not change the expressive power, nor the 
decidability and the complexity results for nested weighted automata.
\end{remark}

\section{Emptiness of $(\flimavg;\fsum^+)$-automata is in $\EXPSPACE$}
\label{s:proof-limavg}

\newcommand{\BoNumberOfSA}{\mathbf{c}}
\newcommand{\BoValueOfSA}{\mathbf{d}}
\newcommand{\conf}{{\cal C}}
\newcommand{\mult}{{\mathsf{mult}}}

\newcommand{\boundOnMulti}{\mathbf{N}}
\newcommand{\dom}{\mathrm{dom}}

\newcommand{\tikzAccummulation}{
\node[rectangle,draw] (w1) at (0,0) {$2$};
\node[rectangle,draw] (w2) at (0.5,0) {$3$};
\node[rectangle,draw] (w3) at (1,0) {$1$};
\node[rectangle,draw] (w4) at (1.5,0) {$0$};
\node[rectangle,draw] (w5) at (2,0) {$0$};
\node at (3.7,0) {master automaton}; 
\node[rectangle,draw] (s1) at (0,2.2) {$0$};
\node[rectangle,draw] at (0.5,2.2) {$1$};
\node[rectangle,draw] at (1,2.2) {$0$};
\node[rectangle,draw] at (1.5,2.2) {$1$};
\node[rectangle,draw, thick,minimum height=0.51cm, minimum width=2.1cm]  at (0.7,2.2) {};
\node at (3.7,2.2) {slave automaton 1}; 
\node[rectangle,draw] (s2) at (0.5,1.65) {$1$};
\node[rectangle,draw] at (1,1.65) {$0$};
\node[rectangle,draw] at (1.5,1.65) {$1$};
\node[rectangle,draw] at (2,1.65) {$1$};
\node[rectangle,draw, thick,minimum height=0.51cm, minimum width=2.1cm] at (1.2,1.65) {};
\node at (3.7,1.65) {slave automaton 2}; 
\node[rectangle,draw] (s3) at (1,1.1) {$0$};
\node[rectangle,draw] at (1.5,1.1) {$1$};
\node[rectangle,draw] at (2,1.1) {$0$};
\node[rectangle,draw, thick, minimum height=0.51cm, minimum width=1.6cm] at (1.5,1.1) {};
\node at (3.7,1.1) {slave automaton 3}; 
\draw[->] (s1) to (w1);
\draw[->] (s2) to (w2);
\draw[->] (s3) to (w3);
\begin{scope}[xshift=5.8cm]
\node[rectangle,draw] (w1) at (0,0) {$0$};
\node[rectangle,draw] (w2) at (0.5,0) {$2$};
\node[rectangle,draw] (w3) at (1,0) {$0$};
\node[rectangle,draw] (w4) at (1.5,0) {$3$};
\node[rectangle,draw] (w5) at (2,0) {$1$};
\node at (0.9,-0.5) {simulation weights}; 
\node[rectangle,draw] (s1) at (0,2.2) {$0$};
\node[rectangle,draw] (s2) at (0.5,2.2) {$1$};
\node[rectangle,draw] (s3) at (1,2.2) {$0$};
\node[rectangle,draw] (s4) at (1.5,2.2) {$1$};
\node[rectangle,draw] (s2) at (0.5,1.65) {$1$};
\node[rectangle,draw] at (1,1.65) {$0$};
\node[rectangle,draw] at (1.5,1.65) {$1$};
\node[rectangle,draw] at (2,1.65) {$1$};
\node[rectangle,draw] (s3) at (1,1.1) {$0$};
\node[rectangle,draw] at (1.5,1.1) {$1$};
\node[rectangle,draw] at (2,1.1) {$0$};
\node[rectangle,draw, thick, minimum height=1.2cm, minimum width=0.51cm] (s2) at (0.5,1.9) {};
\node[rectangle,draw, thick, minimum height=1.7cm, minimum width=0.51cm] (s3) at (1.0,1.65) {};
\node[rectangle,draw, thick, minimum height=1.7cm, minimum width=0.51cm] (s4) at (1.5,1.65) {};
\node[rectangle,draw, thick, minimum height=1.2cm, minimum width=0.51cm] (s5) at (2,1.4) {};
\draw[->] (s2) to (w2);
\draw[->] (s3) to (w3);
\draw[->] (s4) to (w4);
\draw[->] (s5) to (w5);
\end{scope}
}

\newcommand{\cond}{{\mathfrak C}}
\newcommand{\nestedAMod}{\nestedA^{\#}}
\newcommand{\nonnestedAMod}{\nonnestedA^{\#}}
\newcommand{\Qslv}{Q_{\textrm{slv}}}

In this section we prove that the emptiness problem for 
$(\flimavg;\fsum^+)$-automata is in $\EXPSPACE$ ((1) from Theorem~\ref{thm:limavgsum}), as the proof 
itself is interesting and requires new and non-standard techniques.
We first present an overview of the proof.

\smallskip\noindent{\em Overview of the proof.}
The key argument will be to \emph{simulate} a given $(\flimavg;\fsum^+)$-automaton $\nestedA$ 
by a $\silent{\flimavg}$-automaton, however, the main conceptual difficulty is that
$(\flimavg;\fsum^+)$-automata are strictly more expressive than $\silent{\flimavg}$-automata.
We circumvent this problem (which is non-standard for weighted automata) in the following way:
\begin{enumerate}
\item \emph{Step~1.} 
We establish a property $\cond$ on runs of a $(\flimavg;\fsum^+)$-automaton $\nestedA$ such that 
(a)~the infimum over values of runs satisfying $\cond$ is the same as the infimum over values of all runs, and 
(b)~there is a $\silent{\flimavg}$-automaton that simulates $\nestedA$ on runs satisfying $\cond$.

\item \emph{Step~2.} We give the construction of a $\silent{\flimavg}$-automaton $\nonnestedA$
specified in the condition (b) from Step~1.

Although, $\nonnestedA$ simulates $\nestedA$, weighted automata and nested weighted automata 
accumulate weights in a different way;
a run of $\nestedA$ that satisfies $\cond$ and
the corresponding run of $\nonnestedA$ can have different values. 

\item \emph{Step~3.} We show that the infima over values of all runs 
of $\nestedA$ and $\nonnestedA$ are equal. 
\end{enumerate}

\smallskip\noindent{\bf Proof of Step~1.} We first introduce the notion of bounded multiplicity.

\smallskip\noindent{\em Configuration and multiplicities.}
In nested weighted automata, starting a slave automaton can be seen as a universal transition 
in the sense of alternating automata. 
We adapt the power-set construction, which is used to convert alternating
automata to non-deterministic automata, to the nested weighted automata case. 
Given a nested weighted automaton $\nestedA$, we define \emph{configurations} and \emph{multiplicities} of $\nestedA$
as follows.
Let $\Qslv$ be the disjoint union of the sets of states of all slave automata of $\nestedA$.
For a run of $\nestedA$, we say that $(q_m, A)$ is the \emph{configuration} at position $p$ if
$q_m$ is the state of the master automaton at position $p$
and $A \subseteq \Qslv$ is the set of states of slave automata at position $p$.
We denote by $\confNum{\nestedA}$ the number of configurations of $\nestedA$.
We define the \emph{multiplicity} $\mult$ at position $p$ as the function
$\mult : \Qslv \mapsto \N$, such that $\mult(q)$ specifies the number of 
slave automata in the state $q$ at position $p$.
The configuration together with the multiplicity give a complete 
description of the state of $\nestedA$ at position $p$.

\smallskip\noindent{\em Optimal runs.}
The general idea to solve the emptiness problem for $(\flimavg;\fsum^+)$-automata is
to simulate a given $(\flimavg;\fsum^+)$-automaton  by a $\silent{\flimavg}$-automaton that keeps track of 
configurations and multiplicities. 
Unfortunately, unbounded multiplicities cannot be encoded in a finite set of states 
of a $\silent{\flimavg}$-automaton. 
But, the emptiness problem can be solved by inspecting only selected runs.
More precisely, given a $(\flimavg;\fsum^+)$-automaton $\nestedA$, we say that runs
satisfying a condition $\cond$ are \emph{optimal} for $\nestedA$ iff 
the infima over values of all runs of $\nestedA$ and runs that satisfy $\cond$
are equal. We identify a condition such that the runs satisfying it are optimal and
can be simulated by a $\silent{\flimavg}$-automaton.
First, we observe that 
without loss of generality we can assume that nested weighted automata are deterministic. 
Basically, non-deterministic choices can be encoded in the input alphabet.

\begin{restatable}{lemma}{WLOGDeterministicLimAvgSum}
\label{WLOGDeterministicLimAvgSum}
Given a $(\flimavg; \fsum^+)$-automaton $\nestedA$ over $\Sigma$,
one can compute in polynomial space a deterministic $(\flimavg; \fsum^+)$-automaton $\nestedA'$
over an alphabet $\Sigma \times \Gamma$ such that
$\inf_{w \in \Sigma^+} \valueL{\aut}(w) = \inf_{w' \in (\Sigma \times \Gamma)^+} \valueL{\aut'}(w')$.
Moreover, $\confNum{\nestedA} = \confNum{\nestedA'}$.
\end{restatable}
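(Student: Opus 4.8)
The plan is to realize the standard ``non-determinism can be pushed into the input alphabet'' normalization, where the only genuine difficulty is that unboundedly many copies of a slave automaton may run concurrently, so I cannot afford to annotate the choice of each copy separately. The key idea is to annotate a \emph{per-state} instruction rather than a per-copy instruction, and then to argue that forcing all copies that sit in the same slave state to behave identically loses nothing for the infimum. Concretely, I would let the alphabet of $\nestedA'$ be $\Sigma\times\Gamma$, where a symbol of $\Gamma$ read at a position records (i)~the master transition to be taken there, and (ii)~an instruction function $\iota\colon \Qslv \to (\text{transition}\cup\{\mathsf{stop}\})$ telling, for each slave control state $q$, what every slave currently in $q$ must do on the current letter. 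Since $|\Qslv|$ and the number of master transitions are bounded, $\Gamma$ is finite and $\nestedA'$ can be described in polynomial space (each transition checkable in polynomial time). The master of $\nestedA'$ reuses the states of $\masterA$ and follows the named transition; each $\slaveA_i'$ reuses the states of $\slaveA_i$ and, when in state $q$, follows $\iota(q)$, terminating exactly when $\iota(q)=\mathsf{stop}$. As termination is dictated by $\Gamma$ and nothing leaves a $\mathsf{stop}$ configuration, every $\slaveA_i'$ recognizes a prefix-free language; together with the now-deterministic transition functions this makes $\nestedA'$ deterministic. We may assume $\masterA$ has a single initial state, so the initial choice is folded into $\Gamma$ as well; since no state is duplicated, the reachable configurations $(q_m,A)$ are exactly those of $\nestedA$, giving $\confNum{\nestedA}=\confNum{\nestedA'}$.

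One inequality is immediate. For any $w'=(w,\gamma)$, the unique accepting run of the deterministic $\nestedA'$ projects to a legal accepting run of $\nestedA$ on $w$ of the same value, so the value of $w$ under $\nestedA$ is at most that of $w'$ under $\nestedA'$; hence $\inf_w \valueL{\nestedA}(w) \le \inf_{w'}\valueL{\nestedA'}(w')$.

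The crux is the reverse inequality, namely that per-state (uniform) resolution already attains the infimum, and this is where I expect the real work to lie. I would use two facts about $(\flimavg;\fsum^+)$. First, $\silent{\flimavg}$ is monotone in the sequence of returned slave values: lowering any returned value cannot increase the limit-average. Second, and this is the decisive point, because $\fsum^+$ accumulates only non-negative increments, the least value with which a slave of type $i$ sitting in state $q$ at position $r$ can still reach an accepting state is a ``value-to-go'' $V_i(q,w[r,\infty])$ that depends only on $q$ and on the suffix $w[r,\infty]$, not on where the slave was started. Consequently the greedy policy ``from $q$ take a transition realizing $V_i(q,\cdot)$ and stop at the cheapest accepting point'' is optimal for \emph{every} copy in state $q$ at once, and it is exactly a per-state instruction function. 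Given an arbitrary accepting run $\rho$ of $\nestedA$ on $w$ with slave return values $(x_p)_p$, I keep the master run of $\rho$ and replace each slave run by its greedy run; the slave started at $p$ then returns $V_i(q_0^i,w[p,\infty]) \le x_p$ (finite, since $x_p$ already witnesses a finite accepting value, so the greedy run terminates), the overall run stays accepting, and by monotonicity its value is at most the value of $\rho$. This greedy resolution is induced by the annotation $\gamma$ whose instruction at each position sends every state to its $V_i$-realizing transition, so the word $w'=(w,\gamma)$ is read deterministically by $\nestedA'$ with value at most the value of $\rho$. Taking infima over $\rho$ and $w$ yields $\inf_{w'}\valueL{\nestedA'}(w') \le \inf_w \valueL{\nestedA}(w)$, and combined with the easy direction this gives equality.

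The main obstacle is precisely isolating and proving this per-state optimality claim, since it is what allows a finite annotation to steer arbitrarily many concurrent slaves. I expect to spend the care there: defining $V_i$ as a shortest-accepting-path quantity for non-negative weights, verifying that it is well defined on the accepting runs we care about, and checking determinism and prefix-freeness of the annotated slaves. Everything else---encoding the master transitions, the single-initial-state normalization, and the configuration-count bookkeeping---is routine.
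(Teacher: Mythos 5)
Your overall architecture is the same as the paper's: make all choices per-state rather than per-copy, encode them as a function over $\Qslv$ (plus the master transition) in a second alphabet component $\Gamma$, observe that the easy direction is just projection, and reduce the hard direction to the claim that \emph{uniform} runs (all slaves sharing a state at a position behave identically) already attain the infimum. Where you genuinely diverge is in how that claim is proved. The paper argues by local surgery: whenever two slave runs meet in the same state at the same position, it replaces both suffixes by the cheaper of the two (breaking ties by length), and argues this terminates per slave and never increases partial sums; uniformity is then obtained in the limit of these swaps. You instead define a global ``value-to-go'' $V_i(q,w[r,\infty])$ and replace every slave run by the greedy run, which is uniform by construction. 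Your route is more canonical and avoids the iterative fixed-point argument, at the price of having to justify that $V_i$ is well defined and attained (fine: non-negative increments from a finite set of rationals give a discrete, well-ordered value set) and that the greedy policy terminates --- realizing the Bellman minimum at every step does \emph{not} by itself prevent cycling through zero-weight optimal loops, so you need the same length tie-break the paper uses.

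Two points you should pin down. First, your monotonicity step (``lowering any returned value cannot increase the limit-average'') covers lowering entries but not \emph{deleting} them: if a slave's initial state is accepting, the greedy run may collapse to the empty run, turning a non-silent invocation into a silent one, and removing a small entry from a sequence can raise its limit-average (e.g.\ $0,1,0,1,\dots$ versus $1,1,\dots$). The clean fix is the normalization the paper makes before its construction --- assume final states of slave automata have no outgoing transitions --- which makes termination, and hence silence, forced rather than chosen, and removes the stop/continue ambiguity from your instruction function $\iota$ as well. Second, your $\mathsf{stop}$ instruction applies to \emph{all} copies in state $q$, including a freshly invoked one whose stopping makes the transition silent and a mid-run one whose stopping just ends its word; under the same normalization this conflict disappears. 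With those repairs your argument goes through and is, if anything, a tidier proof of the key sub-claim than the paper's.
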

\begin{proof}
The proof consists of two steps. 
We show that (i)~for every run $(\masterRun, \slaveRun_1, \slaveRun_2, \dots)$ of $\nestedA$ there exists
a \emph{simple} run of $\nestedA$ of the value not exceeding the value of $(\masterRun, \slaveRun_1, \slaveRun_2, \dots)$. 
Next, we show that (ii)~there exists a deterministic $(\flimavg; \fsum^+)$-automaton $\nestedA'$ over
an extended alphabet such that 
the sets of accepting simple runs of $\nestedA$ and accepting runs of $\nestedA'$
coincide and each run has the same value in both automata. Then (i) and (ii) imply the lemma statement.

(i): A run of a nested weighted automaton is \emph{simple} if at every position in the run 
slave automata that are in the same state take the same transition.
Now, consider a run $(\masterRun, \slaveRun_1, \slaveRun_2, \dots)$ of $\nestedA$.
Suppose that $\pi_i, \pi_{j}$ that are in the same state 
at the position $s$ in the word, i.e., $\pi_i[i'] = \pi_j[j']$,  where
$i', j'$ are the position in $\pi_i, \pi_j$ corresponding to the position $s$ in $w$.

We choose from the suffixes $\pi_i[i', |\pi_i|], \pi_j[j', |\pi_j|]$ the one with the smaller value
and change the suffixes of both runs to the chosen one. 
If these suffixes have the same value, we chose the shorter one. 
Such a transformation does not increase the value of the partial sums and
does not introduce infinite runs of slave automata. 
Indeed, a run of each slave automaton can be changed by such an operation only finitely many times.
Thus, this transformation can be applied to any pair of slave runs to 
obtain a simple run of the value not exceeding the value of $(\masterRun, \slaveRun_1, \slaveRun_2, \dots)$.

\newcommand{\Qall}{Q_{\textrm{all}}}
(ii): 
Without loss of generality, we can assume that  for every slave automaton in $\nestedA$
final states have no outgoing transitions. 
Let $\Qall$ be the disjoint union of 
the sets of states of the master automaton and all slave automata of $\nestedA$.
We define $\Gamma$ as the set of all partial functions $h : \Qall \mapsto \Qall$.
We define a $(\flimavg; \fsum^+)$-automaton $\nestedA'$ over the alphabet $\Sigma \times \Gamma$ 
by modifying only the transition relations and labeling functions of the master automaton and slave automata of $\nestedA$; 
the sets of states and accepting states are the same as in the original automata.
The transition relation and the labeling function of the master automaton $\masterA'$ of
$\nestedA'$ is defined as follows: for all states $q,q'$, 
$(q, \lpair{a}{h},q')$ iff $h(q) = q'$ and $\masterA$ has the transition $(q,a,q')$.
The label of the transition $(q, \lpair{a}{h},q')$ is the same as the label 
of the transition $(q,a,q')$ in $\masterA$.  Similarly, 
for each slave automaton $\slaveA_i$ in $\nestedA$, the transition relation of the corresponding
slave automaton $\slaveA_i'$ in $\nestedA'$ is defined as follows:
for all states $q,q'$ of $\slaveA_i'$, 
$(q, \lpair{a}{h},q')$ iff $h(q) = q'$ and $\slaveA_i$ has the transition $(q,a,q')$.
The label of the transition $(q, \lpair{a}{h},q')$ is the same as the label 
of the transition $(q,a,q')$ in $\slaveA_i$. 

First, we see that $\confNum{\nestedA} = \confNum{\nestedA'}$.
Second, observe that the master automaton $\masterA'$ and all slave automata $\slaveA_i'$
are deterministic. Moreover, since we assumed that for every slave automaton in $\nestedA$
final states have no outgoing transitions, slave automata $\slaveA_i'$ recognize 
prefix free languages. 
Finally, it follows from the construction that
(i)~for every simple run $(\masterRun, \slaveRun_1, \slaveRun_2, \dots)$ of $\nestedA$
is also a run of $\nestedA$ of the same value. One needs to encode non-deterministic transitions in functions $h \in \Gamma$.
The value of each transition is the same by the construction.
Conversely, (ii) a run $(\masterRun, \slaveRun_1, \slaveRun_2, \dots)$ of $\nestedA'$
is a simple run of $\nestedA$ of the same value. Indeed, the fact that transitions are
directed by functions $h \in \Gamma$ implies that the run is simple.
\end{proof}

We attempt to simulate $(\flimavg;\fsum^+)$-automaton $\nestedA$ by a $\silent{\flimavg}$-automaton.
For that, we need to show that runs with bounded multiplicities or bounded values returned by slave automata
are optimal for $\nestedA$; otherwise the state space of the simulating automaton 
would have to be unbounded.
However, such a direct statement does not hold as we see in the following example.

\begin{example}
\label{e:OptimalUnbounded2}
Consider a $(\flimavg;\fsum^+)$-automaton $\nestedA$ over $\{a,b\}$ such that
on letter $a$  (resp. $b$) the master automaton starts a slave automaton $\slaveA_a$ (resp. $\slaveA_b$).
The automaton $\slaveA_a$ accepts words $a^*b$, and for a word $a^k b$ assigns value $k \mod 2$.
The automaton $\slaveA_b$ accepts words $ba^*b$, and for a word $b a^k b$ assigns value $k + 2$. 
Observe that the infimum over all runs of $\nestedA$ is $\frac{3}{2}$.

At the end of a block of $k$ letters $a$, 
the number of slave automata running concurrently is $k+1$, one automaton $\slaveA_b$ and 
$k$ automata $\slaveA_a$, and the  value returned by $\slaveA_b$ is $k+2$. 
It follows that if the multiplicity of a run is bounded by $k+1$
or the maximal returned values are bounded by $k+2$,
lengths of all block of $a$'s are bounded by $k$. 
However, if the length of block's of letter $a$ are bounded by $k$, 
the value of such a run is at least $\frac{3\cdot k + 4}{2 \cdot (k+1)}$. 
Thus, runs of bounded multiplicity 
or bounded returned value are not optimal for $\nestedA$. 
\end{example}

Example~\ref{e:OptimalUnbounded2} shows that we cannot bound the number of 
slave automata running concurrently or the values returned by slave automata. 
However, we can combine these two conditions, i.e., 
we show that while computing the infimum over values of runs of an  $(\flimavg;\fsum^+)$-automaton,
there is a constant $\boundOnMulti$ such that
we can discard runs in which more than $\boundOnMulti$ slave automata accumulate value above $\boundOnMulti$.
Then, slave automata that return bounded values are essentially bounded sum automata,
and can be eliminated, and only bounded number of slave automata returning unbounded values remain.
E.g. the automaton $\nestedA$ from Example~\ref{e:OptimalUnbounded2} is equivalent to a $(\flimavg;\fsum^+)$-automaton
$\nestedAMod$ that, instead of starting a slave automaton $\slaveA_a$, 
guesses the parity of the following block of letters $a$ and, 
based on that guess, starts $\slaveA_{a0}$ or $\slaveA_{a1}$,
which terminates after a single transition and returns $0$ ($\slaveA_{a0}$) or $1$ ($\slaveA_{a1}$).
The master automaton verifies the correctness of the guessed parity.
  
\smallskip\noindent{\em Synchronized silent transitions.} 
The bounded multiplicity property enables simulation of $\nestedA$ by $\silent{\flimavg}$-automaton, but 
it does not guarantee that the corresponding runs have the same value. 
We impose the following condition on optimal runs of $\nestedA$.
We say that a run of $\nestedA$ has \emph{synchronized silent transitions} if at every
position where the master automaton takes a silent transition, each 
slave automaton takes a transition of weight $0$.
On runs of $\nestedA$ with synchronized silent transitions,
the $\silent{\flimavg}$-automaton can take a silent transition
whenever the master automaton of $\nestedA$ takes a silent transition 
as no weighted transition is lost. 
To achieve synchronization of silent transitions,
we modify slave automata so that
during silent transitions of the master automaton, 
slave automata accumulate their values in their states, while taking transitions of weight $0$, 
and flush the accumulated value $\const$ by taking transition of weight $\const$
once the master automaton takes a non-silent transition. 
We prove that runs with sequences of silent transitions 
bounded by $\confNum{\nestedA}$ are optimal for $\nestedA$, therefore
slave automata have to accumulate only bounded weights.

We combine the ideas for bounding the multiplicity and synchronization of silent 
transitions in the following lemma.

\begin{restatable}{lemma}{BoundedWidth}
Let $\nestedA$ be a deterministic $(\flimavg;\fsum^+)$-automaton. 
There is a constant $\BoNumberOfSA$ quadratically bounded in $\confNum{\nestedA}$ and 
a deterministic $(\flimavg;\fsum^+)$-automaton $\nestedA_0$
equivalent to $\nestedA$ such that runs that have
(1)~multiplicities bounded by $\BoNumberOfSA$, and
(2)~synchronized silent transitions, 
are optimal for $\nestedA_0$.
The size of $\nestedA_0$ is exponentially bounded in $|\nestedA|$.
\label{l:bound-on-slave}
\end{restatable}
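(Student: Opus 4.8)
The plan is to assemble the two reductions sketched in the preceding discussion---bounding consecutive silent transitions (for condition~(2)) and bounding the number of ``heavy'' concurrent slaves (for condition~(1))---into a single construction, while keeping the automaton deterministic and of singly-exponential size. By Lemma~\ref{WLOGDeterministicLimAvgSum} I may assume $\nestedA$ is deterministic; let $\boundOnWeights$ bound the weights of its slave automata. I would set $\boundOnMulti = \Theta(\confNum{\nestedA})$ and $\BoNumberOfSA = O(\confNum{\nestedA}^2)$, and build $\nestedA_0$ by applying to $\nestedA$ a \emph{silent-synchronization} layer followed by a \emph{light/heavy separation} layer.

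\smallskip\noindent\textbf{Synchronizing silent transitions.} First I would show that runs whose maximal block of consecutive silent master transitions has length at most $\confNum{\nestedA}$ are optimal. The argument is a pumping argument on configurations: inside a silent block longer than $\confNum{\nestedA}$ some configuration $(q_m,A)$ repeats at positions $p_1<p_2$; excising the segment $[p_1,p_2)$ reconnects the master run (same state $q_m$) and, using determinism, continues every slave alive in a state $s\in A$ with the unique future of a slave occupying $s$ at $p_2$. Since $[p_1,p_2)$ contains no non-silent position, the multiset of terms fed to $\silent{\flimavg}$ is unchanged except that the shortened slave runs return \emph{smaller} values (monotonicity of $\fsum^+$ on non-negative weights), so the value does not increase. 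Given this bound I introduce the deferral gadget: each slave takes weight-$0$ transitions while the master is silent, stores the weight it would have accumulated in a counter bounded by $\boundOnWeights\cdot\confNum{\nestedA}$, and flushes it as a single weighted transition at the next non-silent master transition. This leaves the $\fsum^+$-value of every slave unchanged, so the resulting automaton is equivalent to $\nestedA$, every one of its runs has synchronized silent transitions, and the cost is a factor $O(\boundOnWeights\cdot\confNum{\nestedA})$ per slave.

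\smallskip\noindent\textbf{Bounding multiplicity.} As Example~\ref{e:OptimalUnbounded2} shows, neither the number of concurrent slaves nor the returned values can be bounded in isolation, so I bound their combination: for the infimum it suffices to consider runs in which, at every position, at most $\boundOnMulti$ of the concurrently running slaves eventually accumulate a value exceeding $\boundOnMulti$. To discard the other runs, suppose infinitely often more than $\boundOnMulti$ such heavy slaves run together; because a slave accumulating more than $\boundOnMulti\ge\confNum{\nestedA}$ must traverse a repeating configuration while strictly increasing its value, one can pump these loops to obtain, with positive density, positions whose returned value grows without bound, forcing $\flimavg$ to diverge; such runs are irrelevant whenever the infimum is finite (the infinite-value case is immediate). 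I then realize this bound structurally by guessing, for each invoked slave, whether it is \emph{light} ($\le\boundOnMulti$) or \emph{heavy}. A light slave is replaced, exactly as in $\nestedAMod$, by a one-shot slave returning the guessed value, while the master verifies the guess by tracking the $\fBsum{\boundOnMulti}$-computation of the original slave; since $\fBsum{\boundOnMulti}$ is a regular value function, the verification needs only the \emph{set} of reached states (a power-set component of the master, as in Lemma~\ref{t:regular-non-nested}), not their multiplicity. Heavy slaves are kept as genuine $\fsum^+$ slaves, and on the runs selected above there are at most $\boundOnMulti$ of them; grouping by state yields multiplicity at most $\BoNumberOfSA=O(\confNum{\nestedA}^2)$.

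\smallskip\noindent\textbf{Main obstacle and size.} The delicate part is the optimality of bounded multiplicity: the pumping must preserve acceptance and the simple/deterministic structure of Lemma~\ref{WLOGDeterministicLimAvgSum}, and must genuinely produce a \emph{positive-density} family of unboundedly heavy returns, so that the limit-average provably diverges rather than being diluted by the remaining positions. A second subtlety is that the light/heavy guess must be \emph{verified} so that $\nestedA_0$ preserves the value on every run (not merely the infimum) and stays deterministic; this is exactly what the power-set verification component provides. For the size bound, the synchronization counters contribute a factor $O(\boundOnWeights\cdot\confNum{\nestedA})$ per slave, the light-slave verification contributes a power-set factor $2^{|\Qslv|}$ in the master, and the bounded collection of heavy slaves contributes a further singly-exponential factor; altogether $|\nestedA_0|=2^{O(|\nestedA|)}$, as claimed.
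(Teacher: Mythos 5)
Your overall architecture matches the paper's: excise repeated configurations inside long silent blocks and add a defer-and-flush gadget to synchronize silent transitions, then separate slaves into light ones (replaced by one-shot dummies whose guessed value is verified via a regular, power-set check) and a bounded number of heavy ones. The silent-synchronization half is essentially the paper's Lemma~\ref{many-non-silent} plus its synchronization construction, and is fine modulo the accepting-state bookkeeping you already flag.

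However, your optimality argument for the multiplicity bound --- the heart of the lemma, corresponding to the paper's Lemma~\ref{lim-avg-technical} --- does not work. You argue that a run in which more than $\boundOnMulti$ heavy slaves run concurrently infinitely often can be \emph{pumped} so that returned values grow without bound, hence $\flimavg$ diverges, hence such runs are ``irrelevant.'' This is backwards: pumping produces a \emph{different, worse} run on a different word; it says nothing about the value of the violating run itself. To discard a run from the infimum you must exhibit a run \emph{satisfying the condition} whose value is no larger. Moreover, the violating runs need not have large value at all: ``heavy'' means accumulating more than the fixed constant $4\boundOnMulti$, so a run in which, say, $3\boundOnMulti$ slaves each return exactly $4\boundOnMulti+1$ at every position has a small finite limit average and cannot be dismissed as divergent. (Your implicit appeal to the finite/infinite dichotomy is the argument the paper uses only for \emph{functional} automata, where all accepting runs share one value; it does not transfer to the general emptiness setting.) The paper instead performs a value-\emph{non-increasing} surgery: at a violating position $|u|$ it injects a short word $v$ (length at most $\boundOnMulti$, obtained by locating a later position with the same configuration and the same $\boundOnMulti$-capped multiplicity, after which all slaves active at $|u|$ have terminated, and then removing superfluous cycles), and shows by explicit accounting that the saving from collapsing the multiplicity of heavy slaves from above $2\boundOnMulti$ down to at most $\boundOnMulti$ --- at least $4\boundOnMulti\cdot\boundOnMulti$ --- strictly exceeds the at most $2\boundOnMulti^{2}$ of extra weight accumulated across $v$, so the partial sums, and hence the limit average, do not increase. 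Some argument of this ``repair without increasing the value'' type is needed; without it the claimed optimality of condition~(1), and therefore the lemma, is not established.
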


Before we prove Lemma~\ref{l:bound-on-slave}, we show its vital components:
\begin{lemma}
Let $\nestedA$ be a deterministic $(\flimavg; \fsum^+)$-automaton that
has an accepting run.
Runs such that among every consecutive $\confNum{\nestedA}$ 
steps, the master automaton of $\nestedA$ takes a non-silent transition
are optimal for $\nestedA$.
\label{many-non-silent}
\end{lemma}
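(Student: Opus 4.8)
The plan is to argue on a single run and surgically remove the configuration-loops that occur inside overly long silent stretches. By Lemma~\ref{WLOGDeterministicLimAvgSum} I may assume $\nestedA$ is deterministic and restrict attention to \emph{simple} runs, so that at every position the whole population of slaves is described by its configuration $(q_m,A)$ together with the multiplicity $\mult$, and all slaves sharing a state are interchangeable. I fix an arbitrary accepting run $\rho=(\masterRun,\slaveRun_1,\slaveRun_2,\dots)$ of value $v$, and aim to produce an accepting run in which the master never takes $\confNum{\nestedA}$ silent transitions in a row, without increasing the value.

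The basic move is a \emph{cut}. Suppose that between positions $p<p'$ the master takes only silent transitions and the configurations at $p$ and $p'$ coincide; inside any block of $\confNum{\nestedA}$ consecutive silent steps such a pair exists, since the $\confNum{\nestedA}+1$ positions the block touches cannot all carry distinct configurations. I delete the infix $w[p+1,p']$ and reroute the run: the master, being in state $q_m$ at both $p$ and $p'$, continues from $p$ exactly as it did from $p'$; every slave still running at $p$ sits in some state $q\in A$, and since the set of occupied states at $p'$ is again $A$, there is in $\rho$ a slave reading the suffix $w[p'+1,\dots]$ from $q$ and terminating in an accepting state, so I let the slave at $p$ adopt that continuation. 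Slaves invoked after $p'$, and slaves already terminated by $p$, are untouched. This is a genuine accepting run on the shortened word: the master's \buchi{} condition and the requirement of infinitely many non-silent transitions are inherited from the unchanged tail, and every slave still terminates accepting.

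The key observation is that one cut preserves the $\flimavg$-value. A cut deletes only positions carrying silent (hence $\bot$) invocations, so it removes no real entry from the value sequence; the only real returned values that change are those of the finitely many slaves running across the excised loop, because the finite populations emanating from $p$ and from $p'$ die out after finitely many further steps, after which the two runs coincide. As $\flimavg$ is the $\limsup$ of the running average of the $\bot$-free sequence, altering finitely many of its entries cannot change it, so the cut preserves $v$. Iterating the cut inside each long silent stretch shortens every stretch below $\confNum{\nestedA}$, and performing the cuts from left to right converges (each word position stabilizes after finitely many cuts) to a well-defined run $\rho'$ with the desired density of non-silent transitions.

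The hard part, on which I would spend most care, is that reaching $\rho'$ may require \emph{infinitely} many cuts, so the finite per-cut bookkeeping does not immediately bound the value of the limit run: a priori the (possibly positive) finite per-cut changes could accumulate and raise the $\limsup$. To control this I would set up the order-preserving bijection between the non-silent invocations of $\rho$ and of $\rho'$ (cuts never delete a non-silent position, so the normalizing count in the running average matches on corresponding prefixes) and then bound the aggregate change by an instantaneous-weight accounting: at each deleted position the slaves running there contribute a \emph{non-negative} amount that is erased by the cut, and this erased weight is to be charged against the weight of the rerouted suffixes. Exploiting simplicity and $\fsum^{+}$ non-negativity, the goal is to show that the total weight accumulated over any prefix of length $L$ of $\rho'$ exceeds that of the matched prefix of $\rho$ by at most $o(L)$, which forces the value of $\rho'$ to be at most $v$ and yields optimality; making this accounting uniform across the infinitely many, possibly overlapping, cut-windows is the crux of the proof.
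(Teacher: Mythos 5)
Your overall strategy coincides with the paper's: inside every silent stretch of length $\confNum{\nestedA}$ locate two positions with the same configuration, excise the loop, reroute the master and the surviving slave suffixes, observe that only silent positions are deleted so the normalizing count of non-silent transitions is unaffected, and iterate. However, the proof is not complete as written, for two reasons.

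First, the step you yourself label ``the crux'' is exactly the step that carries the mathematical content, and you only state the target estimate ($o(L)$ excess) without establishing it. The paper resolves this point not by an asymptotic charging argument but by a pointwise monotonicity claim: since all slave weights are non-negative ($\fsum^+$), a single excision does not \emph{increase} the partial sum of returned values at any matched position, while the number of non-silent transitions up to that position is unchanged; hence every running average of the modified run is dominated by the corresponding running average of the original run, and this domination survives the pointwise limit over infinitely many cuts. Your framing --- ``each cut preserves $v$ because only finitely many entries change'' --- is true per cut but, as you note, does not compose over infinitely many cuts; without the uniform (per-prefix, not per-cut) domination the argument does not close. Note also that even the monotonicity claim needs care: the configuration records only the \emph{set} of occupied slave states, so the excision permutes which prefix of a slave gets glued to which suffix, and the multiplicities at the two endpoints of the loop need not agree; this is precisely the bookkeeping your ``accounting'' would have to carry out, and it is absent.

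Second, your claim that acceptance of the limit run $\rho'$ is ``inherited from the unchanged tail'' is valid for a single cut but fails for the limit: after infinitely many cuts there is no unchanged tail, and the master's \buchi{} states may all have been located inside the excised silent loops. The paper patches this explicitly by re-inserting, with vanishing density (at positions growing like $2^k$), short subwords that visit an accepting state, and checking that this insertion does not raise the limit average. Your proposal needs an analogous repair; the preservation of ``infinitely many non-silent transitions'' and of the slave automata's acceptance (each slave is rerouted only finitely often) is fine, but the master's acceptance condition is not.
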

\begin{proof}
Consider a run of $\nestedA$ on a word $w$ and positions $i, j$ such that 
$i + 2 \cdot \confNum{\nestedA} < j$ and $\nestedA$ takes only silent transitions between $i$ and $j$.

Observe that there are positions $i < i', j' < j$
with the same configuration (defined in Section~\ref{s:proof-limavg}). 
Consider a word $w'$ resulting from removing $w[i', j']$ from $w$.
The partial sum of the weights of the master automaton up to the position $j-(j'-i')$ on $w'$  does not exceed 
the partial sum up to the position $j$ on $w$. 
These partial sums are divided, in the average, by the same number of steps. 
Thus, the value of the word
will not increase even if we can carry out this operation infinitely often.
 One should be careful not to remove all positions with accepting states. 
However, it is not a serious problem as
we can insert sparsely subwords with an accepting state (after $1,2, \dots, 2^k, \dots$ time increase steps).
Such an operation will not increase the limit average of the run.
\end{proof}

\begin{lemma}
Let $\nestedA$ be a deterministic $(\flimavg;\fsum^+)$-automaton that recognizes a non-empty language.
Let $\boundOnMulti  = (|Q_s|+2) \cdot \confNum{\nestedA}$.
The runs that eventually (for every position $s$ greater than some position $s_0$) satisfy the following condition (*) are are optimal for $\nestedA$:
(*)~among slave automata active at position $s$, at most $2 \cdot \boundOnMulti$ will accumulate value greater than $4 \cdot \boundOnMulti$.
\label{lim-avg-technical}
\end{lemma}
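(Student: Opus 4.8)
The plan is to prove optimality by showing that every accepting run of $\nestedA$ can be transformed into an accepting run that eventually satisfies (*) without increasing its $\flimavg$-value; since restricting attention to $(*)$-runs can only raise the infimum, this yields equality of the two infima. By Lemma~\ref{WLOGDeterministicLimAvgSum} I may assume the run is simple, so that at every position all slave automata sharing a state follow identical future transitions and hence terminate simultaneously with equal values. I fix such a run $\masterRun$ and suppose, towards a transformation, that it violates (*) infinitely often: at infinitely many positions $s$ strictly more than $2\boundOnMulti$ of the active slaves will accumulate a value exceeding $4\boundOnMulti$.

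The core is a local surgery at a violating position $s$. Among the more than $2\boundOnMulti$ active high-value slaves I record the position at which each was invoked; all lie before $s$. Since $\nestedA$ has at most $\confNum{\nestedA}$ configurations and $2\boundOnMulti = 2(|Q_s|+2)\confNum{\nestedA}$, the pigeonhole principle yields two invocation positions $t_1 < t_2 \le s$ carrying the same configuration $(q_m, A)$, each the start of a distinct high-value slave still active at $s$. Because the configuration repeats, excising the word segment between $t_1$ and $t_2$ produces a syntactically valid run: the master returns to $q_m$, and since the set of slave states is $A$ at both endpoints, simplicity lets each slave present at $t_1$ adopt the common continuation of the slave occupying its state at $t_2$. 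This excision deletes the invocation at $t_2$, so the number of high-value slaves alive across the cut drops by at least one.

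It then remains to verify that the value does not increase and that the surgery can be iterated to secure (*) for all large positions. Here the thresholds are used exactly as in the preceding pumping argument: a slave accumulating more than $\boundOnMulti = (|Q_s|+2)\confNum{\nestedA}$ must, during its run, repeat an $\nestedA$-configuration while gaining weight, so a high-value slave spans at least one full pumpable period and the excised block carries, per non-silent position, at least the pumping average. Since weights are non-negative and the deleted slaves each return more than $4\boundOnMulti$, the block's mean contribution dominates the running average, and by the same averaging computation as in Lemma~\ref{many-non-silent} its removal cannot raise the $\flimavg$-value. Each excision strictly lowers the count of concurrent high-value slaves at the chosen position and shortens only finitely many slave runs, so it applies to successive violating positions; taking the limit of these transformations (and, as in Lemma~\ref{many-non-silent}, reinserting sparse accepting states to preserve acceptance without affecting the average) yields an accepting run that past some $s_0$ satisfies (*), with value at most that of $\masterRun$.

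The main obstacle is the bookkeeping of multiplicities under the cut: the configuration records only the set $A$ of occupied slave states, not how many copies occupy each, so after excision the surviving slaves inherit the futures of possibly differently-many slaves at $t_2$, and one must check that re-matching identical-state slaves keeps every slave run finite and keeps the accumulated-value sums controlled. The delicate point is to confirm that this re-matching, together with the non-negativity of weights and the $4\boundOnMulti$ margin, genuinely forces the excised segment to be value-dense enough that its removal is non-increasing for $\flimavg$ \emph{uniformly} along the infinitely many surgeries, so that the limiting run exists and eventually satisfies (*) rather than merely satisfying it at the finitely many positions treated so far.
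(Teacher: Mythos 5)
Your excision surgery has two genuine gaps, and they sit exactly at the points you yourself defer to the end. First, the count of high-value slaves at $s$ need not drop after the cut. You match only the configuration at $t_1$ and $t_2$, i.e.\ the \emph{set} of occupied slave states, not the multiplicities. After splicing, every slave alive at $t_1$ in state $q$ adopts the future of the $t_2$-slaves in state $q$; if $\mult_{t_1}(q)>\mult_{t_2}(q)$ for a state $q$ whose occupants survive to $s$ and accumulate more than $4\cdot\boundOnMulti$, the number of high-value slaves at $s$ \emph{increases}, and deleting the single invocation at $t_2$ buys nothing. Second, the claim that removing the block cannot raise the $\flimavg$-value is unsupported and false in general: the value is the limit average of the \emph{returned} values over non-silent master transitions, and the block between two invocation positions of high-value slaves may consist almost entirely of invocations returning $0$; deleting such a below-average block raises the running average. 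The ``pumping average'' of an individual high-value slave's internal accumulation says nothing about the per-invocation average of the master's sequence on that block, which is the quantity that matters. Even granting both points, a single violating position may host unboundedly many excess high-value slaves, so your one-at-a-time reduction needs infinitely many surgeries per position, with no control supplied for the limiting run.

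The paper's proof goes the other way: at a violating position $|u|$ it \emph{inserts} a short block rather than deleting one. It picks the first position $j$ past an accepting state and past the termination positions $j_1,\dots,j_n$ of \emph{all} slaves active at $|u|$ at which both the configuration and the multiplicity capped at $\boundOnMulti$ recur, shrinks $w[1,j-|u|]$ to a word $v$ of length at most $(|Q_s|+2)\cdot\confNum{\nestedA}=\boundOnMulti$ by removing cycles avoiding $j_0,\dots,j_n$, and passes from $uw$ to $uvw$. Inside $v$ every old slave terminates, so at most $|v|\le\boundOnMulti$ slaves are alive at $|uv|$, and they accumulate in $w$ exactly the same future values as their counterparts did at $|u|$. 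The explicit comparison then shows that the savings from eliminating more than $\boundOnMulti$ duplicate high-value slaves (at least $4\cdot\boundOnMulti^2$) strictly exceed the cost of the inserted block (at most $2\cdot\boundOnMulti^2$), so the partial sums decrease while the denominator only grows. This one-shot reset is precisely what makes both the multiplicity control and the value comparison go through; your deletion-based variant would need a substitute for each.
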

\begin{proof}
\newcommand{\valF}{\mathsf{val}}
For a multiplicity $\mult$ we define its restriction to ${\boundOnMulti}$, $\mult\restriction_{\boundOnMulti}$, as
$\mult\restriction_{\boundOnMulti}(q) = \min(\mult(q), {\boundOnMulti})$, for every $q \in \mathrm{dom}(\mult)$.

Consider any word $uw$ such that
at the position $|u|$ there are $2 \cdot \boundOnMulti$ slave automata 
that will accumulate in $w$ (past the position $|u|$ in $uw$) value greater than $4 \cdot \boundOnMulti$.
We show a transformation of $uw$ to $uw'$, such that $uw'$ has the same value 
and at the position $|u|$ no slave automaton will accumulate in $w'$ value greater than $4 \cdot \boundOnMulti$.
Let $j_0>|u|$ be a position in $uw$ with an accepting state and
let $j_1, \ldots, j_n$ be the positions at which each of
slave automata started before the position $|u|$ finishes.
Note that $n \leq |Q_s|$.
As slave automata work on finite words such $j_1, \ldots, j_n$ exist.
Finally, let $j$ be the first position greater than 
$max(j_0, j_1, \ldots, j_n)$ with the configuration $\conf_{uw[1,j-|u|]} =  \conf_u$ and 
multiplicity $\mult_{uw[1,-|u|]}\restriction_{\boundOnMulti} = \mult_{u}\restriction_{\boundOnMulti}$.
There are only finitely many positions $|u|$ for which such $j$ does not exist.
Next, as there is no bound on $j$, we remove from $w[1,j-|u|]$ all cycles that do not overlap with any position from $\{j_0, \ldots, j_n\}$.
The resulting word $v$ has the length bounded by $(|Q_s|+2) \cdot \confNum{\nestedA} = {\boundOnMulti}$ as $n \leq |Q_s|$.
It follows that for every $q \in A$ we have $\mult_{uv}(q) \leq {\boundOnMulti}$ and
$\mult_{uv}(q) \leq \mult_{u}(q)\restriction_{\boundOnMulti}$.
Indeed, since cycle removal does not increase the multiplicity, 
for every $q \in A$ we have $\mult_{uv}(q) \leq \mult_{uw[1, j]}(q)$ 
and $\mult_{uw[1,j]}\restriction_{\boundOnMulti} = \mult_{u}\restriction_{\boundOnMulti}$.
We show that the partial sum of weights of the master automaton at the position
$|uv|$ in $uvw$ is smaller than the partial sum at the position $|u|$ is $uw$, which implies
that the transformation $uw \rightarrow uvw$, removes a position violating our assumption,
and even applied infinitely many times, does not increase the value of the resulting words.

Let $\valF$ be a function with $\dom(\valF) = \dom(\mult_u)$ such that
$\valF(q)$ is the value accumulated in $w$ (past the position $|u|$ i $uw$) by
any slave automaton that is in the state $q$ at the position $|u|$.
Equivalently, that is the value accumulated by the same automaton past the position $|uv|$ in $uvw$.
We call a slave automaton \emph{active} if $\valF(q) \geq 4 \cdot \boundOnMulti$, where
$q$ is the state of that automaton at the position $|u|$ (resp. $|uv|$).
The value of the partial sum  up to the position $|u|$ in $uw$ is the value of all
slave automata started before $|u|$. It  consists
of $(1) + (2)$, where 
\begin{itemize}
\item $(1)$ is the value all inactive slave automata
plus the value of active slave automata accumulated up to the position $|u|$, and
\item $(2)$ is the value accumulated in $w$ by all active automata past
the position $|u|$. 
\end{itemize}

\noindent Observe that 
$(2) =  \sum_{q \in A} \valF(q) \cdot \mult_{u}(q)$, where $A$ 
is the set of states of active slave automata at the position $|u|$.
The value of the partial sum  up to the position $|uv|$ in $uvw$ consists
of $(1)' + (2)' + (3)$, where 
\begin{itemize}
\item $(1)'$ is the value of all inactive slave automata
plus the value of active slave automata accumulated up to the position $|u|$,
\item  $(2)'$ is the value accumulated by active automata in $w$
past position $|uv|$, and
\item  $(3)$ is the value accumulated by all active slave automata on the word $v$, i.e.,
between the positions $|u|$ and $|uv|$ in $uvw$.
\end{itemize}

\noindent Note that $(1)'$ is bounded by $(1)$,
$(3)$ is bounded by  $2 \cdot \boundOnMulti \cdot |v|  \leq 2 \cdot \boundOnMulti^2$, and
 $(2)'  = \sum_{q \in A} \valF(q) \cdot \mult_{uv}(q)$.
We claim that $(2) - (2)' > (3)$, which means that the partial sum at
the position $|uv|$ in $uvw$ is smaller than the partial sum at the position $|u|$ in $uw$.
Indeed,  $\sum_{q \in A} \mult_{u}(q) - \mult_{uv}(q') > 2 \cdot \boundOnMulti - \boundOnMulti =  \boundOnMulti$
and for each $q \in A$, $\valF(q) \geq 4 \cdot \boundOnMulti$, therefore
$(2) - (2)'$ is at least $4 \cdot \boundOnMulti^2$, which is greater than $(3)$. 

It follows that aforementioned transformation, even applied infinitely many times,
will not increase the value of the resulting word. Therefore, 
for every run of $\nestedA$ of value $\lambda$, there is $s_0$ and an
a run of the value not exceeding $\lambda$ such that 
at each position $s > s_0$ at most $2 \cdot \boundOnMulti$ will accumulate value greater than $4\cdot \boundOnMulti$.
\end{proof}

\begin{proof}[Proof of Lemma~\ref{l:bound-on-slave}]
We transform the automaton $\nestedA$ to an equivalent deterministic 
$(\flimavg;\fsum^+)$-automaton $\nestedA_0$ 
for which runs that at each position (1)~at most  $\BoNumberOfSA$ slave automata run,
and (2)~if the master automaton takes a silent transition, each slave automaton 
takes a silent transition, are optimal.
Due to Lemma~\ref{lim-avg-technical} runs for which 
eventually at most $2 \cdot \boundOnMulti$ slave automata accumulate value greater than $4\cdot \boundOnMulti$ are optimal for $\nestedA$.
Moreover, by Lemma~\ref{many-non-silent} runs in which at least one in every  $\confNum{\nestedA}$ transitions is non-silent 
are optimal for $\nestedA$.

We define an automaton $\nestedA_0$ by modifying $\nestedA$ in two ways.
First, we extend the input alphabet to include the marking of the position $s_0$ past which
at most $2 \cdot \boundOnMulti$ slave automata accumulate value greater than $4\cdot \boundOnMulti$ are optimal for $\nestedA_0$.
Prior to that marking, a modified automaton starts only dummy slave automata that immediately terminate.
Past that marking $\nestedA_0$ simulates $\nestedA$.
Second, we modify each slave automaton of $\nestedA$ in such a way that is runs as long as it can accumulate
the value exceeding $4\cdot \boundOnMulti$. More precisely,
the master automaton starts only automata that return values exceeding
$4 \cdot \boundOnMulti$. For other slave automata it ``guesses''
their value from the set $\{0, \ldots, 4\cdot \boundOnMulti\}$ and runs a dummy automaton that takes only a single transition of this weight.
As it is deterministic, we assume that the ``guess'' is encoded in the input word.
Started slave automata run as long as they can accumulate the value 
exceeding $4 \cdot \boundOnMulti$. Once a slave automaton guesses that this is not possible,
it takes a transition of the weight $4 \cdot \boundOnMulti$ and terminates.
Again, that ``guess'' is encoded in the input word, therefore the master automaton is
able to verify that this ``guess'' is correct.

The automaton $\nestedA_0$ simulates the runs 
of $\nestedA$ past the position $s_0$ and each running slave automaton accumulates the value exceeding $4\cdot \boundOnMulti$.
Therefore, there  is a run of $\nestedA_0$ on a word corresponding to $w_{\mathrm{opt}}$
such that at most $2 \cdot \boundOnMulti +1$ slave automata run concurrently.
The automaton $\nestedA_0$ is equivalent to $\nestedA$, as the return values of slave automata
past the position $s_0$ are the same and the $\flimavg$ value function does not depend
on finite prefixes. 
Therefore, runs satisfying conditions (1) and (2) are optimal for $\nestedA_0$.
\end{proof}

The size of the automaton $\nestedA_0$ from Lemma~\ref{l:bound-on-slave} is polynomial
in the size of the master automaton of $\nestedA$.

\smallskip\noindent{\bf Proof of Step~2.} We now prove Step~2 which basically involves
the classic power-set construction.  

\smallskip\noindent{\em Construction of $\nonnestedA$.}
We adapt the classic power-set construction to 
construct a $\silent{\flimavg}$-automaton $\nonnestedA$ that simulates
runs of a given $(\flimavg; \fsum^+)$-automaton $\nestedA$ with bounded multiplicities 
and synchronized silent transitions.
Let $\nestedA$ be a deterministic $(\flimavg; \fsum^+)$-automaton. 
Without loss of generality (Lemma~\ref{l:bound-on-slave}), we assume that 
runs with (1)~multiplicities bounded by $\BoNumberOfSA$ and (2)~synchronized silent transitions are optimal for $\nestedA$.
The automaton $\nonnestedA$, which simulates runs satisfying (1) and (2), 
keeps track of the current configuration and the multiplicity of $\nestedA$. 
That is, the set of states of $\nonnestedA$ is $Q_m \times \BoNumberOfSA^{\Qslv}$, where $Q_m, \Qslv$ are respectively 
the set of states of the master automaton and the union of the sets of states of 
all slave automata of $\nestedA$. The component $\BoNumberOfSA^{\Qslv}$ encodes a part of configuration and multiplicity.
For all states  $(q_1, h_1), (q_2, h_2)$ of $\nonnestedA$ and every letter $a$,
$\nonnestedA$ has a transition $\triple{(q_1, h_1)}{a}{(q_2, h_2)}$
iff the master automaton has a transition $(q_1, a, q_2)$ with the label $i$
and the multiplicities $h_2$ follow from $h_1$ according to transitions of the slave automata
and invocation of $\slaveA_i$.
The weights of transitions of $\nonnestedA$ are defined as follows. 
If the transition $(q_1, a, q_2)$ of the master automaton of $\nestedA$ is silent,
the transition $\triple{(q_1, h_1)}{a}{(q_2, h_2)}$ is silent.
Otherwise, the weight of $\triple{(q_1, h_1)}{a}{(q_2, h_2)}$ 
equals the sum of weights of the corresponding transitions of simulated slave automata
multiplied by their multiplicities $h_1$.
Recall that the simulated runs have silent transitions synchronized,
therefore, the values accumulated by each slave automaton and the corresponding simulated automaton
are equal.

Given a run of $\nonnestedA$ one can construct a run of $\nestedA$ with multiplicities 
bounded by $\BoNumberOfSA$, and vice versa. 
Hence, for a run of $\nonnestedA$ (resp. $\nestedA$) we refer 
to the corresponding run of $\nestedA$ (resp. $\nonnestedA$).
We can solve the emptiness problem for $\nonnestedA$ as 
$\silent{\flimavg}$-automata behave similarly to $\flimavg$-automata:

\begin{restatable}{lemma}{SilLimAvg}
(1)~The emptiness problem for $\silent{\flimavg}$-automata is in $\NLOGSPACE$.
(2)~For every $\silent{\flimavg}$-automaton $\nonnestedA$ that recognizes a non-empty language,
there is a run $\eta$ of $\nonnestedA$ such that 
(a)~the value of $\eta$ is minimal among values of all runs of $\nonnestedA$,
(b)~at least one in every $|\nonnestedA|$ transitions is non-silent, and 
(c)~partial sums converge, i.e.,
\[\lim\sup_{k \rightarrow \infty} \frac{1}{k} \cdot \sum_{i=1}^{k} (\cost(\eta))[i] =
\lim\inf_{k \rightarrow \infty} \frac{1}{k} \cdot \sum_{i=1}^{k} (\cost(\eta))[i] \]
\label{silent-lim-avg}
\end{restatable}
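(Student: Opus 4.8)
The plan is to reduce a $\silent{\flimavg}$-automaton $\nonnestedA$ to an ordinary $\flimavg$-automaton by \emph{contracting silent segments}, and then to read off (1), (a), (b), (c) from the standard theory of minimum mean cycles together with a potential (energy) argument. First I would observe that $\valueL{\nonnestedA}(w)$ depends only on the sequence of non-silent weights of a run, so silent transitions are irrelevant to the value and may be collapsed. Concretely, whenever a run contains $|\nonnestedA|$ consecutive silent transitions, some state repeats inside this block, producing a purely silent cycle; deleting it leaves the sequence of non-silent weights, hence the value, unchanged. Removing all such silent cycles yields a run in which every block of $|\nonnestedA|$ consecutive transitions contains a non-silent one, which is exactly (b). Formally I would build a $\flimavg$-automaton $\nonnestedA'$ whose transitions are the super-transitions consisting of a simple silent path (length $<|\nonnestedA|$) followed by one non-silent transition carrying its weight; runs of $\nonnestedA'$ correspond to runs of $\nonnestedA$ satisfying (b) with identical non-silent weight sequences, so the two automata agree on every word.

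For (1), emptiness of an $\flimavg$-automaton with threshold $\lambda$ reduces to the existence of a reachable cycle of mean weight at most $\lambda$ lying in a strongly connected component that contains an accepting state (so the induced lasso is accepting and, after the translation above, has infinitely many non-silent transitions). Subtracting $\lambda$ from every weight turns this into detecting a reachable non-positive-mean cycle that is co-reachable with an accepting state; since the minimum mean is attained on a simple cycle, the accumulated adjusted weight along a guessed simple cycle is polynomially bounded and fits in logarithmic space. As reachability, mutual reachability (same component), and guessing-and-summing a simple cycle are all $\NLOGSPACE$ tasks, emptiness is in $\NLOGSPACE$, giving (1).

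For (2)(a) and (2)(c) I would work in $\nonnestedA'$ and invoke the classical structure of optimal mean-payoff runs. Let $\lambda^*$ be the infimum of the values over accepting runs. By the minimum-mean-cycle characterisation the infimum is attained inside a reachable component containing an accepting state: either a minimum-mean cycle $C$ there already contains an accepting state, in which case looping on $C$ is an optimal accepting lasso, or one inserts excursions to an accepting state at times $1,2,4,\ldots$ of vanishing density, which leaves the value equal to $\mathrm{mean}(C)=\lambda^*$; this gives (a). Convergence of the partial averages (c) then follows from the potential bound for minimum mean cycles: there is a function $\phi$ on states with $\cost(e)-\lambda^*+\phi(\mathrm{src}(e))-\phi(\mathrm{tgt}(e))\ge 0$ for every edge $e$, so every prefix of length $k$ has weight at least $\lambda^* k - 2\max_q|\phi(q)|$. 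Hence the $\fliminf$ of the partial averages is at least $\lambda^*$, while the $\flimsup$ equals the value $\lambda^*$; the two limits therefore coincide with $\lambda^*$, which is (c). Pulling this run back along the contraction preserves both its value and its acceptance and, by construction of the super-transitions, yields (b).

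The main obstacle is the bookkeeping of the \buchi{} acceptance condition through the silent contraction: removing silent cycles to enforce (b) must not destroy the infinitely many accepting visits that witness acceptance. I would handle this by marking a super-transition as accepting whenever its length-$<|\nonnestedA|$ silent path can be routed through an accepting state, using a transition-based \buchi{} condition that is afterwards converted to the standard one. The second delicate point is that the exact infimum $\lambda^*$ may only be realised by a non-periodic run with sparse excursions when no minimum-mean cycle passes through an accepting state; checking that such a run still satisfies (a) and (c) is precisely where the energy bound is needed, since it forbids the partial averages from dipping below $\lambda^*$ even though the run is not ultimately periodic.
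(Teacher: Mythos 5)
Your proposal is correct, and its skeleton coincides with the paper's: both collapse silent transitions into an ordinary $\flimavg$-automaton (your super-transition automaton $\nonnestedA'$ plays exactly the role of the paper's $\aut_{\mathrm{fix}}$, which shortcuts silent paths around each non-silent transition), and both then settle (1) by guessing, in $\NLOGSPACE$, a reachable lasso whose cycle has mean weight at most $\lambda$. Where you diverge is in how part (2) is finished. The paper takes the minimum $\lambda_0$ over the finitely many lasso values, argues no run can do better, and declares the optimal lasso to be the witness $\eta$; properties (a)--(c) then come for free because an ultimately periodic run trivially has convergent partial averages --- but the paper is silent about how that lasso meets the \buchi{} condition. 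You instead allow the optimal run to be non-periodic, looping on a minimum-mean cycle with accepting-state excursions of vanishing density, and therefore need the extra classical ingredient of a potential $\phi$ with non-negative reduced edge weights to bound every partial sum below by $\lambda^* k - 2\max_q|\phi(q)|$ and conclude (c). Your route costs one more lemma but buys an explicit treatment of acceptance that the paper glosses over; and in the case where some minimum-mean cycle already contains an accepting state, your construction degenerates to the paper's lasso, so your argument subsumes theirs. One small point to make explicit: the excursions and the potential must live inside a single strongly connected component containing both the optimal cycle and an accepting state, which is exactly the co-reachability condition you already impose in part (1).
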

\begin{proof}
Let $\aut = (\Sigma, Q,q_0, \delta, F,\cost)$ be a $\silent{\flimavg}$-automaton. 
We show that if there is a run of $\aut$ of the value not exceeding $\lambda$
there is a lasso run such that the average weight in its cycle does not exceed $\lambda$, i.e.
there is a run $\pi = \pi[0] \pi[1] \ldots \pi[n]$ of length $n \leq |\lambda|$ on a word $w = w[1]\ldots w[n]$ 
such that $\pi[0] = q_0$, for some $i < n$ we have $\pi[i] = \pi[n]$ (such a run $\pi$ is called a lasso),
$\pi[0], \ldots, \pi[n-1]$ are distinct and $\frac{1}{n-i} (\cost(\pi[i], w[i+1], \pi[{i+1}]) + \ldots + 
\cost(\pi[{n-1}], w[{n}], \pi[n])) \leq \lambda$.
The existence of such a lasso can be decided in $\NLOGSPACE$, which shows (1).
As there are only finite number of values of lassos, there is $\lambda_0$ which is the smallest. 
It follows that there is no run of $\aut$ of the value $\lambda'$ smaller than $\lambda_0$;
otherwise there would be a lasso of the value not exceeding $\lambda'$. 
Thus, the lasso of the value $\lambda_0$ has the minimal value among values of all runs, and
the sequence of partial averages converge. 

First, we define a $\flimavg$-automaton $\aut_{\mathrm{fix}}$ such that 
for every accepting run $\eta$ of $\aut$, the run $\eta'$, resulting from $\eta$ by
removing silent transitions, is an accepting run of $\aut_{\mathrm{fix}}$, and vice
versa, every accepting run of $\aut_{\mathrm{fix}}$ can be extended to a run of $\aut$
by inserting silent transitions.
The set of states of $\aut_{\mathrm{fix}}$ is the same as $\aut$ and
the transition relation of $\aut_{\mathrm{fix}}$ consists of 
$(q_1, a, q_2)$ such that 
there is $(q_1', a, q_2') \in \delta$  and
$q_1'$ (resp. $q_2$) is reachable from $q_1$ (resp. $q_2'$) by a path consisting of only silent transitions.
The weight of such a transition is the infimum over the weights of transitions $(q_1', a, q_2') \in \delta$
that generate $(q_1, a, q_2)$.
It follows from the construction that $\aut_{\mathrm{fix}}$ has the stipulated properties and
their corresponding runs have the same value. 
Therefore, $\aut_{\mathrm{fix}}$ has a lasso $l_0$ such that the average weight in its cycle does not exceed $\lambda$.
The lasso $l_0$ can be extended to a lasso $l_1$ in $\aut$, which can have states that occur 
multiple times. We can remove duplicate states in the following way, if the average weight between $i$ and $j$
with $l_1[i] = l_1[j]$ is above $\lambda$, we remove all the states between $i+1$ and $j$.
Otherwise, if the average weight does not exceed $\lambda$, we can remove all the states following $\pi[j]$.
Hence, we have shown that if $\aut$ has a run of the value $\lambda$ it has 
a lasso such that the average weight in its cycle does not exceed $\lambda$.
\end{proof}

\smallskip\noindent{\bf Proof of Step~3.} 
We now prove Step~3, i.e., we show that the emptiness problems for $\nestedA$
and $\nonnestedA$ coincide. The main problem is that, even though a run of $\nestedA$ and the corresponding 
run of $\nonnestedA$ represent the same sequences of weights, they can have different values. 
Still, we show that infima over values of all runs of $\nestedA$ and $\nonnestedA$
are equal.

\smallskip\noindent{\em Accumulation of weights.}
The automata $\nestedA$ and $\nonnestedA$ compute their values differently. 
In $\nestedA$ a slave automaton started at a position $k$ computes its value
and returns it as a weight of the transition at position $k$, whereas 
in $\nonnestedA$,  simulated slave automata run concurrently and add their weights to
the partial sum at each step.
To visualize this, consider a matrix such that the value at $(i,j)$ is
the weight of the transition of $j$-th slave automaton at position $i$ in the input word.
Then, the value of $\nestedA$ is the limit average of the sums of rows,
whereas the value of $\nonnestedA$ is the limit average of the sums 
of columns.
\medskip

\begin{figure}[h]
\begin{center}
\noindent\begin{tikzpicture}
\tikzAccummulation
\end{tikzpicture}
\caption{The matrix depicting the difference in aggregation of weights by 
$\nestedA$ and $\nonnestedA$.}
\label{f:aggregation}
\end{center}
\end{figure}


In consequence, a run of $\nestedA$ and
the corresponding run of $\nonnestedA$ can have different values.
\begin{example}
Consider the automaton $\nestedAMod$ that has been discussed above as a modification of
an automaton from Example~\ref{e:OptimalUnbounded2} and a word 
$w = bab\ldots ba^{2^k-1}b\ldots$. 
At position $2^k+1$, the partial sum $\sum_{i=1}^{k} (\cost(\pi_i))$
is equal to the sum of (1)~the values of blocks $bab, \ldots, ba^{2^{k-1}-1}b$
and (2)~the value of $\slaveA_b$ started at the beginning of $ba^{2^{k}-1}b$ equal $2^{k}-1$. 
The value of a block $ba^{2^{i}-1}b$ equals $2^{i}-1$ (accumulated by $\slaveA_b$) plus $2^{i-1}-1$
(accumulated by $\slaveA_{a1}$). 
Thus, the value (1)~equals $2^{k}+2^{k-1}-2\cdot k$, i.e., 
 $2^{k}+2^{k-1} - O(k)$.
Therefore, the partial average at position $2^k+1$ is $\frac{5}{2}-O(\frac{k}{2^k})$.
Thus, the value of $w$ assigned by $\nestedAMod$ is $\frac{5}{2}$.
In contrast, the automaton $\nonnestedAMod$ simulating $\nestedAMod$ simply
takes a transition of weight $0$ on letter $b$, $1$ on even occurrences of letter $a$
and $2$ on odd occurrences of letter $a$. 
The value of $w$ assigned by $\nonnestedAMod$ is $\frac{3}{2}$.
\end{example}

Nevertheless, we show that the emptiness problems for $\nonnestedA$ and 
$\nestedA$ coincide. In the following lemma we will use a notion of \emph{reset words}
used to terminate long runs of slave automata. 

\smallskip\noindent{\em Reset words.} 
Given a word $w$, a finite word $u$ is a \emph{reset word} for 
$\nestedA$ at position $i$ if in $w[1,i] u w[i+1, \infty]$ we have (1)~the configuration of $\nestedA$
at positions $i$ and $i + |u|$ is the same (recall that $\nestedA$ is deterministic), 
and  (2)~all slave automata active at position $i$ terminate
before position $ i + |u|$. 
We say that the word $w[1,i] u w[i+1, \infty]$ is the result of injecting
a reset word at position $i$.
Observe that (1) implies that $\nestedA$ accepts $w$ iff it accepts $w[1,i] u w[i+1, \infty]$.
As in an accepting run of $\nestedA$ past some finite position all configurations 
occur infinitely often and all slave automata terminate after finite number of steps, 
therefore at almost all positions $p$, there exists a reset word that can be injected at $p$.
Basically, that word occurs already at position $p$.
In addition, by simple (un)pumping argument we can show that for almost all positions 
there exist reset words with length bounded by $|\Qslv| \cdot \confNum{\nestedA}$.

\begin{lemma}
The emptiness problems for $\nestedA$ and $\nonnestedA$ coincide.
\label{l:emptinessCoincide}
\end{lemma}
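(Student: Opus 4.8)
The plan is to prove the stronger statement that the infima of the values over all words coincide, i.e. $\inf_{w} \valueL{\nestedA}(w) = \inf_{w} \valueL{\nonnestedA}(w)$. Since by Lemma~\ref{silent-lim-avg} the infimum for $\nonnestedA$ is attained by a lasso run, and the hard direction below exhibits a word on which $\nestedA$ attains the same value, the two emptiness problems (which ask whether some word has value $\le \lambda$) then coincide for every threshold $\lambda$. Throughout I use the correspondence between runs of $\nonnestedA$ and runs of $\nestedA$ with multiplicities bounded by $\BoNumberOfSA$ (Lemma~\ref{l:bound-on-slave}), together with the matrix picture of Figure~\ref{f:aggregation}: for a fixed run let the entry $(k,p)$ be the weight contributed at position $p$ by the slave automaton started at the $k$-th non-silent step. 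Then $\valueL{\nestedA}(w)$ is the limit average of the \emph{row} sums (each row is one slave's returned value, placed at its start position), while $\valueL{\nonnestedA}(w)$ is the limit average of the \emph{column} sums (the concurrent weight accumulated at each position). For a prefix of length $N$ the two partial sums differ exactly by the \emph{overhang} $O_N$, the total weight that slaves started at one of the first $N$ non-silent positions still accumulate after position $N$.

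For the easy inequality I would observe that, because the slave automata are $\fsum^+$-automata with non-negative weights, the overhang satisfies $O_N \ge 0$, so the row partial sum never undershoots the column partial sum. As both value functions divide by the same number of non-silent steps (silent transitions are synchronized, hence carry weight $0$ and are removed identically), this gives $\valueL{\nonnestedA}(w) \le \valueL{\nestedA}(w)$ for every word $w$ and its corresponding runs. Taking the infimum over all words, and using that every run of $\nonnestedA$ arises from a bounded-multiplicity run of $\nestedA$ which is optimal for $\nestedA$ (Lemma~\ref{l:bound-on-slave}), yields $\inf_w \valueL{\nonnestedA}(w) \le \inf_w \valueL{\nestedA}(w)$.

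The hard inequality $\inf_w \valueL{\nestedA}(w) \le \inf_w \valueL{\nonnestedA}(w)$ is the crux. Here I would start from an optimal run of $\nonnestedA$: by Lemma~\ref{silent-lim-avg} it can be taken to be a lasso whose value equals $\inf_w \valueL{\nonnestedA}(w)$ and whose partial averages converge, realized on an eventually periodic word $w = x y^{\omega}$. The corresponding run of $\nestedA$ on $w$ is uniquely determined, since $\nestedA$ is deterministic. The key point is that on the periodic tail every slave started at a given offset modulo $|y|$ reads the same periodic suffix, so, being part of an accepting run it must terminate, it terminates after a bounded number of steps and returns a bounded value; since at most $\BoNumberOfSA$ slaves run concurrently, the overhang $O_N$ is bounded by a constant $M$ independent of $N$. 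Consequently $\tfrac{1}{N} O_N \to 0$, the row and column limit averages agree, and $\valueL{\nestedA}(w) = \valueL{\nonnestedA}(w) = \inf_w \valueL{\nonnestedA}(w)$, giving the desired inequality. To make the termination and boundedness of slaves rigorous I would use \emph{reset words}: at almost every position a reset word of length at most $|\Qslv| \cdot \confNum{\nestedA}$ is available, and injecting reset words sparsely terminates any slave whose run would otherwise persist, keeping the overhang bounded while contributing only density-zero extra weight (as in Lemma~\ref{many-non-silent}), so that $\nonnestedA$'s value is not increased.

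The main obstacle is precisely this hard direction: reconciling the two genuinely different aggregation schemes of Figure~\ref{f:aggregation}, which, as the example following that figure shows, can assign different values ($\tfrac{5}{2}$ versus $\tfrac{3}{2}$) to the same word. The whole difficulty is concentrated in bounding the overhang, i.e. in guaranteeing that the slaves straddling the averaging window contribute a vanishing fraction of the average; the example makes clear that this fails on cleverly chosen non-periodic words, so it is essential to reduce to the optimal periodic (lasso) run and to employ reset words to terminate long-running slaves without paying more than a negligible cost. Once the overhang is controlled, the remaining bookkeeping (alignment of non-silent steps, the harmless finite prefix $x$, and attainment of the infimum) is routine.
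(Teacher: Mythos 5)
Your proposal is correct and shares the paper's skeleton --- the same easy inequality from non-negativity of the overhang, and the same starting point for the hard inequality, namely the minimal-value run of $\nonnestedA$ with converging partial averages from Lemma~\ref{silence-simple-equivalence}\hspace{-\the\fontdimen2\font} --- sorry, from Lemma~\ref{silent-lim-avg} --- but it controls the overhang by a genuinely different mechanism. The paper does not exploit the lasso shape of the optimal run: it modifies the \emph{word}, injecting reset words at positions $a_{i+1}=a_i+\log a_i$, so that every slave terminates within $\log k$ steps of its start; the overhang at position $k$ is then $O(\log k)$, which vanishes against the $\Omega(k)$ non-silent transitions, and a separate argument (using minimality of $\eta$'s value plus the $o(k)$ total weight added by the insertions) is needed to check that the surgery does not change the value of the $\nonnestedA$-run. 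You instead keep the word fixed and use determinism plus periodicity of the lasso tail: every slave started at a given offset modulo the cycle length reads the same suffix, hence (since the corresponding run of $\nestedA$ is accepting and slave languages are prefix-free) terminates after a time depending only on the offset, so the overhang is bounded by a constant. This is cleaner where it applies, since it avoids justifying that the word surgery is value-preserving; what it buys less of is robustness --- it leans on the lasso structure, which is established in the \emph{proof} of Lemma~\ref{silent-lim-avg} but not in its statement, and on the (elided in both your argument and the paper's construction) fact that an accepting run of $\nonnestedA$ corresponds to an accepting run of $\nestedA$ in which all slaves do terminate. One genuine imprecision in your fallback: injecting reset words only at density-zero positions does \emph{not} keep the overhang bounded by a constant --- the gaps between injections grow, so a slave may run for the length of a gap and the overhang is only $o(k)$ (as in the paper, $O(\log k)$); the conclusion survives because $o(k)/\Omega(k)\to 0$, but the phrase ``keeping the overhang bounded'' should be weakened accordingly, and if you do take that route you must also supply the paper's argument that the insertions leave the value of the optimal $\nonnestedA$-run unchanged.
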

\begin{proof}
Observe that for every run $(\Pi, \pi_1, \pi_2, \ldots)$ of $\nestedA$
with at most $\BoNumberOfSA$ concurrently running slave automata,
and the corresponding simulation run $\eta$ of $\nonnestedA$ at every position $k$
we have
$\sum_{i=1}^{k} (\cost(\eta))[i] \leq \sum_{i=1}^{k} (\cost(\pi_i))$.
Therefore, the infimum over runs of $\nonnestedA$ 
does not exceed the infimum over runs of $\nestedA$.

Conversely, due to Lemma~\ref{silent-lim-avg}, 
$\nonnestedA$ has a run $\eta$ of the minimal value and whose sequence of partial averages converges.
First, we show that there is a run $\eta'$ of the same value as $\eta$ such that 
in the corresponding run of $\nestedA$, for every $k$ greater than some constant, 
each slave automaton started before position $k$, terminates before position $k + \log(k)$.
Then, we show that $\eta'$ and the run of $\nestedA$ corresponding to $\eta'$ have the same value. 

Consider a sequence $\{a_i\}_{i \geq 0}$ where $a_0$ is the least position 
past which every configuration occurs infinitely often and $a_{i+1} = a_i + \log a_i$. 
Observe that 
$|\{ a_i : a_i < k \}| = o(k)$, i.e., $\lim_{k\rightarrow}\infty \frac{|\{ a_i : a_i < k \}|}{k} = 0$.
Let $\eta'$ be a run obtained from $\eta$ by injecting reset words on 
positions $a_i$ in $\eta$.
The value of $\eta'$ is the same as the value of $\eta$. 
 Indeed,  for almost all $k$ we have
\[  \sum_{i=1}^{k} (\cost(\eta'))[i]  \leq 
\sum_{i=1}^{k} (\cost(\eta))[i] + o(k)
\]
It follows from the fact that 
there are $|\{ a_i : a_i < k \}| = o(k)$ reset words up to position $k$
and the total increase of the partial sum due to each of them is bounded by the product of (1),(2),(3), where
(1) is the maximal length of a reset word, (2) is the number of currently running slave automata,
and (3) is the maximal weight a slave automaton can take. Note that (1),(2),(3)
are bounded by a constant, hence up to position $k$, the total increase of the partial sum due to injected reset words 
is $o(k)$. 
Due to Lemma~\ref{silent-lim-avg}, there are at least $\frac{k}{|\nonnestedA|}$
non-silent transitions up to position $k$. Hence the values of $\eta'$ do not exceed the 
value of $\eta$, which is minimal. Hence, the values of $\eta, \eta'$ are equal. 

Now, we consider a run of $(\Pi, \pi_1, \pi_2, \ldots)$ of $\nestedA$ 
that corresponds to $\eta'$. Observe that each slave
automaton started at position $k$, terminates after at most $\log k$ steps.
Therefore, the partial sum of weights of $(\Pi, \pi_1, \pi_2, \ldots)$
up to $k$ is bounded by the partial sum of weights in $\eta'$
up to $k + \log k$, i.e.,
for almost all $k$ we have
\[ \sum_{i=1}^{k} (\cost(\eta'))[i] \leq \sum_{i=1}^{k} (\cost(\pi_i)) \leq \sum_{i=1}^{k + \log k} (\cost(\eta'))[i] \]
However, each $(\cost(\eta'))[i]$ is bounded by a constant.
Therefore, $\sum_{i=k+1}^{k + \log k} (\cost(\eta'))[i] = O(\log k)$.
Again, as the number of non-silent transitions up to position $k$ is $\Omega(k)$,
we have $\lim_{k \rightarrow \infty} \frac{O(\log k)}{\Omega(k)} = 0$.
Since $\nestedA$ and $\nonnestedA$ take non-silent transitions at the same positions, 
the limit averages of $(\cost(\eta'))[1], (\cost(\eta'))[2], \dots$ 
and $\cost(\pi_1), \cost(\pi_2), \dots$ are equal. 
Thus, the value of the infimum over runs of $\nonnestedA$ 
does not exceed the value of the infimum over runs of $\nestedA$.
This implies that the emptiness problems for $\nonnestedA$ and $\nestedA$
coincide. 
\end{proof}

Finally, we prove the main statement.

\begin{lemma}
The emptiness problem for  $(\flimavg; \fsum^+)$-automata is in $\EXPSPACE$.
\end{lemma}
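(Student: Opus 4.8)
The plan is to assemble the machinery of Steps~1--3 and thereby reduce the emptiness question for $(\flimavg;\fsum^+)$-automata to the emptiness question for $\silent{\flimavg}$-automata, whose complexity is already pinned down by Lemma~\ref{silent-lim-avg}. First I would invoke Lemma~\ref{WLOGDeterministicLimAvgSum} to assume without loss of generality that the input automaton $\nestedA$ is deterministic; this only extends the alphabet and preserves $\confNum{\nestedA}$, so it is free in the complexity bookkeeping. Then I would apply Lemma~\ref{l:bound-on-slave} to replace $\nestedA$ by an equivalent deterministic automaton $\nestedA_0$ for which runs with multiplicities bounded by $\BoNumberOfSA$ (quadratic in $\confNum{\nestedA}$) and with synchronized silent transitions are optimal, noting that $|\nestedA_0|$ is exponential in $|\nestedA|$ but polynomial in the master automaton.

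Next I would take the $\silent{\flimavg}$-automaton $\nonnestedA$ built in Step~2, whose state records the master state together with a bounded multiplicity vector $h : \Qslv \to \{0,\dots,\BoNumberOfSA\}$, and which simulates exactly the optimal runs of $\nestedA_0$. By Lemma~\ref{l:emptinessCoincide} the emptiness problems for $\nestedA$ (equivalently $\nestedA_0$) and for $\nonnestedA$ coincide, even though individual runs accumulate weight row-wise versus column-wise and thus may have different values; this is the delicate part of the argument, but it has already been discharged. It then suffices, by Lemma~\ref{silent-lim-avg}(1), to run the $\NLOGSPACE$ lasso-search for emptiness of $\silent{\flimavg}$-automata on $\nonnestedA$.

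The remaining work is a size and space accounting, which I expect to be the main (if routine) obstacle. The state space of $\nonnestedA$ is $Q_m \times \{0,\dots,\BoNumberOfSA\}^{\Qslv}$; since $\BoNumberOfSA$ is exponential in $|\nestedA|$ and $|\Qslv|$ (for $\nestedA_0$) is likewise exponential in $|\nestedA|$, the number of states of $\nonnestedA$ is doubly exponential in $|\nestedA|$. The key point I would stress is that a single state of $\nonnestedA$, being a pair (master state, multiplicity vector), can be written down in space exponential in $|\nestedA|$, and the existence and weight of a transition between two such states can be checked in the same space by consulting the transition relations of $\masterA$ and of the slave automata of $\nestedA_0$. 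Hence $\nonnestedA$ admits an implicit representation, and the nondeterministic lasso-search of Lemma~\ref{silent-lim-avg} uses space logarithmic in the (doubly exponential) number of states, that is, exponential in $|\nestedA|$. This places the emptiness problem for $(\flimavg;\fsum^+)$-automata in $\EXPSPACE$, matching the $\PSPACE$ lower bound from Proposition~\ref{p:pspacehard} and completing the proof of the emptiness case of Theorem~\ref{thm:limavgsum}.
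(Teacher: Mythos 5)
Your proposal is correct and follows essentially the same route as the paper's proof: determinize via Lemma~\ref{WLOGDeterministicLimAvgSum}, bound multiplicities and synchronize silent transitions via Lemma~\ref{l:bound-on-slave}, pass to the simulating $\silent{\flimavg}$-automaton whose emptiness coincides by Lemma~\ref{l:emptinessCoincide}, and run the $\NLOGSPACE$ procedure of Lemma~\ref{silent-lim-avg} on the doubly-exponential-size automaton to land in $\EXPSPACE$. Your explicit remark that a state of $\nonnestedA$ is writable in exponential space and that transitions are checkable implicitly is a slightly more careful rendering of the space accounting that the paper leaves implicit, but it is the same argument.
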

\begin{proof}
First, we transform a given $(\flimavg; \fsum^+)$-automaton $\nestedA$ 
to a deterministic $(\flimavg; \fsum^+)$-automaton $\nestedA^d$ (Lemma~\ref{WLOGDeterministicLimAvgSum}). 
The transformation does not change the cardinality of the set of configurations, i.e.,
$\conf(\nestedA) = \conf(\nestedA^d)$.
Next, we transform $\nestedA^d$ to an equivalent automaton $\nestedA'$ for
which runs with (1)~multiplicities bounded by $\BoNumberOfSA$ 
and (2)~synchronized silent transitions are optimal (Lemma~\ref{l:bound-on-slave}).
Finally, we define a $\silent{\flimavg}$-automaton $\nonnestedA$ that simulates 
$\nestedA'$ on runs that satisfy (1) and (2).
Since the emptiness problems for 
$\nestedA'$ and $\nonnestedA$ coincide (Lemma~\ref{l:emptinessCoincide}), we solve the former.

Observe that the size of $\nonnestedA$ is doubly exponential in the size of $\nestedA$.
Indeed, let $\Qslv'$ be the disjoint union of the sets of states of the slave automata of $\nestedA'$.
The size of $\Qslv'$ is exponential in the size of $\nestedA$, therefore $|\BoNumberOfSA^{\Qslv'}|$
is doubly-exponential.
However, the emptiness problem for $\nonnestedA$ is decidable in $\NLOGSPACE$ (Lemma~\ref{silent-lim-avg}).
Hence, the emptiness problem for $(\flimavg; \fsum^+)$-automata is in $\EXPSPACE$.
\end{proof}

\begin{remark}
\label{rem:secondParRemark}
The transformations from Lemma~\ref{WLOGDeterministicLimAvgSum} and Lemma~\ref{l:bound-on-slave}
are polynomial in the size of the master automaton of a given nested automaton $\nestedA$. Then, the number of configurations
and, in consequence, $\BoNumberOfSA$ are polynomial in the size of the master automaton of $\nestedA$. Finally,
the set of states of the $\silent{\flimavg}$-automaton $\nonnestedA$ constructed in Step~2 is
$Q_m \times \BoNumberOfSA^{\Qslv}$. Thus, it is polynomial in the size of the master automaton of $\nestedA$ as well as 
its weights. Therefore, the emptiness problem of $(\flimavg; \fsum^+)$-automata is in $\PTIME$ if the total size of slave automata is bounded by a constant.
\end{remark}

\section{Related work}\label{sec:related}
In this section we discuss various related work.

\smallskip\noindent{\em Weighted counterpart of Boolean nested automata.}
Weighted nested automata have been considered 
in~\cite{DBLP:conf/icalp/BolligGMZ10} in the context of finite words, 
where the weights are given over semirings.
 It is further required
that the semirings of all master and slave automata coincide, while in
our case, the value functions may differ.
Since the semirings of master and slave automata must coincide for~\cite{DBLP:conf/icalp/BolligGMZ10}, 
it can be interpreted as defining weighted counterpart of Boolean nested automata over finite words.
Adding nesting structure to words and trees have been extensively studied for 
non-weighted automata in~\cite{ACM06,AM09} and also applied to 
software model checking~\cite{ACM11}.
The work of~\cite{DBLP:conf/icalp/BolligGMZ10} defines a weighted
counterpart of nesting of finite words, whereas we consider nesting of various weighted 
automata to express properties of infinite behaviors of systems.
Properties such as long-run average response time cannot be expressed in the framework of~\cite{DBLP:conf/icalp/BolligGMZ10}.

\smallskip\noindent{\em Weighted MSO Logics.} 
Quantitative properties can be expressed in weighted logics, for example, 
\emph{Weighted MSO Logics}~\cite{DBLP:journals/iandc/DrosteM12} and 
weighted temporal logic~\cite{DBLP:conf/lics/BokerCHK11}.  For the
basic decision problems for weighted logics over infinite words, the
reduction is to weighted automata.  For a given set of value functions
that assigns values to infinite runs (such as $\FinVal$ and
$\InfVal$), weighted MSO logics are as expressive as weighted automata
with the same class of value
functions~\cite{DBLP:journals/iandc/DrosteM12}.  It follows that with
$\FinVal$ and $\InfVal$ as the value functions, weighted MSO logic
cannot express the average response property.  One can express average
response time in weighted MSO logics by adding average response time itself
as a primitive value function in the $\omega$-valuation monoid.  The
decidability of weighted MSO logics with such a primitive can be
established by a reduction to weighted automata that are able to
express average response time, such as nested weighted automata.
However, the reduction is non-elementary, as the basic decision
problems for even non-weighted MSO logic have non-elementary
complexity, whereas our complexity results range from
$\PSPACE$-complete to $\EXPSPACE$.

\smallskip\noindent{\em Register automata.}
Another related model for specifying quantitative properties
are \emph{register automata}~\cite{DBLP:conf/lics/AlurDDRY13}, which
are parametrized by cost functions.  The main differences
between~\cite{DBLP:conf/lics/AlurDDRY13} and nested weighted automata
are as follows: (i)~register automata are over finite words, whereas
we consider infinite words, and (ii)~we consider nested control of
automata, whereas register automaton are non-nested.  As a result,
both in terms of expressiveness and decidability results nested weighted
automata are very different from register automata.  For example, the
emptiness of register automata with max and sum value functions is
decidable, while we show emptiness to be undecidable for deterministic
nested weighted automata with these value functions.

\smallskip\noindent{\em Other related models.} 
Other possible quantitative models are visibly pushdown automata (VPA) 
with limit-average functions, or quantitative models consider in~\cite{ICALP08}.
The framework of~\cite{ICALP08} neither captures the average response time 
property nor presents any decidability results. 
For VPA with limit-average functions it follows from~\cite{CV12} that even 
perfect-information VPA games (that correspond to simulation) with limit-average
objectives are undecidable (the undecidability proof of~\cite{CV12} is for general
pushdown games, but the proof itself also works for VPA games).
Thus though there exist many other quantitative models, there exists no 
framework that can express the average response time property and have elementary-time 
complexity algorithms for the basic decision problems.

\section{Conclusion and Future Work}

Motivated by important system properties such as long-run average
response time, we introduced the framework of nested weighted automata
as a new, expressive, and convenient formalism for specifying
quantitative properties of systems.  We answered the basic decision
questions for nested weighted automata.  There are several directions
for future work.  First, we have an open conjecture
(Conjecture~\ref{conj1}) regarding the decidability of the emptiness
of $(\flimavg;\fsum)$-automata.  Second, another
interesting direction would be to establish optimal complexity results
for the decision problems.  Third, there are several possible
extensions of the nested weighted automaton model, such as (i)~two-way
master and slave automata, (ii)~multiple levels of nesting, and
(iii)~instead of infimum across paths consider average measures across
paths (i.e., probability distributions over runs and expected value of
the runs, as in~\cite{DBLP:conf/concur/ChatterjeeDH09}).

\bibliography{papers-ml}

\end{document}